\newtheorem{theorem}{Theorem}[section]
\newtheorem*{theorem*}{Theorem}
\newtheorem{lemma}[theorem]{Lemma}
\newtheorem{proposition}[theorem]{Proposition}
\newtheorem*{remark}{Remark}
\title[Direct approach to extended KEDF]{A direct approach to computing the non-interacting kinetic energy functional}
\author{Dharamveer Kumar, Amuthan A. Ramabathiran}
\address[Dharamveer Kumar]{Department of Aerospace Engineering, Indian Institute of Technology Bombay, Mumbai 400042, India.}
\email{dharamveer\_kumar@iitb.ac.in}
\address[Amuthan A. Ramabathiran]{Aerospace Engineering, California Polytechnic State University, San Luis Obispo, CA 93407.}
\email{aramabat@calpoly.edu}
\begin{document}

\begin{abstract}
The non-interacting kinetic energy functional, $T_{KS}(\rho)$, plays a fundamental role in Density Functional Theory (DFT), but its explicit form remains unknown for arbitrary $N$-representable densities. Although it can, in principle, be evaluated by solving a constrained optimization problem, the associated adjoint problem is not always well-posed; moreover, even when it is, the corresponding adjoint operator may be singular. To the best of our knowledge, none of the existing approaches in the literature precisely determines the non-interacting kinetic energy functional for a given $N$-representable electron density, $\rho$. In this work, we present a variational framework for computing an extension of $T_{KS}(\rho)$ using an exact trigonometric reparametrization of the density that eliminates the need for an adjoint equation. We present a proof-of-concept numerical validation of the variational principle for the special case of one-dimensional Kohn-Sham systems. Our method, however, is general and provides a systematic foundation for computing $T_{KS}(\rho)$ in higher dimensions too, paving the way for improved kinetic energy functionals in DFT.
\end{abstract}

\maketitle

\section{Introduction}

Density Functional Theory (DFT) provides a framework for describing many-electron quantum systems in terms of their electron density, $\rho$. In practice, most DFT calculations rely on the Kohn-Sham (KS) \cite{KS65} formulation, which we refer to as KSDFT in the sequel. In KSDFT, fictitious orbitals are introduced to evaluate the ground-state electron density and energy, making the method computationally expensive. Orbital-Free Density Functional Theory (OFDFT), on the other hand, circumvents the need to compute fictitious orbitals by instead modeling the universal functional directly as a function of the electron density. This reduction in computational complexity makes OFDFT particularly appealing for modeling large-scale systems with thousands of atoms. A significant challenge in OFDFT is the explicit characterization of a major portion of the kinetic energy, called the non-interacting kinetic energy functional, denoted commonly as $T_{KS}(\rho)$, which can be accurately evaluated within KSDFT calculations. This article presents a rigorous analysis of an exact parametric approach to computing an extension of $T_{KS}(\rho)$ for a very large class of densities that include, in particular, KSDFT ground-state densities which corresponds to a given non-interacting potential.

Thomas-Fermi \cite{Thomas27} \cite{Fermi27} type models provide simple approximations of $T_{KS}(\rho)$, but they lack the necessary accuracy for many applications. More advanced formulations, such as the Wang-Govind-Carter (WGC) functionals~\cite{wang1998} rely on empirical parameterizations that can fail to capture key quantum mechanical features, limiting their transferability and accuracy, especially for inhomogeneous densities. These functionals introduce ad hoc gradient corrections that work well for bulk materials but struggle with systems exhibiting strong density variations, such as low-dimensional or defective systems. Despite these advancements, developing a universally accurate and computationally efficient kinetic energy functional remains an open problem in OFDFT, limiting its widespread adoption compared to KSDFT. 

 In recent years, there have been several attempts to leverage the flexibility of Machine Learning (ML) algorithms to learn approximations to kinetic energy functionals. ML-based kinetic energy functionals, however, often suffer from a lack of physical interpretability and limited generalizability beyond their training datasets. Most existing ML approaches derive kinetic energy densities indirectly from KSDFT calculations and are trained primarily on small datasets consisting of KSDFT ground-state densities; we note that ground-state densities constitute an \emph{unknown} subset of $N$-representable densities. Although several efforts have been made to address these issues \cite{Snyder12, Meyer20, Ryczko22}, the training datasets for all these ML-based approaches still originate from KSDFT calculations, limiting their applicability to a truly orbital-free framework. Recently, \cite{Kovachki23} introduced a mathematically grounded operator learning framework that can, in principle, be used to learn complex functionals such as the non-interacting kinetic energy. However, training such models requires explicit evaluation of the energy for arbitrary electron densities. 

To the best of our knowledge, there are no existing numerical methods to compute $T_{KS}(\rho)$ explicitly for arbitrary $N$-representable density. However, if such densities correspond to the ground state of a non-interacting potential, the Kohn-Sham inversion \cite{Andre25} could be used. In KSDFT $T_{KS}(\rho)$ is implicitly computed starting from a potential via the KS orbitals obtained as part of the KSDFT calculations. This is also typically how the datasets used to train data-driven models for $T_{KS}(\rho)$ are generated. Thus, only a small subset of $v$-representable densities is sampled with conventional approaches. We present in this work a variational principle to compute an extension of the non-interacting kinetic energy functional for arbitrary $N$-representable densities, including those that are not $v$-representable, and validate our approach numerically using 1D KSDFT calculations. Our results offer a systematic approach for developing improved kinetic energy functionals and thus hold promise for developing OFDFT models with improved accuracy.

The article is organized as follows. In Section 2, we summarize certain relevant background material to set the stage for a variational formulation to evaluate an extension of the non-interacting kinetic energy functional. Throughout this work, we restrict attention to closed-shell systems. The variational principle that is the central thesis of this article is presented in Section 3. Representative numerical results validating our approach are presented in Section 4. In Section 5, we conclude with a discussion of some outstanding issues and the work to be pursued in the future. Supplementary results and known propositions are presented in three appendices.

\section{Mathematical Background}
\label{sec:mathematical_background}
We summarize in this section certain ideas from density functional theory, largely to fix the notation and set the context for the ensuing developments.
\subsection{Background and Basic Definitions}
The physical system studied in this work is a collection of $N$ electrons interacting with each other via an interaction potential $w_{ee}:\mathbb{R}^d \to \mathbb{R}$ and subject to an external potential $v:\mathbb{R}^d \to \mathbb{R}$. We assume that $w_{ee}$ is a non-negative even function and that $v$ is negative. Following standard convention, we define $X = (\mathbb{R}^d \times \mathbb{Z}_q)^N$, with $q=2$: any $x \in X$ can be written as $x = (r,\sigma) = (r_1, \sigma_1, r_2, \sigma_2, ..., r_N, \sigma_N)$, where each $r_i \in \mathbb{R}^d$ and each $\sigma_i \in \mathbb{Z}_q$. We use the shorthand notation $\int_X f(x) dx$ for $\sum_{\sigma_1=1}^q \ldots \sum_{\sigma_N=1}^q \int_{({\mathbb{R}^d})^N} f(r,\sigma) dr_1 \ldots dr_N$, for any function $f$ on $X$; we will also use the notation $\sum_q$ for $\sum_{\sigma_1=1}^q \ldots \sum_{\sigma_N=1}^q$. We use the standard notation $\wedge_1^N L^2(X)$ to represent the set of all square integrable antisymmetric wavefunctions. The following function spaces will be referenced frequently:
\begin{equation}
    \begin{split}
        \mathcal{Q} &= \left\{\psi \in \wedge_1^N L^2(X) : \sum_{i=1}^N \int_X \|\nabla_{r_i} \psi \|^2  < \infty \right\},\\
        \mathcal{I}_N &= \left\{ \rho \in L^1(\mathbb{R}^d, \mathbb{R}^+) : \sqrt{\rho} \in H^1(\mathbb{R}^d), \int_{\mathbb{R}^d}\rho = N\right\}.
    \end{split}
\end{equation}
Note that $\mathcal{Q}$ collects all square integrable antisymmetric wavefunctions with finite kinetic energy. The set $\mathcal{I}_N$ is an exact characterization of $N$-representable densities due to Lieb \cite{Lieb83}.
The quadratic form associated with symmetric operator
\begin{equation} \label{eq:hamiltonian}
    H_v = \sum_{j=1}^N \left( \frac{-\Delta_{r_j}}{2} + v(r_j) \right) + \sum_{1\leq j<k\leq N}w_{ee}(r_j - r_k)
\end{equation}
is bounded from below and closed on $\mathcal{Q}$ as a consequence of the Sobolev inequality when $v$ and $w_{ee}$ belong to $Y^d = L^{p}(\mathbb{R}^d; \mathbb{R}) + L^{\infty}(\mathbb{R}^d; \mathbb{R})$ \cite{Lewin2023}, with 
\begin{equation*}
p \begin{cases}=1 & \text { when } d=1, \\ >1 & \text { when } d=2, \\ =\frac{d}{2} & \text { when } d \geq 3.\end{cases}
\end{equation*}
Note that the Coulomb interaction in three dimensions belongs to $Y^3$. The Friedrichs extension can be used to extend $H_v$ to a self-adjoint operator. The ground state energy for a given potential $v$ is defined as
\begin{equation} \label{eq:min_E}
    E[v] := \inf_{\substack{\psi \in \mathcal{Q}\\ \int_X |\psi|^2 = 1}}  \left( \psi, H_v\psi \right) .
\end{equation}
For a given wavefunction $\psi \in \mathcal{Q}$, we define the corresponding electron density $\rho_\psi:\mathbb{R}^d \to \mathbb{R}$ as
\begin{equation}
    \rho_\psi (r) := N \sum_{\sigma}\int |\psi(r, \sigma_1, r_2, \sigma_2, ..., r_N, \sigma_N)|^2 dr_2 dr_3 ... dr_N.
\end{equation}
It was established in \cite{Lieb83} that for $\rho \in \mathcal{I}_N$ there exists a $\psi \in \mathcal{Q}$ such that $\rho_\psi = \rho$. A density $\rho \in \mathcal{I}_N$ is said to be $N$-representable for this reason. In contrast, densities that correspond to the solution of the variational problem \eqref{eq:min_E} are said to be $v$-representable, while those that correspond to the solution of \eqref{eq:min_E} with $w_{ee}=0$ are called non-interacting $v$-representable. Although a complete explicit description of the sets of $v$-representable and non-interacting $v$-representable densities is still unknown in full generality, in one dimension there has been notable progress toward characterizing large---in some cases complete---subsets of representable densities \cite{sutter2024vrep, sutter2024vrep_temp, corso2025rigorous}. These developments make the inverse method for the non-interacting kinetic energy feasible in 1D since it relies on being able to determine whether a density is non-interacting $v$-representable.

The central thesis of Hohenberg and Kohn's density functional theory \cite{HK64} is the restatement of the variational principle \eqref{eq:min_E} in an exactly equivalent form in terms of the electron density. The constrained-search formulation due to Levy \cite{Levy79} and Lieb \cite{Lieb83} exploits the fact that the set of all $N$-representable densities, $\mathcal{I}_N$, is known to rewrite the variational principle \eqref{eq:min_E} as
\begin{equation}
    E[v] = \inf_{\rho \in \mathcal{I}_N} \inf_{\substack{\psi \in \mathcal{Q}\\ \rho_\psi = \rho}} (\psi, H_v\psi).
\end{equation}
Defining the universal functional $F$ as 
\begin{equation} 
F(\rho) = \inf_{\substack{\psi \in \mathcal{Q}\\ \int_X |\psi|^2 = 1\\ \rho_\psi=\rho} } (\psi, H_0\psi),
\end{equation}
where $H_0$ is the Hamiltonian \eqref{eq:hamiltonian} with the external potential $v \equiv 0$, we can write the energy variational principle \eqref{eq:min_E} compactly as
\begin{equation} \label{eq:HK}
E[v] = \inf_{\rho \in \mathcal{I}_N} \left\{F(\rho) + \int_{\mathbb{R}^d} v \rho \right\}. 
\end{equation}
The explicit form of the universal functional is not currently known, despite several significant efforts towards that end. Several approximations have been put forth, as noted in the introductory remarks, but none of them are entirely satisfactory. The largest errors for a purely density based approach---OFDFT---are associated with the kinetic energy component of $F(\rho)$. Kohn and Sham \cite{KS65} introduced a practical workaround by introducing a fictitious non-interacting system of electrons for which the kinetic energy can be compute exactly. In practice, this covers a significant fraction of the kinetic energy component of the universal functional. To state this clearly, we define a space of $N$ one-electron orbitals that yield a specified density $\rho$ as 
\begin{equation}
    \mathcal{Q}^+_\rho =\{ (\phi_1, \phi_2, ..., \phi_N) \in H^1(\mathbb{R}^d\times\mathbb{Z}_q; \mathbb{C}^N) : (\phi_i, \phi_j)_{L^2} =\delta_{ij}, |\phi_1|^2+|\phi_2|^2+...+|\phi_N|^2 = \rho \}.
\end{equation}
It was shown in \cite{Lieb83} that $\mathcal{Q}_\rho^+$ is non-empty for every $N$-representable density $\rho \in \mathcal{I}_N$. Kohn and Sham decomposed the universal functional into three terms,
\begin{equation*}
    F(\rho) = T_{\text{KS}}(\rho) + U_{\text{H}}(\rho) + E_{\text{XC}}(\rho),
\end{equation*}
where the Kohn-Sham kinetic energy, which corresponds to kinetic energy of a non-interacting collection of electrons, is given by
\begin{equation} \label{eq:Ts_def}
    T_{\text{KS}}(\rho) = \frac{1}{2}\min_{\mathcal{Q}^+_\rho} \sum_{i=1}^N \int_{\mathbb{R}^d \times \mathbb{Z}_q} \|\nabla \phi_i\|^2 \, dx_i,
\end{equation}
and $U_{\text{H}}$ and $E_{\text{XC}}$ are the Hartree potential and the exchange correlation functional, respectively. It is to be noted that $T_{\text{KS}}(\rho)$ was originally formulated for ground-state densities, but the extension specified in equation~\eqref{eq:Ts_def} to $N$-representable densities is due to Lieb \cite{Lieb83}. However, since the Kohn-Sham approach relies on the fictitious orbitals $\{\phi_k\}_{k=1}^N$, its computational cost becomes prohibitive for large systems, prompting the development of orbital-free approaches that provide a direct approximation of $T_{\text{KS}}(\rho)$ for any $N$-representable $\rho \in \mathcal{I}_N$. This, as noted earlier, remains elusive. 

The goal of the present work is to develop a new approach to computing $T_{\text{KS}}(\rho)$ for a much larger class of densities than $\mathcal{I}_N$ exactly. This can be viewed as an algorithm to generate $(\rho, T_{\text{KS}}(\rho))$ pairs that can be used to a learn a $\rho \mapsto T_{\text{KS}}(\rho)$ map for a much larger class of densities than is currently possible with DFT-based methods. For sake of simplicity, we restrict ourselves to spin-unpolarized where the electrons are paired in each orbital---thus, $\int_{\mathbb{R}^d} \rho = 2N$ for some $N \in \mathbb{N}$. We refer to systems as \emph{Restricted Kohn-Sham} (RKS) systems, following \cite{Cances09}. In this case, the expression for the non-interacting kinetic energy simplifies to
\begin{equation}
\label{eq:T_rks}
    T_{RKS}(\rho) = \min_{\mathcal{Q}^{++}_\rho} \sum_{i=1}^N \int_{\mathbb{R}^d}\|\nabla \phi_i\|^2,
\end{equation}
where
\begin{equation}
    \mathcal{Q}^{++}_\rho =\{ (\phi_1, \phi_2, ..., \phi_N) \in H^1(\mathbb{R}^d; \mathbb{R}^N) \,:\, (\phi_i, \phi_j)_{L^2(\mathbb{R}^d; \mathbb{R})} =\delta_{ij}, 2\sum_{i=1}^N|\phi_i|^2 = \rho \}.
\end{equation}
The rest of the paper is concerned with the restricted non-interacting kinetic energy functional $T_{\text{RKS}}$. 

\subsection{Non-interacting $v$-representable densities}
Define a collection of orthonormal $N$ orbitals, for some $N \in \mathbb{N}$, as 
\begin{equation}
        \mathcal{P}_N = \{ \Phi = (\phi_1, \phi_2, ..., \phi_{N}) \in H^1(\mathbb{R}^d; \mathbb{R}^{N}) \,:\, (\phi_i, \phi_j)_{L^2(\mathbb{R}^d)} = \delta_{ij}, \, 1 \le i,j \le N \}.
\end{equation}
We single out a special collection of potentials $V_N$ as follows:
\begin{equation} \label{eq:Wn_def}
V_N = \left\{v \in Y^d : \exists\,\Phi^\circ \in \mathcal{P}_N\text{ s.t. } \Phi^\circ = \underset{\Phi \in \mathcal{P}_N}{\text{arg inf}} \; \sum_{i=1}^N \int_{\mathbb{R}^d} \left( \|\nabla \phi_i\|^2 + 2 \phi_i^2 v \right)\right\}.
\end{equation}
Thus, for each $v \in V_N$, we define
\begin{equation}
   \label{eq:non_interacting_minimization_problem}
    I^N(v) = \min_{\Phi \in \mathcal{P}_N} \sum_{i=1}^N \int_{\mathbb{R}^d} \left( \|\nabla \phi_i\|^2 + 2 \phi_i^2 v \right) = \sum_{i=1}^N \int_{\mathbb{R}^d} \left( \|\nabla \phi^\circ_i\|^2 + 2 (\phi^\circ_i)^2 v \right). 
\end{equation}
Following the original formulation given by ~\cite{KS65}, the Euler-Lagrange equations of the minimization problem \eqref{eq:non_interacting_minimization_problem} yields the following eigenvalue problem: for $1 \le k \le N$ and for every $w \in H^1(\mathbb{R}^d)$, 
\begin{equation} \label{eq:v_Wn_EL}
    \frac{1}{2}\int_{\mathbb{R}^d} \nabla \phi_k^\circ \cdot \nabla w +  \int_{\mathbb{R}^d} v \phi_k^\circ w = \sum_{i=1}^N A_{ki} \int_{\mathbb{R}^d} \phi_i w  
\end{equation}
The Lagrange multipliers $A_{ki}$, whose existence is known from Proposition~\ref{prop: lag_multiplier_on_banach}, can be arranged in the form of a real and symmetric matrix $A$ of size $N \times N$. Diagonalizing $A$ yields $A = P \Lambda P^T$, where $\Lambda$ is a diagonal matrix whose diagonal elements contain the eigenvalues $\lambda_i$ and $P$ is the orthogonal matrix whose columns are the eigenvectors of $A$. Define $\psi_i = \sum_{j=1}^N P_{ji}\phi_j^\circ$---it is easy to see that ${\psi_1, \psi_2, ..., \psi_N} \in \mathcal{P}_N$. We can also express the orbitals $(\phi_i^\circ)_{i=1}^N$ in terms of the eigenvectors of $A$ as $\phi_k^\circ = \sum_{j=1}^N P_{kj} \psi_j$. Substituting this into equation~\eqref{eq:v_Wn_EL} yields, for every $1 \le j \le N$,
\begin{equation*}
        \frac{1}{2}\int_{\mathbb{R}^d} \nabla \psi_j \cdot \nabla w + \int_{\mathbb{R}^d} v \psi_j w = \lambda_j \int_{\mathbb{R}^d} \psi_j w. 
\end{equation*}
Note that $(\psi_1, \psi_2, ... \psi_N)$ is also a minimizer of \eqref{eq:non_interacting_minimization_problem}. This implies that the infimum of the spectrum of $-\frac{1}{2}\Delta + v$ is achieved in the sense of distributions, and that $(\psi_i)_{i=1}^N$ constitute the $N$ eigenfunctions of $-\frac{1}{2}\Delta + v$. We call the minimizers $(\psi_i)_{i=1}^N$ as restricted Kohn-Sham orbitals. The orbital corresponding to the lowest eigenvalue will be referred to as the \emph{lowest eigenvalue} orbital. Note that if the spectrum is degenerate there can be multiple lowest eigenvalue orbitals. 

The ground state density $\rho_v$ corresponding to a potential $v \in V_N$ is computed in terms of the restricted Kohn-Sham orbitals as $\rho_v= 2\sum_{i=1}^N \psi_i^2$. Note that $\rho_v$ is an $N$-representable density since $\rho_v \in \mathcal{I}_{2N}$. We call the collection of all such ground-state densities as \emph{non-interacting $v$-representable densities}. A precise characterization of which subset of $\mathcal{I}_N$ corresponds to non-interacting $v$-representable densities is unknown. Note that any ground state density computed using DFT is a non-interacting $v$-representable density. For future use, we note that the kinetic energy in this special case is computed as $T_{\text{RKS}}(\rho_v) = \sum_{i=1}^N \int_{\mathbb{R}^d} \|\nabla \psi_i\|^2$.

We also present a lemma that will be useful in the subsequent analysis:
\begin{lemma}
\label{lem: RKS_minimizer}
A lowest eigenvalue orbital of \( v \in V_N \) minimizes the energy functional
\[
\mathcal{E}(f) = \frac{1}{2} \int_{\mathbb{R}^d} \|\nabla f\|^2 + \int_{\mathbb{R}^d} v f^2
\]
over \( H^1(\mathbb{R}^d) \), subject to the normalization constraint \( \|f\|_{L^2(\mathbb{R}^d)} = 1 \). Furthermore, the modulus of any minimizer is also a minimizer and thus qualifies as a lowest eigenvalue orbital of \( v \). In addition, if  \( v \in L^\infty(\mathbb{R}^d) \), then the lowest eigenvalue orbital is unique up to a change of sign and can be chosen to be strictly positive.
\end{lemma}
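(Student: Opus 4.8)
The plan is to establish the three assertions in turn, reducing each to a standard variational fact or an elliptic-regularity argument, and to treat the final uniqueness statement as the genuine obstacle.

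For the first claim I would invoke the Rayleigh--Ritz principle. Since $v \in W_N$, the $N$-orbital minimization \eqref{eq:non_interacting_minimization_problem} is attained, and the analysis following \eqref{eq:v_Wn_EL} shows that the minimizing orbitals $(\psi_i)_{i=1}^N$ are eigenfunctions of $H = -\tfrac{1}{2}\Delta^d + v$ with eigenvalues $\lambda_1 \le \cdots \le \lambda_N$ realizing the bottom of the spectrum; in particular $\lambda_1 = \inf\operatorname{spec}(H)$. The standard variational characterization of the bottom of the spectrum of a semibounded self-adjoint operator gives $\inf\operatorname{spec}(H) = \inf\{\mathcal{E}(f) : f \in H^1(\mathbb{R}^d),\ \|f\|_{L^2} = 1\}$, and any lowest eigenvalue orbital $\psi$ attains this infimum because $\mathcal{E}(\psi) = \lambda_1$. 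This is precisely the assertion that a lowest eigenvalue orbital minimizes $\mathcal{E}$ on the unit sphere.

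For the second claim I would use the elementary fact that for real-valued $f \in H^1(\mathbb{R}^d)$ one has $|f| \in H^1(\mathbb{R}^d)$ with $\nabla |f| = \operatorname{sgn}(f)\,\nabla f$ almost everywhere (and $\nabla f = 0$ a.e.\ on $\{f = 0\}$), so that $\|\nabla |f|\|^2 = \|\nabla f\|^2$ pointwise a.e. Since also $|f|^2 = f^2$ and $\||f|\|_{L^2} = \|f\|_{L^2}$, we obtain $\mathcal{E}(|f|) = \mathcal{E}(f)$ with $\||f|\|_{L^2} = 1$, so $|f|$ minimizes $\mathcal{E}$ on the unit sphere whenever $f$ does. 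By the Euler--Lagrange equation of this constrained minimization the associated multiplier equals $\mathcal{E}(|f|) = \lambda_1$, hence $|f|$ solves $H|f| = \lambda_1 |f|$ and qualifies as a lowest eigenvalue orbital.

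The third claim is where $v \in L^\infty(\mathbb{R}^d)$ enters and is the main obstacle. I would first establish strict positivity: by the second claim we may select a non-negative minimizer $u = |f| \ge 0$, which solves $-\Delta u = 2(\lambda_1 - v)u$ with bounded coefficient $2(\lambda_1 - v) \in L^\infty(\mathbb{R}^d)$. The Harnack inequality (equivalently, the strong maximum principle) for uniformly elliptic equations with bounded coefficients then forces $u$ to be either identically zero or strictly positive; since $\|u\|_{L^2} = 1$, we conclude $u > 0$. For uniqueness up to sign, suppose the lowest eigenspace had dimension greater than one and choose a minimizer $w$ orthogonal to the strictly positive ground state $u$; orthogonality against $u > 0$ forces both $\{w > 0\}$ and $\{w < 0\}$ to have positive measure. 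By the second claim $|w|$ is again a lowest eigenvalue orbital, so $g := |w| - w$ lies in the lowest eigenspace, is non-negative, and vanishes on the positive-measure set $\{w > 0\}$; the strict-positivity dichotomy applied to the non-negative eigenfunction $g$ then yields $g \equiv 0$, i.e.\ $w = |w| \ge 0$, whence $\int u w = 0$ with $u > 0$ forces $w \equiv 0$, contradicting $\|w\|_{L^2} = 1$. Therefore the lowest eigenspace is one-dimensional, and the normalized lowest eigenvalue orbital is unique up to a change of sign and may be taken strictly positive. The points requiring care are the applicability of the Harnack inequality to weak $H^1$ solutions of the eigenvalue equation with a merely bounded potential, and the verification that $g$ genuinely belongs to the eigenspace rather than only sharing the correct energy.
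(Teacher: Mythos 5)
Your proposal is correct, and its first two parts coincide with the paper's argument: the paper also gets the first claim from the variational (min--max/Rayleigh--Ritz) characterization of the bottom of the spectrum of the self-adjoint operator $-\tfrac{1}{2}\Delta^d + v$, and the second claim from $\mathcal{E}(|f|) = \mathcal{E}(f)$ together with the Euler--Lagrange equation satisfied by $|f|$ on the constraint set. Where you genuinely diverge is the third claim: the paper disposes of it in one line by citing Proposition~\ref{prop: Lieb_Schrodinger_GS} (Lieb--Loss, Theorem~11.8), whereas you reprove that result from scratch in the bounded-potential case---De Giorgi--Nash--Moser continuity plus the Harnack dichotomy for the non-negative eigenfunction to get strict positivity, and then the decomposition $g = |w| - w$ (non-negative, in the eigenspace by linearity, vanishing on the positive-measure set $\{w>0\}$) to kill any putative second dimension of the ground eigenspace. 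Your argument is sound: $w$ orthogonal to $u>0$ must change sign, $|w|$ is an eigenfunction by your second claim, so $g$ is a non-negative eigenfunction vanishing on a set of positive measure, hence identically zero by Harnack, which forces the contradiction $w \ge 0$ and $\int uw = 0$. The trade-off is the expected one: the paper's citation is shorter and the quoted theorem holds under weaker hypotheses ($V \in L^1_{loc}(\mathbb{R}^d)$ with $V_+ \in L^\infty$, versus your $v \in L^\infty$), while your route is self-contained and makes explicit the elliptic-regularity input (local boundedness and continuity of weak $H^1$ solutions, Harnack for bounded potentials) that the citation hides; your $g = |w|-w$ device is essentially the classical proof idea behind the cited theorem, so the two routes are mathematically consonant even though only yours displays the mechanism.
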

\begin{proof}
    Let $\psi_1$ be a lowest eigenvalue RKS orbital of $v \in V_N$ and let $\lambda_1$ be the eigenvalue corresponding to $\psi_1$.
    Then the infimum of the spectrum of the operator \[
-\frac{1}{2} \Delta + v,
\]
in the sense of distributions, is attained by \( \psi_1 \). Since \( -\frac{1}{2} \Delta + v \) is self-adjoint, the min-max theorem \cite{RS78} implies that
\[
\frac{1}{2} \int_{\mathbb{R}^d} \lVert\nabla \psi_1\rVert^2 + \int_{\mathbb{R}^d} v \psi_1^2
\]
is the minimum of \( \mathcal{E}(f) \) over \( H^1(\mathbb{R}^d) \) under the constraint \( \|f\|_{L^2(\mathbb{R}^d)} = 1 \).
 Note that $\mathcal{E}(\psi_1) = \mathcal{E}(|\psi_1|) = \lambda_1$ and $|\psi_1|$ belong to the minimizing set. Therefore $|\psi_1|$ satisfies the following Euler-Lagrange equation:
    \begin{equation*}
          \frac{1}{2}\int_{\mathbb{R}^d} \nabla |\psi_1| \cdot \nabla w + \int_{\mathbb{R}^d} v |\psi_1| w = \lambda_1 \int_{\mathbb{R}^d} |\psi_1| w,  
    \end{equation*}
for all $w \in H^1(\mathbb{R}^d)$, thereby showing that $|\psi_1|$ is also the lowest eigenvalue orbital. The final claim follows from Proposition~\ref{prop: Lieb_Schrodinger_GS}.
\end{proof}

\section{A Variational Principle for computing the Non-Interacting Kinetic Energy}
\subsection{An informal introduction}
As shown earlier in equation~\eqref{eq:T_rks}, the non-interacting kinetic energy functional can be formulated as a minimization problem over orthonormal functions whose squared moduli sum to a given $N$-representable electron density. Our approach to computing the non-interacting kinetic energy centers on an exact trigonometric reparametrization of the density in terms of \emph{angle fields}. Informally, given a density $\rho$, we construct $N$ orbitals $(\phi_k^\Theta)_{k=1}^N$ in terms of $N-1$ functions $\Theta = (\theta_l:\mathbb{R}^d \to [-\pi/2,\pi/2])_{l=1}^{N-1}$, which we refer to as angle fields, as follows:
\begin{displaymath}
\begin{split}
\phi_N^\Theta &= \sqrt{\frac{\rho}{2}} \prod_{l=1}^{N-1} \cos \theta_l,\\
\phi_k^\Theta &= \sqrt{\frac{\rho}{2}} \sin \theta_k \prod_{l=1}^{k-1} \cos \theta_l, \quad k=1,\ldots,N-1.
\end{split}
\end{displaymath}

This parametrization addresses the following points:  
\begin{enumerate}[(i)]
\item The key constraint $2\sum_{k=1}^N (\phi_k^\Theta)^2 = \rho$
in the minimization problem~\eqref{eq:T_rks} is automatically satisfied. Without enforcing this constraint directly, one would need to introduce a Lagrange multiplier, 
leading to an adjoint problem and the possible complication of singular multipliers.\footnote{In fact, the adjoint problem is not always well-posed, since it requires the Fr\'echet derivative of the constraint to be surjective; see Proposition~\ref{prop: lag_multiplier_on_banach}.}

\item This approach applies to \emph{any} \(N\)-representable density and yields the energy explicitly, unlike Kohn--Sham DFT, which is restricted to non-interacting \(v\)-representable densities, whence there is no way of explicitly calculating energy for a given electron density.  
\end{enumerate}


The specific reason for restricting the range of the angle fields to $[-\pi/2,\pi/2]$ is to remove discontinuities that can arise in the minimizer---a simple example is provided in Appendix~\ref{app:discontinuity}. A more rigorous argument in support of this range restriction will be provided below.

In terms of the trigonometric parametrization just introduced, the minimization problem for computing $T_{\text{RKS}}(\rho)$, for a non-interacting $v$-representable density, $\rho$, can be equivalently written as a minimization problem involving the angle fields as follows:
\begin{displaymath}
\begin{split}
T_{\text{RKS}}(\rho) = \frac{1}{2} \inf_{\Theta} & \int_{\mathbb{R}^d} \|\nabla \sqrt{\rho}\|^2  + \int_{\mathbb{R}^d} \rho \|\nabla \theta_1\|^2  
    + \int_{\mathbb{R}^d} \rho\Big(  \cos^2{\theta_1}\|\nabla \theta_2\|^2 \\ & +  \cos^2{\theta_1}\cos^2{\theta_2}\|\nabla \theta_3\|^2 + ... + \cos^2{\theta_1}\cos{\theta_2}^2 ....\cos^2{\theta_{N-2}} \|\nabla \theta_{N-1}\|^2 \Big). \\
    \end{split}
\end{displaymath}
The specific set over which $\Theta$ is minimized will be precisely defined later in this section.

We now present a detailed analysis of the foregoing sketch. For purposes of mathematical analysis, we first extend the definition of the non-interacting kinetic energy to a much larger class of densities than $\mathcal{I}_{2N}$, specifically densities $\rho$ for which $\int_{\mathbb{R}^d} \rho$ is not an even integer. Note that this extension to $T_{\text{KS}}(\rho)$ remains well-defined even if intermediate densities obtained during an iterative algorithm are not $N$-representable.

\subsection{Extension of the non-interacting kinetic energy functional}
Recall that the functional $T_{\text{RKS}}(\rho)$ is defined for $N$-representable densities $\rho \in \mathcal{I}_N$ that integrate to an even integer. In the Kohn-Sham theory, it is computed on a much smaller subset of $v$-representable densities. As noted earlier, an explicit specification of the set of all $v$-representable densities is not available. In this work, we move in the opposite direction and generalize $T_{\text{RKS}}(\rho)$ to a much larger set than $\mathcal{I}_{2N}$. In particular, we extend it to include \emph{pseudo-densities} which integrate to a non-integer value.

Following standard practice, we develop the formalism in terms of the square root of the electron density $u = \sqrt{\rho}$. We associate with any $u \in L^2({\mathbb{R}^d})$ a  natural number $N_u$ as follows:
\begin{equation}
N_u = \text{ceil}\left(\int_{\mathbb{R}^d} \frac{u^2}{2}\right).
\end{equation}
Note that $N_u =0$ if and only if $u =0$ a.e. in $\mathbb{R}^d$.

Additionally, we associate with each $u \in H^1(\mathbb{R}^d)$, such that for $N_u \geq 1$, an \emph{almost-orthonormal} set $\mathcal{M}_u$ defined as
\begin{equation}
\begin{split}
        \mathcal{M}_u = \{ &(\phi_1, \phi_2, ..., \phi_{N_u}) \in H^1(\mathbb{R}^d; \mathbb{R}^{N_u}) : (\phi_i, \phi_j)_{L^2(\mathbb{R}^d)} = 0,\;1 \le i,j \le N_u, i \neq j, \\ &\|\phi_i\|_{L^2(\mathbb{R}^d)} = 1 \; \forall i \in [1, N_u-1], \; 2 \sum_{i=1}^{N_u}|\phi_i|^2 = u^2 \}.
\end{split}
\end{equation}
Note that $\lVert \phi_{N_u} \rVert^2_{L^2(\mathbb{R}^d)} = \int_{\mathbb{R}^d} u^2/2 - N_u + 1$, which is strictly positive and not equal to $1$ unless $\int_{\mathbb{R}^d} u^2/2$ is an integer. In preparation for the extension of the non-interacting kinetic energy functional, for all $u \in H^1(\mathbb{R}^d)$ such that $N_u \geq 1$, we define a functional $L: H^1(\mathbb{R}^d; \mathbb{R}^{N_u}) \rightarrow \mathbb{R}^+$ as
\begin{equation*}
       L(\phi_1, \phi_2, ..., \phi_{N_u}) = \sum_{i=1}^{N_u} \int_{\mathbb{R}^d}\|\nabla \phi_i\|^2.
\end{equation*}

\begin{lemma}
\label{lem:Lbounds}
    Let $u \in H^1(\mathbb{R}^d)$ such that $N_u \geq 1$. Then there exists $\Phi = (\phi_1, \ldots, \phi_{N_u}) \in \mathcal{M}_u$ such that $L(\Phi) \leq C \int_{\mathbb{R}^d} \|\nabla u\|^2$ for some positive constant $C$ that depends on $N_u$.  Moreover, for any $\Phi \in \mathcal{M}_u$, $$\frac{1}{2}\int_{\mathbb{R}^d} \|\nabla u\|^2 \leq L(\Phi).$$ 
\end{lemma}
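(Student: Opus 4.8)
The plan is to prove the two inequalities by completely different means. The lower bound, which holds for \emph{every} $\Phi\in\mathcal{M}_u$, is a von Weizs\"acker-type estimate and is the routine half; the upper bound instead requires \emph{constructing} one admissible $\Phi$ with controlled Dirichlet energy, and that is where the work lies.

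For the lower bound I would fix $\Phi=(\phi_1,\dots,\phi_{N_u})\in\mathcal{M}_u$ and use the density constraint directly. Since $u=\sqrt{\rho}\ge0$ and $u\neq0$ a.e., we have $u>0$ a.e.; differentiating $u^2=2\sum_i\phi_i^2$ gives $u\,\nabla u=2\sum_i\phi_i\nabla\phi_i$ almost everywhere. Dividing by $u$ and applying the Cauchy--Schwarz inequality in the orbital index, together with $\sum_i\phi_i^2=u^2/2$, yields the pointwise estimate $\|\nabla u\|^2\le 2\sum_i\|\nabla\phi_i\|^2$, which integrates to the von Weizs\"acker lower bound for $L(\Phi)$. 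The only delicate points are the a.e.\ validity of the chain and quotient rules for $H^1$ functions and the behaviour on the null set where $u$ vanishes.

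For existence I would exploit the multiplicative structure underlying the trigonometric reparametrization: write $\phi_i=\tfrac{1}{\sqrt2}\,u\,c_i$, where $c=(c_1,\dots,c_{N_u})$ is a \emph{square partition of unity}, $\sum_i c_i^2\equiv1$. This enforces $2\sum_i\phi_i^2=u^2$ automatically, so the density constraint holds by construction. Using $\sum_i c_i\nabla c_i=0$, a short computation gives
\[
L(\Phi)=\tfrac12\int_{\mathbb{R}^d}\|\nabla u\|^2+\tfrac12\int_{\mathbb{R}^d}u^2\sum_i\|\nabla c_i\|^2 ,
\]
so it remains to choose the $c_i$ so that (i) the $\phi_i$ are orthonormal in $L^2(\mathbb{R}^d)$ --- equivalently, the $c_i$ are orthonormal in $L^2(\mu)$ with $d\mu=\tfrac12u^2\,dx$, the orbital $\phi_{N_u}$ carrying the residual mass $\int u^2/2-N_u+1$ --- and (ii) the weighted Dirichlet energy $\int u^2\sum_i\|\nabla c_i\|^2$ is bounded by $C(N_u)\int\|\nabla u\|^2$. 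A natural first attempt is to take each $c_i$ to be a fixed bounded Lipschitz profile composed with the scalar field $t=\log u$; then $u^2\|\nabla c_i\|^2=C_i'(t)^2\|\nabla u\|^2$, the weight cancels, and (ii) holds with a constant controlled by $\sup_i\|C_i'\|_\infty$, hence depending on $N_u$ alone.

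The hard part will be reconciling (ii) with the orthonormality requirement (i). With $t=\log u$ the orthonormality conditions reduce to orthonormality of the profiles in $L^2(\nu)$, where $\nu=(\log u)_{\#}\mu$; fixed-Lipschitz profiles can realize $N_u$ orthonormal functions only when $\nu$ is sufficiently spread out, and this fails precisely when $u$ consists of several bumps of comparable height, for which $\log u$ does not separate the spatial regions. The remedy, and the technical heart of the argument, is to replace $\log u$ by a $u$-adapted scalar field $t$ that simultaneously obeys the pointwise bound $u^2\|\nabla t\|^2\lesssim\|\nabla u\|^2$ (so that (ii) persists) and splits the support of $\mu$ into $N_u$ pieces of prescribed mass, with the orthogonalizing transitions placed in low-density regions where the weight $u^2$ absorbs their large gradients. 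Showing that such a field and the accompanying profiles can always be built with Lipschitz data bounded uniformly in $u$ --- so that the final constant depends only on $N_u$ --- is the crux.
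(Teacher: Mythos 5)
Your lower-bound half is the same as the paper's: take the weak derivative of $\sum_i\phi_i^2=u^2/2$ and apply Cauchy--Schwarz in the orbital index. (Note that this argument---yours and the paper's alike---actually yields the pointwise bound $\|\nabla u\|^2\le 2\sum_i\|\nabla\phi_i\|^2$, i.e.\ $\int_{\mathbb{R}^d}\|\nabla u\|^2\le 2L(\Phi)$; the constant-free form stated in Lemma~\ref{lem:Lbounds} is not what Cauchy--Schwarz gives, as the case $N_u=1$, $\phi_1=|u|/\sqrt2$ shows, but this discrepancy is inherited from the paper, not introduced by you.) The genuine gap is in the existence half. You reduce it to finding a scalar field $t$ and fixed Lipschitz profiles $C_i$ such that $c_i=C_i\circ t$ are orthonormal in $L^2(\tfrac12 u^2\,dx)$ while $u^2\|\nabla t\|^2\le C\|\nabla u\|^2$ holds \emph{pointwise}, and you explicitly leave the construction of such a $t$ open (``the crux''). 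A proof that ends by naming its own missing step is not a proof, and here the missing step is the entire content of the existence claim.

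Worse, the program as you have set it up cannot be completed, because the pointwise requirement is too strong. Take $u$ with a large plateau: $u\equiv c>0$ on a set carrying almost all of $\int u^2/2\in(1,2]$ (so $N_u=2$), with thin transition regions at the edges. On the plateau $\nabla u=0$, so the pointwise bound forces $\nabla t=0$ there; any weakly differentiable $t$ is then constant on the plateau, and the pushforward of $\tfrac12u^2\,dx$ under $t$ has a single atom of mass $m_0>1$ carrying essentially all the mass. For profiles with $C_1^2+C_2^2=1$, the constraints $\|c_1\|^2=1$ and $\langle c_1,c_2\rangle=0$ then force $|C_1(t_0)C_2(t_0)|\,m_0\approx\sqrt{m_0-1}\cdot\sqrt{m_0}/\sqrt{m_0}>0$ up to negligible edge corrections, so orthogonality fails: no admissible $(t,C_i)$ exists. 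The paper's construction avoids this by abandoning pointwise control altogether: it takes $t=f(x^1)$, the scaled cumulative distribution function of the marginal $\int_{\mathbb{R}^{d-1}}u^2$, running monotonically from $-z$ to $z$ with $z=N_u-1$, and one fixed trapezoidal profile $g$ supported in $[0,1]$, setting $\phi^k=\tfrac{|u|}{\sqrt2}\bigl(g(f-k+1)-g(f+k)\bigr)$ and letting $\phi^{N_u}$ be the nonnegative remainder. Orthogonality is then automatic (the shifted profiles have disjoint supports in the $f$-variable, and $\phi^{N_u}/|u|$ is even in $f$ while each $\phi^k/|u|$ is odd), normalization follows from the change of variables $y=f(x^1)$, and the dangerous term $\int u^2\,(g'\circ f)^2 (f')^2$ is bounded by $CN_u^2\int\|\nabla u\|^2$ only \emph{after integration}, by the Sobolev/Gagliardo--Nirenberg-type argument of Theorem~1.2 of \cite{Lieb83} (the construction itself goes back to \cite{MH58}). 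Note that $f'$ is proportional to the marginal density---largest exactly on plateaus---so no pointwise bound of your form holds for it; replacing your $u$-adapted coordinate and pointwise estimate by this monotone CDF coordinate and integral estimate is precisely the ingredient your proposal is missing.
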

\begin{proof}
We show first that $\mathcal{M}_u$ is non-empty and establish the upper bound on $L$. If $N_u = 1$, $\phi_1 = \frac{|u|}{\sqrt{2}}$ belongs to $\mathcal{M}_u$ and upper bound on $L$ is trivial. Suppose then that $N_u \geq 2$, and define $z = N_u-1$ and $\alpha = 2z/\int_{\mathbb{R}^d} u^2$. Noting that \( 1/2 \leq \alpha < 1 \), we define the function \( g: \mathbb{R} \to \mathbb{R} \) as follows---see Figure~\ref{fig:g_trapz}:

\[
g(x) =
\begin{cases}
\sqrt{\dfrac{1 + \alpha}{2}} \cdot \dfrac{4(1 + \alpha)}{3(1 - \alpha)} \cdot x & \text{if } 0 \le x < \dfrac{3(1 - \alpha)}{4(1 + \alpha)} \\[2ex]
\sqrt{\dfrac{1 + \alpha}{2}} & \text{if } \dfrac{3(1 - \alpha)}{4(1 + \alpha)} \le x \le 1 - \dfrac{3(1 - \alpha)}{4(1 + \alpha)} \\[2ex]
\sqrt{\dfrac{1 + \alpha}{2}} \cdot \dfrac{4(1 + \alpha)}{3(1 - \alpha)} \cdot (1 - x) & \text{if } 1 - \dfrac{3(1 - \alpha)}{4(1 + \alpha)} < x \le 1 \\[2ex]
0 & \text{otherwise}
\end{cases}
\]

\begin{figure}[h]
    \centering  \includegraphics[width=0.5\textwidth]{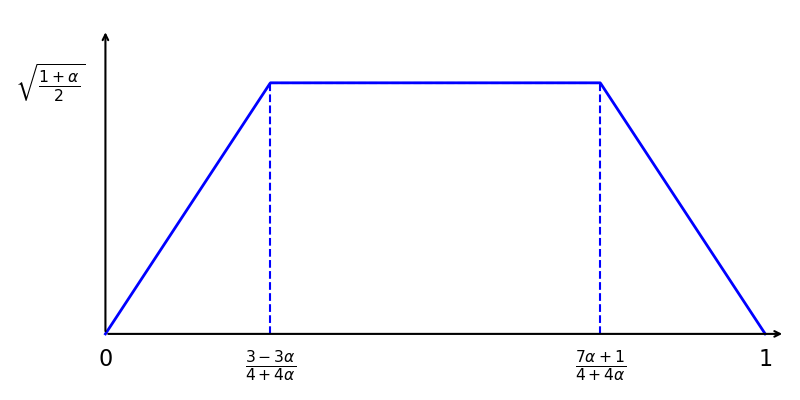}
    \caption{Plot of $g(x)$ over $[0,1]$.}
    \label{fig:g_trapz}
\end{figure}

It is easy to see that $g \in H^1(\mathbb{R})$, $g^2 \leq \frac{1+\alpha}{2} $, $\int_0^1 g^2 = \alpha $ and $g'^2$ is bounded by a constant.  
Borrowing ideas from \cite{MH58} and \cite{Lieb83}, define, for $x = (x^1, \ldots, x^d) \in \mathbb{R}^d$, 
    \begin{equation*}
        f(x^1) = \frac{2z \int_{-\infty}^{x^1} ds \int_{\mathbb{R}^{d-1}}dt \quad u^2(s, t)}{{\int_{\mathbb{R}^d} u^2}} -z. 
    \end{equation*}
Note that $f$ is monotonically increasing from $-z$ to $z$. Defining 
\begin{equation}
\label{eq:trial_phi}
\begin{split}
    \phi^k(x) &= \frac{|u(x)|}{\sqrt{2}} (g(f(x^1) - k+1) - g(f(x^1) +k)) , \quad k = 1, \ldots, N_u - 1,\\
    \phi^{N_u}(x) &= \sqrt{\frac{u(x)^2}{2} - \sum_{k=1}^{N_u-1}|\phi^k(x)|^2},
\end{split}
\end{equation}
it follows that $\{\phi^k\}_{k=1}^{N_u-1}$ is an orthonormal set in $L^2(\mathbb{R}^d; \mathbb{R})$, $|\phi^k|^2 \leq \frac{1+\alpha}{2} \frac{u^2}{2}$ for all $1 \leq k \leq N_u - 1$, and $\sum_{k=1}^{N_u-1}|\phi^k|^2 \leq \frac{1+\alpha}{2} \frac{u^2}{2} $.  Noting that $\phi^k/|u|$ is an odd function with respect to $f(x^1)$ whereas $\phi^{N_u}/|u| $ is even, it is evident that $\phi^{N_u}$ is orthogonal to all the elements of the above mentioned orthonormal set. Hence, $\Phi = (\phi^1, \phi^2, ..., \phi^{N_u}) \in \mathcal{M}_u$, thereby establishing that $\mathcal{M}_u$ is non-empty.

It is also easy to see that for $1 \leq k \leq N_u-1$,
\begin{equation*} 2\|\nabla \phi^k(x)\|^2 \leq \| \nabla u \|^2 + u(x)^2 (g'(f(x^1) - k+1) - g'(f(x^1) +k))^2 f'(x^1)^2. 
\end{equation*}
Since $g'$ is bounded by a constant, we get, following the proof in Theorem 1.2 of \cite{Lieb83}, that $$\sum_{k=1}^{N_u-1} \int_{\mathbb{R}^d}\|\nabla \phi^k(x)\|^2 \leq C_1 N_u^2 \int_{\mathbb{R}^d} \| \nabla u \|^2,$$
where $C_1 \geq 1$ is independent of $u$ and $d$. We also have $$2\phi^{N_u}\nabla \phi^{N_u} = u\nabla u - 2\sum_{k=1}^{N_u-1}\phi^k \nabla \phi^k .$$ 
Note that $\frac{(1-\alpha)|u|}{4} \leq \phi^{N_u} \leq \frac{|u|}{2}$ and recall that $1/2 \leq \alpha < 1$. Using the Cauchy-Schwarz inequality, we get the following estimate:
\begin{equation*}
    \| \nabla \phi^{N_u} \|^2 \leq C_2 \left( \|\nabla u\|^2 + \sum_{k=1}^{N_u-1} \| \nabla \phi^k\|^2 \right)
\end{equation*}
where $C_2$ is independent of $u$ and $d$, thus concluding the first part of the proof. 

To prove the second part, note that for any $\Phi = (\phi_1, \phi_2, ... ,\phi_{N_u}) \in \mathcal{M}_u$, we have $\sum_{i=1}^{N_u}\phi_i^2 = \frac{u^2}{2}$. Taking the weak derivative on both sides and using the Cauchy-Schwarz inequality proves the second part of the proof.
\end{proof}

\begin{remark}
The upper bound in the foregoing lemma is not optimal with respect to $N_u$. However, optimal bounds were obtained in \cite{ZM83} and \cite{GB96}.
\end{remark}

\begin{remark}
Note that $\phi^{N_u} \ge 0$. This observation will be used later in a later section. 
\end{remark}

The foregoing lemma shows that $\mathcal{M}_u$ is non-empty, and that $L$ is bounded from below. We can hence define $T : H^1(\mathbb{R}^d) \rightarrow \mathbb{R}_{\geq 0}$ such that
\begin{equation}
\label{eq:firstT}
T(u) = 
\begin{cases}
0, & N_u = 0 , \\
\inf_{\Phi \in \mathcal{M}_u} L(\Phi), & \text{$N_u \geq 1$}.
\end{cases}
\end{equation}

\begin{theorem}
    \label{thm:min_exist}
    The infimum in equation~\eqref{eq:firstT} is achieved for any $u \in H^1(\mathbb{R}^d)$ with $N_u \geq 1$. 
\end{theorem}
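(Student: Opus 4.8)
The plan is to apply the direct method of the calculus of variations. First I would fix $u \in H^1(\mathbb{R}^d)$ with $u \neq 0$ a.e. and invoke Lemma~\ref{lem:Lbounds} to guarantee that $\mathcal{M}_u$ is non-empty and that $T[u]$ is finite (the upper bound gives $T[u] \le C \int_{\mathbb{R}^d} \|\nabla u\|^2 < \infty$, while the lower bound gives $T[u] \ge \int_{\mathbb{R}^d}\|\nabla u\|^2 \ge 0$). I would then select a minimizing sequence $\Phi^{(n)} = (\phi_1^{(n)}, \ldots, \phi_{N_u}^{(n)}) \in \mathcal{M}_u$ with $L(\Phi^{(n)}) \to T[u]$. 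Since $L(\Phi^{(n)})$ is bounded, the gradients $\nabla \phi_i^{(n)}$ are bounded in $L^2$; the membership constraints fix $\|\phi_i^{(n)}\|_{L^2} = 1$ for $i \le N_u-1$ and $\|\phi_{N_u}^{(n)}\|_{L^2}^2 = \int_{\mathbb{R}^d} u^2/2 - N_u + 1$, so the entire sequence is bounded in $H^1(\mathbb{R}^d; \mathbb{R}^{N_u})$. Passing to a subsequence, I obtain a weak limit $\Phi^\star = (\phi_1^\star, \ldots, \phi_{N_u}^\star)$ in $H^1$.

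The central difficulty is that weak $H^1$ convergence on the unbounded domain $\mathbb{R}^d$ does not by itself preserve the orthonormality and density constraints, since the Sobolev embedding $H^1(\mathbb{R}^d) \hookrightarrow L^2(\mathbb{R}^d)$ is \emph{not} compact and mass may escape to infinity. The observation that resolves this is that the pointwise density constraint $2\sum_{i=1}^{N_u}|\phi_i^{(n)}|^2 = u^2$ forces the uniform domination $|\phi_i^{(n)}(x)| \le |u(x)|/\sqrt{2}$ for every $i$ and every $n$, with $u$ a \emph{fixed} $L^2$ function. I would first use the Rellich--Kondrachov theorem on each ball $B_R$ to pass, via a diagonal argument over $R \to \infty$, to a further subsequence converging to $\Phi^\star$ strongly in $L^2_{\mathrm{loc}}$ and hence pointwise a.e. The domination $|\phi_i^{(n)} - \phi_i^\star|^2 \le 2(|\phi_i^{(n)}|^2 + |\phi_i^\star|^2) \le 2u^2 \in L^1(\mathbb{R}^d)$ then allows me to upgrade, by the dominated convergence theorem, to \emph{strong} convergence $\phi_i^{(n)} \to \phi_i^\star$ in $L^2(\mathbb{R}^d)$ for every $i$.

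With strong $L^2$ convergence in hand, the constraints pass to the limit routinely: orthogonality $(\phi_i^\star, \phi_j^\star)_{L^2} = \lim_n (\phi_i^{(n)}, \phi_j^{(n)})_{L^2} = 0$ and the normalizations $\|\phi_i^\star\|_{L^2} = 1$ for $i \le N_u - 1$ follow from $L^2$ continuity of the inner product, while the pointwise identity $2\sum_i |\phi_i^\star|^2 = u^2$ a.e.\ follows from the a.e.\ convergence. Hence $\Phi^\star \in \mathcal{M}_u$. Finally, since the map $\Phi \mapsto \sum_i \int_{\mathbb{R}^d} \|\nabla \phi_i\|^2$ is convex and continuous, it is weakly lower semicontinuous, so $L(\Phi^\star) \le \liminf_n L(\Phi^{(n)}) = T[u]$; combined with $L(\Phi^\star) \ge T[u]$ from $\Phi^\star \in \mathcal{M}_u$, this gives $L(\Phi^\star) = T[u]$, so the infimum in \eqref{eq:firstT} is attained. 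I expect the only genuine obstacle to be the loss of compactness at infinity addressed in the second paragraph; the remaining steps are a routine application of the direct method.
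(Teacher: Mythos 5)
Your proof is correct and takes essentially the same route as the paper's: the direct method with a minimizing sequence bounded in $H^1$ via Lemma~\ref{lem:Lbounds}, extraction of a subsequence converging weakly in $H^1(\mathbb{R}^d)$, strongly in $L^2(\mathbb{R}^d)$, and pointwise a.e., passage of the density and orthonormality constraints to the limit, and weak lower semicontinuity of the gradient term. In fact your write-up is more careful than the paper's on one point: the paper simply asserts strong $L^2(\mathbb{R}^d)$ convergence of the subsequence, which does not follow from weak $H^1$ convergence alone on an unbounded domain, whereas your argument---local compactness via Rellich--Kondrachov giving a.e.\ convergence, then the uniform domination $|\phi_i^{(n)}| \le |u|/\sqrt{2}$ forced by the pointwise constraint together with dominated convergence---supplies exactly the justification that step needs.
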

\begin{proof}
Let $\Phi_i = (\phi^1_i, \phi^2_i, ..., \phi^{N_u}_i)$ be a minimizing sequence for $\inf_{\Phi \in \mathcal{M}_u} L(\Phi)$. It follows from Lemma~\ref{lem:Lbounds} that the sequences $\{\phi^k_i\}_{i=1}^\infty$ are bounded in $H^1(\mathbb{R}^d)$ for all $1 \leq k \leq N_u$. Hence $\phi^k_i$ converge along a common subsequence to some $\phi^k$ weakly in $H^1(\mathbb{R}^d)$. We get, using Theorem 3.3 in \cite{Lieb83}, strong convergence in $L^2(\mathbb{R}^d)$ 
and hence pointwise almost everywhere in $\mathbb{R}^d$ along a common subsequence, for all $1 \leq k \leq N_u$. For the sake of convenience we notate this subsequence with same subscript $i$. Using pointwise convergence, we see that $2\sum_{k=1}^{N_u}(\phi^k)^2 = u^2$. Similarly, using dominated convergence theorem, we can establish that $(\phi^1, \phi^2, .., \phi^{N_u})$ is almost orthonormal since $|\phi^k_i| \leq |u|$ for all $1 \leq k \leq N_u$. It follows that $\Phi= (\phi^1, \phi^2, .., \phi^{N_u}) \in \mathcal{M}_u$. Using the weak limit we also have $\int_{\mathbb{R}^d} \|\nabla \phi^k\|^2  \leq \liminf_{i \rightarrow \infty} \int_{\mathbb{R}^d} \|\nabla \phi_i^k\|^2 $, which concludes the proof. 
\end{proof}
This theorem show that there exists a minimimizer $\Phi \in \mathcal{M}_u$ of $L$. We note that this is trivially true for all $u$ such that $N_u = 1$, in which case, $T(u) = \frac{1}{2}\int_{\mathbb{R}^d} \|\nabla u\|^2 $. Also note that $T(\sqrt{\rho})=T_{RKS}(\rho)$ for every $\rho \in \mathcal{I}_{2N}$. This shows that $T$ extends $T_{\text{RKS}}$ for all densities with $\sqrt{\rho} \in H^1(\mathbb{R}^d)$. We work with $T$ instead of $T_{\text{RKS}}$ in what follows.

\subsection{Minimization problem} \label{sec:minprob}
Let $u \in H^1(\mathbb{R}^d)$ such that $N_u \geq 2$. For some $N \in \mathbb{N}$, let $\Theta = (\theta_1, \theta_2, ..., \theta_{N-1}) \in L^1_{loc}(\mathbb{R}^d; \mathbb{R}^{N-1})$. Then, we first define following $N-1$ functions:
\begin{equation*}
    u_k = \frac{|u|}{\sqrt{2}} \prod_{i=1}^{k-1} \cos{\theta_i}, \quad 1 \leq k < N.
\end{equation*}
Further, we construct a set of orbitals $(\varphi_k^{\Theta, u})_{k=1}^N$ as follows:
\begin{equation}
\begin{split}
\varphi^{\Theta, u}_N &= u_{N-1} \cos{\theta_{N-1}} = \frac{|u|}{\sqrt{2}} \prod_{i=1}^{N-1} \cos{\theta_i},\\
\varphi_k^{\Theta, u} &= u_k \sin \theta_k = \frac{|u|}{\sqrt{2}} \sin \theta_k \prod_{i=1}^{k-1} \cos{\theta_i}, \quad 1 \leq k\leq N-1.
\end{split}
\end{equation}
We will use the shorthand notation $\Phi^{\Theta, u} = (\varphi^{\Theta, u}_1, \varphi^{\Theta, u}_2, ...., \varphi^{\Theta, u}_N)$ wherever appropriate. As noted in the introductory remarks, the advantage with this parameterization is that these orbitals satisfy the following constraint exactly:
$$ 2\sum_{i=1}^N |\varphi^{\Theta, u}_i|^2 = u^2.$$
We define a collection of angle fields as follows:
\begin{equation}
\begin{split}
\mathcal{S}_u = \{(\theta_1, \theta_2, ..., \theta_{N_u-1}) \in H^1_{loc}(\mathbb{R}^d; \mathbb{R}^{N_u-1})\;:\; u_k \nabla \theta_k \in L^2(\mathbb{R}^d; \mathbb{R}^3), \; 1 \leq k\leq N_u-1 \}. 
\end{split}
\end{equation}
It follows that $\Theta \in \mathcal{S}_u$ implies $\varphi^{\Theta, u}_k$ belong to $H^1(\mathbb{R}^d)$ for all $1 \leq k \leq N_u$. Further, we define $\Tilde{L}: \mathcal{S}_u  \rightarrow \mathbb{R}^+$ as follows:
$$\Tilde{L}(\Theta) = L(\Phi^{\Theta, u}) = \sum_{k=1}^{N_u} \int_{\mathbb{R}^d}\|\nabla \varphi_k^{\Theta, u}\|^2.$$ 
Using induction it follows that
\begin{displaymath}
    \begin{split}
    \Tilde{L}(\Theta) &= \frac{1}{2} \int_{\mathbb{R}^d} \|\nabla u\|^2  + \frac{1}{2} \int_{\mathbb{R}^d} u^2 \|\nabla \theta_1\|^2  
    + \frac{1}{2} \int_{\mathbb{R}^d} u^2 \Big(  \cos^2{\theta_1}\|\nabla \theta_2\|^2 \\ & +  \cos^2{\theta_1}\cos^2{\theta_2}\|\nabla \theta_3\|^2 + ... + \cos^2{\theta_1}\cos{\theta_2}^2 ....\cos^2{\theta_{N_u-2}} \| \nabla \theta_{N_u-1}\|^2 \Big). \\
    \end{split}
\end{displaymath}
We define next a collection of angle fields that correspond to almost orthonormal orbitals:
\begin{displaymath}
    \begin{split}
        \mathcal{A}_u = \{ \Theta \in L^1_{loc}(\mathbb{R}^d; \mathbb{R}^{N_u-1}) \;:\; &\langle \varphi^{\Theta, u}_i, \varphi^{\Theta, u}_j \rangle_{L^2(R^d, \mathbb{R})} = 0, \, 1 \le i < j \le N_u, \\& \lVert \varphi^{\Theta, u}_i\rVert_{L^2(\mathbb{R}^d)} = 1,\, 1 \le i < N_u\}.
    \end{split}
\end{displaymath}

\begin{lemma}
    \label{lem:upper_bound}
    Let $u \in H^1(\mathbb{R}^d)$ with $N_u \geq 2$. Then $\mathcal{S}_{u} \cap \mathcal{A}_{u}$ is non-empty.
\end{lemma}
\begin{proof}
    For all $1 \leq j < N_u$ define 
    \begin{equation*}
        \theta_j = \arcsin{(g(f(x^1) - k+1) - g(f(x^1) +k))}
    \end{equation*}
using Lemma~\ref{lem:Lbounds}. Note that $supp(\theta_i) \cap supp(\theta_j)$ has zero measure when $i \neq j$. It is then easy to see that $\Theta \in \mathcal{S}_u$ since $0 < \cos{\sqrt{\frac{1+\alpha}{2}}}\cos{\theta_i} \leq 1$. Further note that $\phi^{\Theta, u}_k = \phi_k$, with $\phi_k$'s defined as in Lemma~\ref{lem:Lbounds}. Hence $\Theta \in \mathcal{A}_u$.
\end{proof}

This lemma establishes that for every $\Theta \in \mathcal{S}_{u} \cap \mathcal{A}_{u}$, 
there exists a corresponding $\Phi^{\Theta,u} \in \mathcal{M}_u$. However, it is not necessarily true that for every $\Phi \in \mathcal{M}_u$ there exists a $\Theta \in \mathcal{S}_{u} \cap \mathcal{A}_{u}$ such that $\Phi^{\Theta, u} = \Phi$.
Attempting to define $\Theta$ pointwise in order to satisfy this equality 
leads to two difficulties. 
First, $\arcsin(\phi/u)$ can be discontinuous 
even when $\phi$ itself is continuous. 
Second, using the $\mathrm{atan2}$ function to determine 
$\theta_{N_u-1}$ from the ratio of 
$\phi^{\Theta,u}_{N_u-1}$ and $\phi^{\Theta,u}_{N_u}$ will also yield discontinuity. 
Explicit counterexamples for both situations are provided in 
Appendix~\ref{app:discontinuity}.
This implies that we cannot replace minimization over $\Phi$ with one over $\Phi^{\Theta, u}$. To resolve this issue, we start by defining
\begin{equation}
        \mathcal{M}_u^+ = \{ \Phi = (\phi_1, \phi_2, ..., \phi_{N_u}) \in \mathcal{M}_u \;:\; \phi_{N_u} \geq 0 \}.
\end{equation}
As remarked in Lemma~\ref{lem:Lbounds}, the set $\mathcal{M}_u^+$ is also non-empty. Following the same reasoning as in Theorem~\ref{thm:min_exist} we define 
\begin{equation} \label{eq:Tplusmin}
    T^+ (u) = \min_{\Phi \in \mathcal{M}_u^+} L(\Phi).
\end{equation}
Since $\mathcal{M}_u^+ \subset \mathcal{M}_u$, it holds that $T(u) \leq T^+(u)$. Further, using Lemma~\ref{lem: RKS_minimizer} it follows that $T(u) = T^+(u)$ when $u^2$ is a non-interacting $v$-representable ground state density. We resolve the issues raised earlier in this section by recourse to the following theorem.

\begin{theorem} \label{thm:energy_derivation}  Let $u \in H^1(\mathbb{R}^d)$ with $N_u \geq 2$ and define a subset of $\mathcal{S}_{u}$ as follows:
\begin{equation*}
    \mathcal{S}_{u}^+ = \{ \Theta \in \mathcal{S}_{u} \;:\; |\theta_i| \le \pi/2, \; 1 \le i < N_u \}. 
\end{equation*}
Then, 
\begin{equation} \label{eq:Tplusinfnew}
    T^+(u) = \lim_{\delta \downarrow 0^+}\min_{\Theta \in \mathcal{S}_{u}^+ \cap   \mathcal{A}^\delta_u } \Tilde{L}(\Theta),
\end{equation}
where, for every $\delta >0$, 
\begin{displaymath}
    \begin{split}
           \mathcal{A}^\delta_u = \{\Theta \in L^1_{loc}(\mathbb{R}^d; \mathbb{R}^{N_u-1}) \;:\; &|\langle \varphi^{\Theta, u}_i, \varphi^{\Theta, u}_j \rangle_{L^2(R^d, \mathbb{R})}| \leq \delta, \, 1 \le i < j \le N_u, \\ & |\lVert \varphi^{\Theta, u}_i\rVert_{L^2(\mathbb{R}^d)} -1| \leq  \delta,\, 1 \le i < N_u\}. 
    \end{split}
\end{displaymath}
\end{theorem}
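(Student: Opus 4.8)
The plan is to prove the identity by establishing the two inequalities between $T^+(u)$ and the quantity $R_0 := \lim_{\delta\downarrow0^+}R(\delta)$, where $R(\delta) := \inf_{\Theta\in\mathcal{S}_u^+\cap\mathcal{A}_u^\delta}\tilde{L}(\Theta)$. Since the feasible set $\mathcal{S}_u^+\cap\mathcal{A}_u^\delta$ shrinks as $\delta$ decreases, the map $\delta\mapsto R(\delta)$ is non-increasing in $\delta$, so that $R(\delta)$ increases as $\delta\downarrow0$; hence $R_0=\sup_{\delta>0}R(\delta)$ exists in $[0,\infty]$ and it suffices to prove $R_0\le T^+(u)$ and $R_0\ge T^+(u)$. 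I would prove the upper bound first, since it also certifies $R_0<\infty$.

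For the upper bound $R_0\le T^+(u)$, I would start from a minimizer $\Phi^\ast\in\mathcal{M}_u^+$ of \eqref{eq:Tplusmin}, whose existence follows from the argument of Theorem~\ref{thm:min_exist}, and attempt to realize it by angle fields. Writing $u_k^\ast=(u^2/2-\sum_{i<k}(\varphi_i^\ast)^2)^{1/2}$, the identities $\sin\theta_k=\varphi_k^\ast/u_k^\ast$ and $\cos\theta_k=u_{k+1}^\ast/u_k^\ast\ge0$ formally recover angle fields with $|\theta_k|\le\pi/2$, and the induction formula for $\tilde{L}$ shows that the total recovered energy equals $L(\Phi^\ast)=T^+(u)$. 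The obstruction---precisely the phenomenon behind the counterexample of Appendix~\ref{app:discontinuity}---is that these recovered angles need not lie in $H^1_{loc}$: on the set where $u_{k+1}^\ast\to0$ (equivalently $\cos\theta_k\to0$, so $\sin\theta_k\to\pm1$) the map $\arcsin$ fails to be Lipschitz, so the chain rule may produce an angle field outside $H^1_{loc}$ even though its \emph{weighted} energy $\int(u_k^\ast)^2\|\nabla\theta_k\|^2$ is finite. I would therefore regularize: modify $\Phi^\ast$ on the shrinking sets $\{u_{k+1}^\ast<1/n\}$ so that the partial densities stay bounded below, which keeps the recovered angles in the Lipschitz range $|\theta_k|\le\pi/2-\eta_n$ and hence in $\mathcal{S}_u^+$, producing a sequence $\Theta_n$ with $\Phi^{\Theta_n,u}\to\Phi^\ast$ in $H^1$. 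Orbital convergence forces $\tilde{L}(\Theta_n)=L(\Phi^{\Theta_n,u})\to L(\Phi^\ast)=T^+(u)$ and, since $\Phi^\ast$ is exactly orthonormal and $L^2$-limits preserve inner products, drives the orthonormality defects to $0$; thus for each fixed $\delta>0$ eventually $\Theta_n\in\mathcal{A}_u^\delta$, giving $R(\delta)\le\lim_n\tilde{L}(\Theta_n)=T^+(u)$ and therefore $R_0\le T^+(u)$.

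For the lower bound $R_0\ge T^+(u)$, I would argue by weak compactness as in Theorem~\ref{thm:min_exist}. Pick $\delta_n\downarrow0$ and near-minimizers $\Theta_n\in\mathcal{S}_u^+\cap\mathcal{A}_u^{\delta_n}$ with $\tilde{L}(\Theta_n)\le R(\delta_n)+1/n$, and set $\Phi_n=\Phi^{\Theta_n,u}$, which satisfies $L(\Phi_n)=\tilde{L}(\Theta_n)$ and, by construction, $2\sum_k|\varphi^{\Theta_n,u}_k|^2=u^2$ pointwise. Because $|\theta_{n,i}|\le\pi/2$ forces $\cos\theta_{n,i}\ge0$, the last orbital obeys $\varphi^{\Theta_n,u}_{N_u}\ge0$. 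The energies $L(\Phi_n)$ are bounded (since $R_0<\infty$ from the upper bound), so each $\phi_{n,k}$ is bounded in $H^1$; passing to a common subsequence gives weak $H^1$, strong $L^2$, and a.e.\ pointwise limits $\phi_{\infty,k}$. Pointwise convergence preserves $2\sum_k\phi_{\infty,k}^2=u^2$ and the sign $\phi_{\infty,N_u}\ge0$, while the $L^2$ convergence together with the defect bounds $|\langle\varphi^{\Theta_n,u}_i,\varphi^{\Theta_n,u}_j\rangle|\le\delta_n$ and $|\,\|\varphi^{\Theta_n,u}_i\|-1\,|\le\delta_n$, with $\delta_n\to0$, upgrades the approximate orthonormality to exact orthonormality in the limit. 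Hence $\Phi_\infty\in\mathcal{M}_u^+$, and weak lower semicontinuity of the Dirichlet energy yields $T^+(u)\le L(\Phi_\infty)\le\liminf_n L(\Phi_n)\le R_0$.

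The main obstacle is the regularization in the second paragraph: controlling the $H^1$ cost of the modification near the degeneracy set $\{u_{k+1}^\ast=0\}$, which may have positive measure, while simultaneously keeping the orthonormality defects vanishing. Everything else is a standard direct-method and lower-semicontinuity argument; the role of the restrictions $\phi_{N_u}\ge0$ (in $\mathcal{M}_u^+$) and $|\theta_i|\le\pi/2$ (in $\mathcal{S}_u^+$) is exactly to make the angle-to-orbital correspondence single-valued and sign-consistent, so that the $\delta\downarrow0$ relaxation recovers $T^+(u)$ rather than the strictly larger infimum one would obtain by imposing exact orthonormality directly on the angle-field parametrization.
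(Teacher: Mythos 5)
Your overall architecture --- an upper bound obtained by realizing a minimizer $\Phi^\ast\in\mathcal{M}_u^+$ of \eqref{eq:Tplusmin} through angle fields, and a lower bound by weak compactness plus lower semicontinuity --- is the same as the paper's. Your lower-bound paragraph is sound and essentially complete; it actually spells out the compactness argument that the paper compresses into the single assertion $\lim_{\delta\downarrow 0^+}\inf_{\Phi\in\mathcal{M}_u^\delta}L(\Phi)=\inf_{\Phi\in\mathcal{M}_u^+}L(\Phi)=T^+(u)$, including the key observation that $|\theta_i|\le\pi/2$ forces $\varphi^{\Theta,u}_{N_u}\ge 0$, so that limits land in $\mathcal{M}_u^+$ rather than merely in $\mathcal{M}_u$.

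The gap is in the upper bound, and you have named it yourself: the regularization is only a plan, and the plan you propose is the hard route. Modifying $\Phi^\ast$ on the sets $\{u^\ast_{k+1}<1/n\}$ faces three obstacles: (i) since $u^\ast_{k+1}\le|u|/\sqrt{2}$ pointwise, the partial densities cannot be bounded below by a constant wherever $u$ itself is small, so the modification would have to be relative to $u$; (ii) any surgery on one orbital must be compensated in the others to keep the pointwise constraint $2\sum_i\phi_i^2=u^2$ exact, since otherwise the modified family is not representable by angle fields for this $u$ at all; and (iii) the degeneracy set $\{u^\ast_{k+1}=0\}$ can have positive measure, so there is no a priori reason the $H^1$ cost and the orthonormality defect of the surgery vanish as $n\to\infty$. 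The paper avoids orbital surgery entirely by regularizing the \emph{angle extraction} instead: it defines recursively $u_{k,\epsilon_k}=\tfrac{|u|}{\sqrt{2}}\prod_{i<k}\cos\theta_{i,\epsilon_i}$ and $\theta_{k,\epsilon_k}=\sin^{-1}\bigl(\phi_k/(u_{k,\epsilon_k}+\epsilon_k)\bigr)$. Because $\phi_k^2\le u_{k,\epsilon_k}^2$, the argument of $\sin^{-1}$ has modulus at most $u_{k,\epsilon_k}/(u_{k,\epsilon_k}+\epsilon_k)<1$, so each $\theta_{k,\epsilon_k}$ lies in $H^1_{loc}$ with $|\theta_{k,\epsilon_k}|<\pi/2$, and no division by a vanishing quantity ever occurs. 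The induced orbitals are then $\varphi^{\Theta_\epsilon,u}_k=\bigl(u_{k,\epsilon_k}/(u_{k,\epsilon_k}+\epsilon_k)\bigr)\phi_k$, which satisfy the density constraint automatically; the excess energy $\int_{\mathbb{R}^d}\phi_k^2\,\epsilon_k^2\|\nabla u_{k,\epsilon_k}\|^2/(u_{k,\epsilon_k}+\epsilon_k)^4$ is dominated by $\|\nabla u_{k,\epsilon_k}\|^2$ and vanishes pointwise, so it tends to $0$ by dominated convergence, the last orbital being handled via Proposition~\ref{prop:H1_level_set}; a second dominated-convergence argument kills the orthonormality defects, which is what places $\Theta_\epsilon$ in $\mathcal{S}_u^+\cap\mathcal{A}_u^\delta$ for any fixed $\delta>0$. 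In short, your two-sided strategy is correct and your lower bound stands, but the half that carries the content of Theorem~\ref{thm:energy_derivation} needs the paper's denominator-shift (or an equivalent device) in place of the unresolved orbital-modification step.
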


\begin{proof}
The proof is split into two parts. In the first part, we show that there exists a sequence $\{\Theta_{\epsilon}\}$ in $\mathcal{S}_u^+$ such that $\Tilde{L}(\Theta_{\epsilon})$ converges to $T^+(u)$. Later we show that a subsequence of the original sequence belongs to $\mathcal{A}_u^\delta$ and conclude the proof of the theorem.

We begin by noting that since $u \neq 0$ a.e., $\sup |u| > 0$. However, $\inf |u| = 0$ because $u \in L^2(\mathbb{R}^d)$. This can cause trouble for weak differentiability when dealing with division by $u$. To resolve this, we let $\Phi = (\phi_1, \phi_2, ..., \phi_{N_u}) \in \mathcal{M}_u^+$ be such that 
 $T^+(u) = L(\Phi)$. For $1 > \epsilon_1 > 0$  define $u_{1, \epsilon_1} = \frac{|u|}{\sqrt{2}}$ and
\begin{equation*}
    h_{1, \epsilon_1} =\frac{\phi_1}{u_{1, \epsilon_1} + \epsilon_1}.
\end{equation*}
Recall that $\phi_1^2 \leq u_{1,\epsilon_1}^2$, and $\phi_1 \in H^1(\mathbb{R}^d)$. Therefore, $h_{1,\epsilon_1}$ is strictly bounded by $1$ and belongs to $H^1_{loc}(\mathbb{R}^d)$. Hence, defining $\theta_{1,\epsilon_1} = \sin^{-1}(h_{1,\epsilon_1})$, we see that $|\theta_{1,\epsilon_1}| < \frac{\pi}{2}$ and $\theta_{1,\epsilon_1}$ is weakly differentiable. We recursively define, for $1 \leq k \leq N_u-1$ and $1 > \epsilon_k > 0$, 
\begin{equation}
\label{eq:mainthm_theta_def}
u_{k, \epsilon_k} = \frac{|u|}{\sqrt{2}} \prod_{i=1}^{k-1} \cos{\theta_{i, \epsilon_i}}, \quad h_{k, \epsilon_k} = \frac{\phi_k} {u_{k,\epsilon_{k}} + \epsilon_k}, \quad \theta_{k, \epsilon_k} = \sin^{-1}(h_{k,\epsilon_k}).
\end{equation}
For $1 \leq k \leq N_u-1$ we have
$$2\phi_k^2 \leq u^2 - 2\sum_{i=1}^{k-1}\phi_{i}^2,\quad\text{and}\quad u_{1, \epsilon_1}^2 - \sum_{i=1}^{k-1}u^2_{i,\epsilon_i}\sin^2{\theta_{i,\epsilon_i}} = u_{k, \epsilon_k}^2.$$ Using recursion we get $\phi_{k}^2 \leq u_{k,\epsilon_k}^2$, whence $|h_{k, \epsilon_k}| < 1$. Further, for $1 \leq k \leq N_u-1$, we have $u_{k, \epsilon_k} \geq 0$ and $u_{k, \epsilon_k} = 0$ only when $u=0$. Note that the inequalities shown above make all the functions well-defined. Similar to the case of $\theta_{1, \epsilon_1}$, it is easy to see that $|\theta_{k, \epsilon_k}| < \pi/2$ and $\theta_{k, \epsilon_k}$ is weakly differentiable for $1 \leq k \leq N_u-1$. 

Since $\sin{\theta_{k, \epsilon_k}} = h_{k, \epsilon_k}$ and $\cos{\theta_{k, \epsilon_k}} = \sqrt{1-h_{k, \epsilon_k}^2}$ for $1 \leq k \leq N_u-1$, we have $\cos{\theta_{k, \epsilon_k}} \nabla \theta_{k, \epsilon_k} = \nabla h_{k, \epsilon_k}$, and $-\sin{\theta_{k, \epsilon_k}} \nabla \theta_{k, \epsilon_k} = \nabla \sqrt{1-h_{k, \epsilon_k}^2}$, and hence that $$\left\lVert \nabla \theta_{k, \epsilon_k} \right \rVert^2 = \left\lVert\nabla h_{k, \epsilon_k}\right \rVert^2 + \left\lVert\nabla \sqrt{1-h_{k, \epsilon_k}^2}\right \rVert^2.$$

Noting that $u_{1, \epsilon_1} \in H^1(\mathbb{R}^d)$ we conclude that $\theta_{1, \epsilon_1} \in H^1_{loc}(\mathbb{R}^d)$ and $u_{1, \epsilon_1}\theta_{1, \epsilon_1} \in H^1(\mathbb{R}^d)$. Since $u_{k, \epsilon_k}=u_{k-1, \epsilon_{k-1}}\cos{\theta_{k-1, \epsilon_{k-1}}}$ we get, using a recursive argument, that both $u_{k, \epsilon_k}$ and $u_{k, \epsilon_k}\theta_{k, \epsilon_k}$ belongs to  $H^1(\mathbb{R}^d)$ for all $ 1 \leq k \leq N_u-1$. Moreover, $\theta_{k, \epsilon_k} \in H^1_{loc}(\mathbb{R}^d)$ for all $ 1 \leq k \leq N_u-1$. 

Let $\Theta_\epsilon = (\theta_{1, \epsilon_1}, \theta_{2, \epsilon_2}, \ldots, \theta_{N_u-2, \epsilon_{N_u-2}}, \theta_{N_u-1, \epsilon_{N_u-1}} )$ and note that $\Theta_\epsilon \in \mathcal{S}_{u}^+$. Then construct the corresponding set of orbitals $\Phi^{\Theta_\epsilon, u} = (\varphi_1^{\Theta_\epsilon, u}, \varphi_2^{\Theta_\epsilon, u}, ..., \varphi_{N_u}^{\Theta_\epsilon, u})$. It is easy to see that $\varphi_k^{\Theta_\epsilon, u} = u_{k, \epsilon_k} \sin{\theta_{k, \epsilon_k}}$ for $1 \leq k \leq N_u-1$. We thus have
\begin{equation}    \label{eq:mainthm_varphi_phi_rel}   \varphi_k^{\Theta_\epsilon, u} = \phi_k -  \epsilon_k \sin{\theta_{k, \epsilon_k}}, \quad 1 \leq k \leq N_u-1.
\end{equation}
It follows that $\varphi_k^{\Theta_\epsilon, u}$ converges to $\phi_k$ pointwise almost everywhere and thus $u_{k, \epsilon_k}$ converges to $\sqrt{u^2/2 - \sum_{i=1}^{k-1}\phi_i^2}$ pointwise almost everywhere for $1 \leq k \leq N_u-1$; note that $u_{1, \epsilon_1} = \frac{|u|}{\sqrt{2}}$. Further, using definition in equation~\eqref{eq:mainthm_theta_def}, we also have   
\begin{equation}
\label{eq:mainthm_theta_def_2}
\varphi_k^{\Theta_\epsilon, u} = \left(\frac{u_{k, \epsilon_k}}{u_{k, \epsilon_k} + \epsilon_k}\right) \phi_k,
\end{equation}
and thus, 
\begin{equation*}
    \nabla \varphi_k^{\Theta_\epsilon, u} = \phi_k \nabla \left(\frac{u_{k, \epsilon_k}}{u_{k, \epsilon_k} + \epsilon_k}\right)  + \left(\frac{u_{k, \epsilon_k}}{u_{k, \epsilon_k} +\epsilon_k}\right) \nabla \phi_k, \quad 1 \leq k \leq N_u-1.
\end{equation*}
 
 We introduce the following compact notation: $$\lim_{\epsilon \rightarrow 0^+} = \lim_{\epsilon_1 \rightarrow 0^+}\lim_{\epsilon_2 \rightarrow 0^+}...\lim_{\epsilon_{N_u-2} \rightarrow 0^+}\lim_{\epsilon_{N_u-1} \rightarrow 0^+}.$$ If it holds that, for $1 \leq k \leq N_u-1$, $$\lim_{\epsilon \rightarrow 0^+} \int_{\mathbb{R}^d} \phi_k^2 \left\lVert\nabla \left(\frac{u_{k, \epsilon_k}} {u_{k, \epsilon_k} + \epsilon_k}\right)\right\rVert^2 = 0,$$ we then get 
\begin{equation}
\label{eq:energy_equivalence}
    \lim_{\epsilon \rightarrow 0^+} \int_{\mathbb{R}^d} \left\lVert \nabla \varphi_k^{\Theta_\epsilon, u} \right\rVert^2 =  \int_{\mathbb{R}^d} \left\lVert \nabla \phi_k \right\rVert^2,
\end{equation}
using the Cauchy-Schwarz and the Dominated Convergence Theorem (DCT). Noting that
\begin{equation*}
    \phi_k^2 \left\lVert \nabla \left(\frac{u_{k, \epsilon_k}} {u_{k, \epsilon_k + \epsilon_k}}\right) \right\rVert^2 = \phi_k^2 \epsilon_k^2 \frac{\|\nabla u_{k, \epsilon_k}\|^2}{(u_{k, \epsilon_k} + \epsilon_k)^4}, 
\end{equation*}
and using the facts that $\phi_k^2 \leq (u_{k, \epsilon_k} + \epsilon_k)^2$ and $u_{k, \epsilon_k} \in H^1(\mathbb{R}^d)$, while not depending on $\epsilon_{k}$, we get using DCT that 
$$\lim_{\epsilon \rightarrow 0^+} \int_{\mathbb{R}^d} \phi_k^2 \left \lVert\nabla \frac{u_{k, \epsilon_k}} {u_{k, \epsilon_k} + \epsilon_k}\right\rVert^2 =0, \quad 1 \leq k \leq N_u-1.$$
Similarly, since 
$(\varphi^{\Theta_\epsilon, u}_{N_u})^2 =  u_{1, \epsilon_1}^2 - \sum_{i=1}^{N_u-1} (\varphi^{\Theta_\epsilon, u}_i)^2 $, using equation~\eqref{eq:mainthm_theta_def_2} and Proposition~\ref{prop:H1_level_set} it is easy to see that
\begin{equation*}
        \lim_{\epsilon \rightarrow 0^+}\int_{\mathbb{R}^d} \| \nabla \varphi^{\Theta_\epsilon, u}_{N_u} \|^2 = \int_{\mathbb{R}^d} \left \lVert \nabla \sqrt{u_{1, \epsilon_1}^2 - \sum_{i=1}^{N_u-1} \phi_i^2 } \right\rVert^2 . 
\end{equation*}

Using 
\begin{equation}    \label{eq:positivity_requirement}
     \phi_{N_u} =\sqrt{ u_{1, \epsilon_1}^2 - \sum_{i=1}^{N_u-1}\phi_i^2},
\end{equation} 
we get 
\begin{equation*}
            \lim_{\epsilon \rightarrow 0^+}\int_{\mathbb{R}^d} \| \nabla \varphi^{\Theta_\epsilon, u}_{N_u} \|^2 =  \int_{\mathbb{R}^d} \| \nabla \phi_{N_u} \|^2.
\end{equation*}
Further, using equation~\eqref{eq:energy_equivalence} we obtain
\begin{displaymath}
    \begin{split}
        \lim_{\epsilon \rightarrow 0^+} \Tilde{L}(\Theta_{\epsilon}) = T^+(u).
    \end{split}
\end{displaymath}
This concludes the first part of the proof. 

The second part of the proof focuses on the orthonormality relation. Since, $|\varphi_i^{\Theta_{\epsilon}, u}\varphi_j^{\Theta_{\epsilon}, u}| < u^2/2$, we get using DCT that
\begin{equation*}
 \lim_{\epsilon \rightarrow 0^+} \int_{\mathbb{R}^d} \varphi_i^{\Theta_{\epsilon}, u}\varphi_j^{\Theta_{\epsilon}, u} = \int_{\mathbb{R}^d} \phi_i\phi_j, \quad 1 \leq i, j \leq N.   
\end{equation*}
This implies that for every $\delta > 0$ there exists a sequence $\{\Theta_{\epsilon_j}\}_{j=1}^\infty\subset \mathcal{S}_{u}^+ \cap \mathcal{A}^\delta_u$ such that 
\begin{equation*}
        \lim_{j \rightarrow \infty} \Tilde{L}(\Theta_{\epsilon_j}) = T^+(u).
\end{equation*}
We therefore get 
\begin{equation}
\label{eq:delta_convg}
    \inf_{\Theta \in \mathcal{S}_u^+ \cap \mathcal{A}^\delta_u} \Tilde{L}(\Theta) \leq  T^+(u).
\end{equation}

Further, it is easy to see that there exists $\Theta_\delta \in \mathcal{S}_{u}^+ \cap \mathcal{A}^\delta_u$ such that $\varphi^{\Theta_\delta, u}_k$ is bounded in $H^1(\mathbb{R}^d)$ for all $1 \leq k \leq N_u$. Therefore, as $\delta \downarrow 0^+$, $\varphi^{\Theta_\delta, u}_k$ weakly convergence along a common subsequence in $H^1(\mathbb{R}^d)$ for all $1 \leq k \leq N_u$. Use the same argument as in Theorem~\ref{thm:min_exist} to see that $ T^+(u) \leq  \lim_{\delta \downarrow 0^+} \inf_{\Theta \in \mathcal{S}_{u}^+\cap \mathcal{A}^\delta_u } \Tilde{L}(\Theta)$.

Hence,
\begin{equation*}
    \lim_{\delta \downarrow 0^+} \inf_{\Theta \in \mathcal{S}_{u}^+\cap \mathcal{A}^\delta_u } \Tilde{L}(\Theta) = T^+(u), 
\end{equation*}
thereby concluding the proof.
\end{proof}

It is worth reiterating that $\mathcal{M}_u^+$ and $\mathcal{A}_u^+$ are introduced to avoid discontinuities of $\theta_{N_u}$ like the one shown in second part of Appendix~\ref{app:discontinuity}. Note that equation~\eqref{eq:positivity_requirement} would not be valid if we consider $\mathcal{M}_u$ instead of $\mathcal{M}_u^+$. Furthermore, to preserve the regularity of the
angle fields, we approximate the orbitals. This results in an infimum problem rather than a true minimum and also causes a loss of exact orthonormality---we therefore work with the approximate orthonormality condition.

Since $T(u)= T^+(u)$ when $u^2$ is a non-interacting $v$-representable density and $T^+$ can be computed using the minimization problem~\eqref{eq:Tplusinfnew}, we use $T^+$ as a practical definition of the non-interacting kinetic energy functional. 

\begin{remark}
    In practice, $u \in H^1_0\cap L^\infty(\Omega)$, where $\Omega \subseteq \mathbb{R}^d$ is open and bounded. Let $\Omega'$ be an open and bounded superset of $\Omega$ such that $dist(\Omega, \partial \Omega') > 0$. Using the fact that $C_c^\infty(\Omega')$ is dense in $H^1_0(\Omega')$ we get
    \begin{equation*}
            T^+(u) = \lim_{\delta \downarrow 0^+} \inf\{  \Tilde{L}(\Theta) : \Theta \in  C_c^\infty(\Omega', [-\pi/2, \pi/2]^{N_u-1})\cap \mathcal{A}^\delta_u\}.
    \end{equation*}
    This observation become particularly important for the numerical approximations presented in the next section since this permits us to approximate the angle fields using a smooth basis.
\end{remark}

\section{Numerical Examples in One Dimension}
We present in this section a numerical demonstration of the ideas developed in this work. We restrict ourselves to the one-dimensional context and reserve a numerical study of the three-dimensional case for a future work. The primary goal of this numerical section is to demonstrate how the extension to $T_\text{RKS}(\rho)$ developed in this work can be computed for \emph{any} $N$-representable $\rho$ numerically. We wish to clarify that we do not attempt to provide a closed form expression for $T_\text{RKS}(\rho)$. We envisage this procedure to result in a sufficiently large dataset from which a suitable expression for $T_\text{RKS}(\rho)$ can be learned---we reserve this for a future work. Towards the end of this section, we consider densities which are not $N$-representable and show 1D slices of the extended non-interacting kinetic energy functional $T^+(\rho)$. 

\subsection{1D optimization problem for non-interacting kinetic energy}
Let $\rho \in H^1\cap L^\infty(\mathbb{R})$ be a non-negative function compactly supported in $[0,1]$ and let 
\begin{equation*}
N = \text{ceil}\left( \int_0^1 \frac{\rho}{2}\right).
\end{equation*}
We introduce the angle fields $\Theta = (\theta_1, \theta_2, ..., \theta_{N-1}) \in H^1((0,1))$ and solve the minimization problem~\eqref{eq:Tplusinfnew}  involving $\Theta$ by minimizing the following functional: 
\begin{equation*}
    F(\Theta) = \frac{1}{2}\int_0^1 f(\Theta),\\ 
\end{equation*}
subject to the constraint
$$ C_{ij}(\Theta) = \bigg\lvert\int_0^1 \varphi^{\Theta}_i \varphi^{\Theta}_j -\delta_{ij}\bigg\rvert < \delta , 1 \le i, j \le N, \text{ except for }i=j=N,$$ for some $\delta > 0$ and where
$$
f(\Theta) =  \rho  (\theta_1^{'})^2  
    +  \rho \Big(  \cos^2{\theta_1} (\theta_2^{'})^2  +  \cos^2{\theta_1}\cos^2{\theta_2}(\theta_3^{'})^2 + \ldots + \cos^2{\theta_1}\cos{\theta_2}^2 \ldots \cos^2{\theta_{N-2}}  (\theta_{N-1}^{'})^2 \Big).
$$

Note that $T^+(\rho) = \lim_{\delta \downarrow 0^+}\frac{1}{2} \int_0^1 | (\sqrt{\rho})'|^2 + F(\Theta^{0, \delta})$ where $\Theta^{0, \delta}$ is the minimizer of $F$ for a given $\delta$. 

\begin{remark}
The minimization problem \eqref{eq:Tplusinfnew} requires a two level minimization over $\delta$. However, since $F(\Theta^{0, \delta})$ monotonically increases with as $\delta \downarrow 0^+$,  for computational expediency, we fix $\delta$ to be machine precision as that is the best one could do numerically. Hence the constraints become $C_{ij} \approx 0$, $\Theta^{0}$ is the minimizer of $F$ with these constraints and $T^+(\rho) \approx \frac{1}{2} \int_0^1 | (\sqrt{\rho})'|^2 + F(\Theta)$. We later see that this approximation is very close to $KSDFT$ results.  
\end{remark}

To solve this minimization problem, we approximate the angle fields using Radial Basis Functions (RBFs):
$$
\theta_i(x) = \sum_{j=1}^M t_{ij}\sigma(|x - \xi_j|), \quad 1 \le i < N,
$$
where $\sigma:\mathbb{R}_+ \to \mathbb{R}$ is the parent RBF, $\{\xi_j\}_{j=1}^M$ are the centers of $M$ nodes of the RBF approximation, and $t_{ij}$ are the corresponding coefficients. We refer back to the final remark in Section~\ref{sec:minprob} for validating the choice of the RBF approximation. We will use the shorthand notation $\sigma_j$ in place of $\sigma(|\cdot - \xi_j|)$. For the results presented in this section, we choose the multiquadric RBF function $\sigma = \sqrt{1 + (x/\epsilon)^2}$, with $\epsilon=0.1$. With this choice of approximation for $\Theta$, both $F$ and $C$ are functions only of $t_{ij}$'s, and can be solved using any appropriate nonlinear programming algorithm. For the numerical result presented in this section, we used the Sequential Least Squares Quadratic Programming (SLSQP) method \cite{Kraft88} as implemented in the \emph{SciPy} library \footnote{\url{https://docs.scipy.org/doc/scipy/reference/generated/scipy.optimize.minimize.html}}. A listing of the various derivatives needed for a gradient-based minimization algorithm are provided in Appendix~\ref{app:FCgrad}. 

For the local minimization problem using SLSQP, the choice of initial conditions plays a crucial role in determining the outcome of the optimization. Since SLSQP relies on local minimization, different initial conditions can lead to convergence towards different local minima, affecting the accuracy and consistency of the results. To account for this, we perform simulations using 100 randomly chosen initial conditions---the results are shown in  
Figure~\ref{fig:rand_ic} as a log histogram of the computed energies for four randomly generated $N$-representable densities $\rho$, with $N=3$. The dependence on the choice of initial conditions makes the theory algorithm-dependent in practice. We present in the next section a reliable choice of initial conditions to circumvent this issue.

\begin{figure}[htbp]
    \centering
    \begin{subfigure}[b]{0.45\textwidth}
        \includegraphics[width=\textwidth]{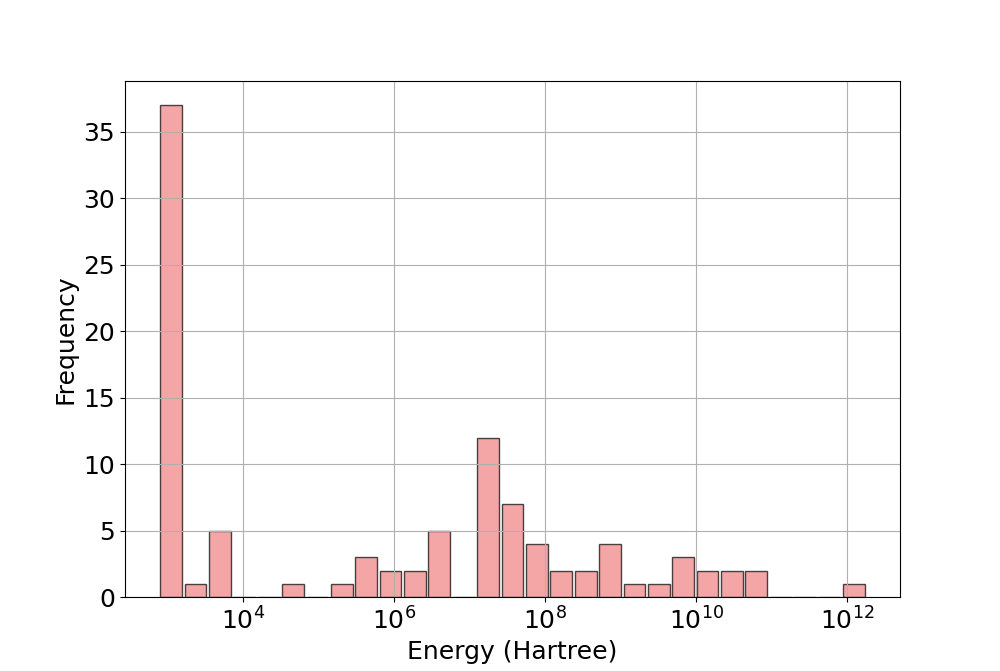}
    \end{subfigure}
    \begin{subfigure}[b]{0.45\textwidth}
        \includegraphics[width=\textwidth]{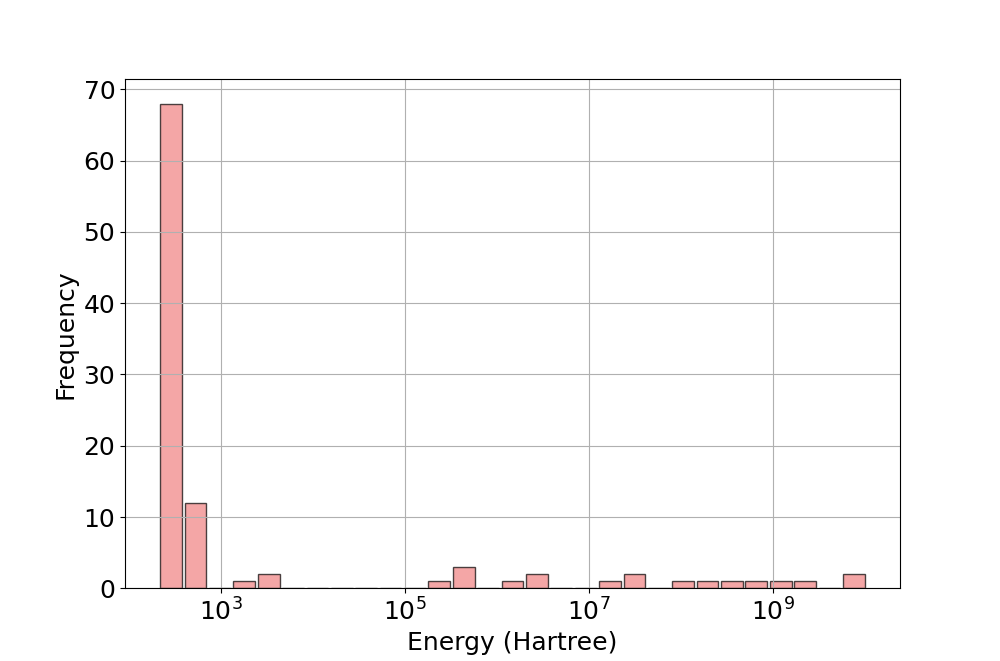}
    \end{subfigure}
    \begin{subfigure}[b]{0.45\textwidth}
        \includegraphics[width=\textwidth]{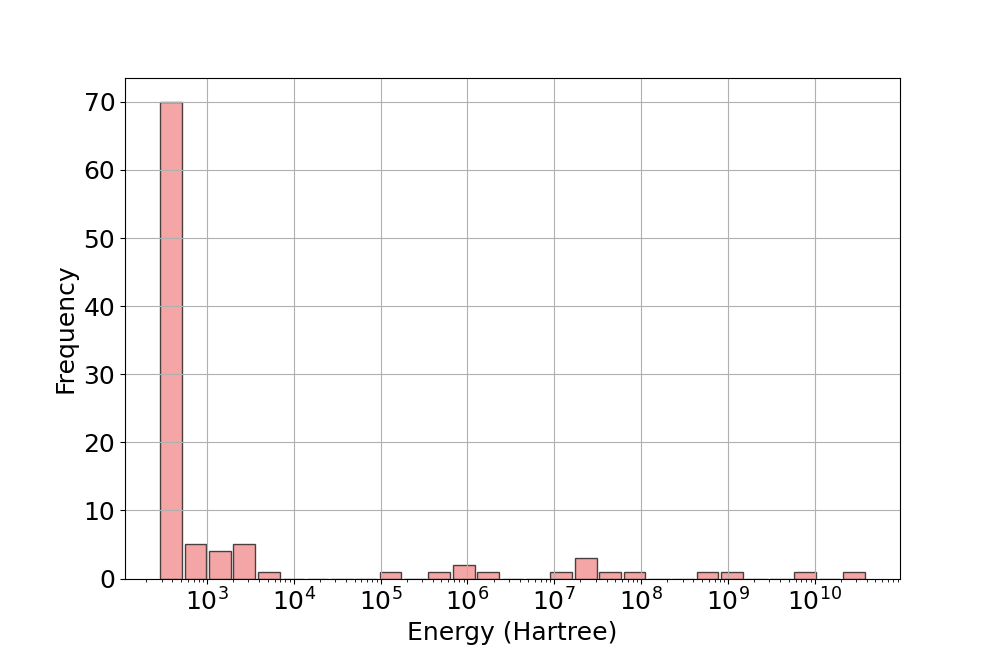}
    \end{subfigure}
    \begin{subfigure}[b]{0.45\textwidth}
        \includegraphics[width=\textwidth]{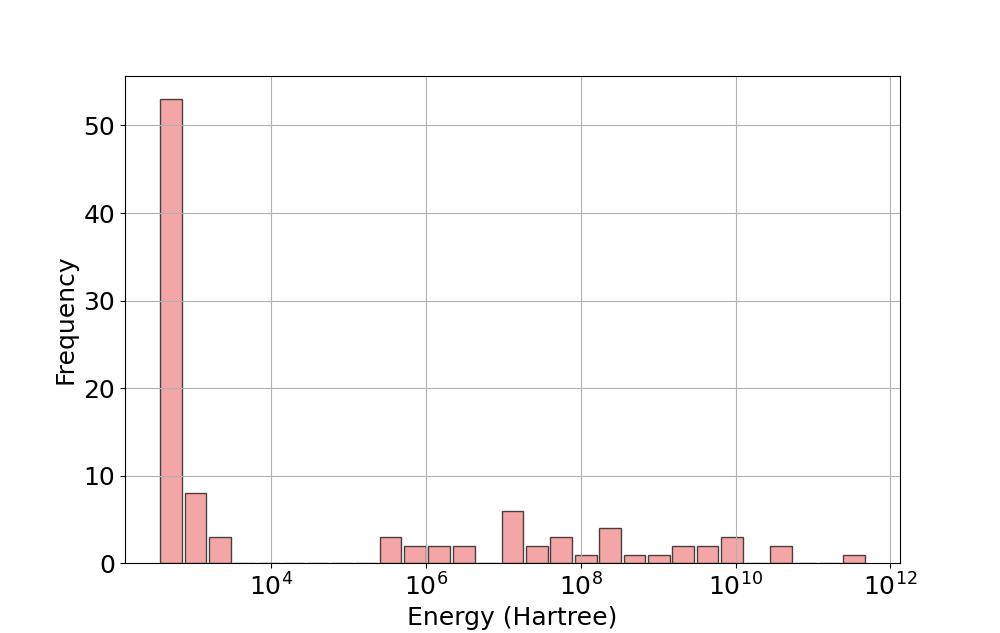}
    \end{subfigure}
    \caption{Histograms of non-interacting kinetic energy values, obtained from various initial conditions, for a given electron density, $\rho$, in each plot. Here, $ \int_{\mathbb{R}} \rho = 2N = 6$.}
    \label{fig:rand_ic}
\end{figure}
\subsection{Numerical Validation with KSDFT models}
To validate our approach, we test it using a one-dimensional Kohn-Sham system confined to the interval $[0,1]$. For a randomly generated 1D potential \( v:[0,1] \to \mathbb{R} \), we compute the first \( N \) eigenvalues and associated eigenvectors of the operator $-\frac{1}{2}\Delta + v $ on the domain \( (0,1) \) with Dirichlet boundary conditions. The non-interacting kinetic energy, $T$, of the system is then computed using equation~\eqref{eq:T_rks}. The electron density \( \rho \) is evaluated as twice the sum of the squares of the first $N$ normalized eigenvectors. Finally, we evaluate the extended non-interacting kinetic energy \( T^+ \) using this density by solving the optimization problem presened in the previous section. We validate our method by checking that $T^+$ matches with the reference value $T$ with high accuracy.

\subsubsection{Special choice of initial condition}
For the Kohn-Sham ground-state electron densities, we found that the initial conditions corresponding to the orbitals of a particle in a box produce results with desired accuracy among all the tested initial conditions. To illustrate this, we fix \( N = 3 \), discretize the domain into $1000$ points, and set the number of radial basis functions (RBF) to $20$. Figure~\ref{fig:ic_log_histogram_rho_gs} presents a log histogram showing the distribution of kinetic energies obtained from random initial conditions for four different densities $\rho$, generated from four randomly generated potentials. The special solution corresponding to an particles-in-a-box initial conditions are highlighted to illustrate their effectiveness in producing the best numerical solution. 
\begin{figure}[htbp]
    \centering
    \begin{subfigure}[b]{0.45\textwidth}
        \includegraphics[width=\textwidth]{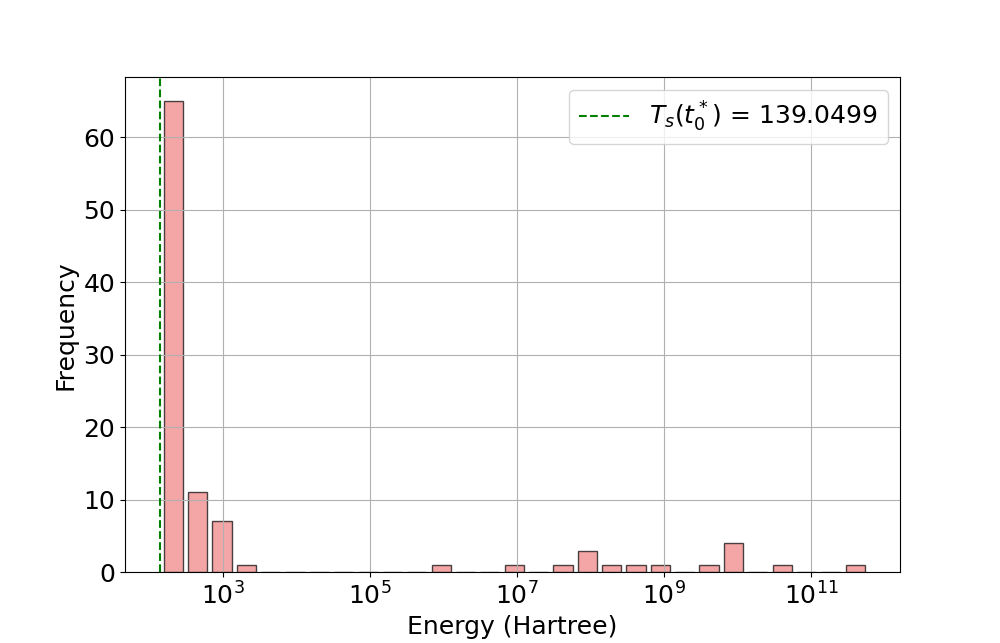}
    \end{subfigure}
    \begin{subfigure}[b]{0.45\textwidth}
        \includegraphics[width=\textwidth]{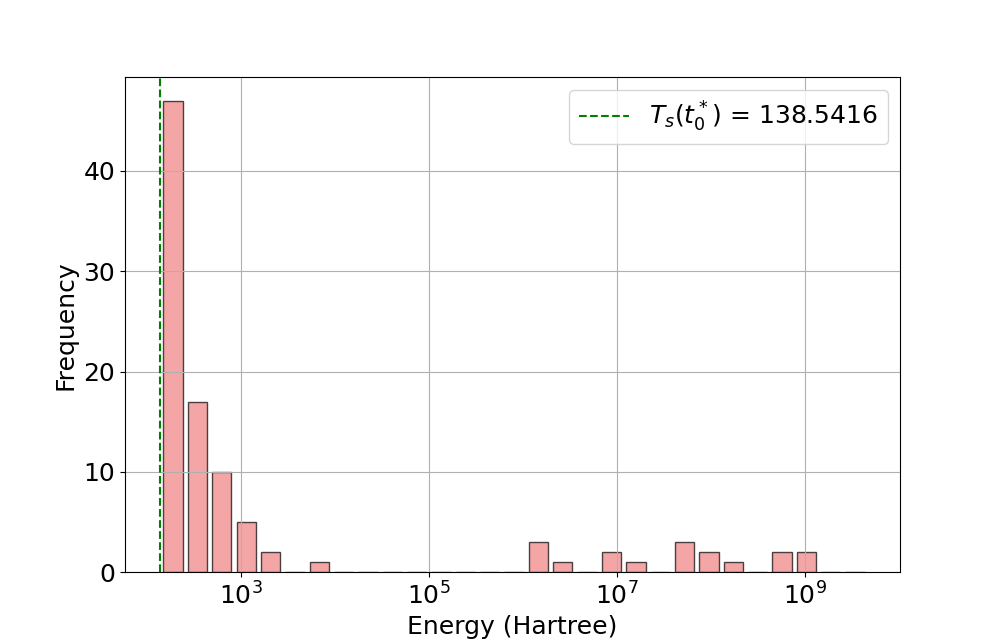}
    \end{subfigure}
    \begin{subfigure}[b]{0.45\textwidth}
        \includegraphics[width=\textwidth]{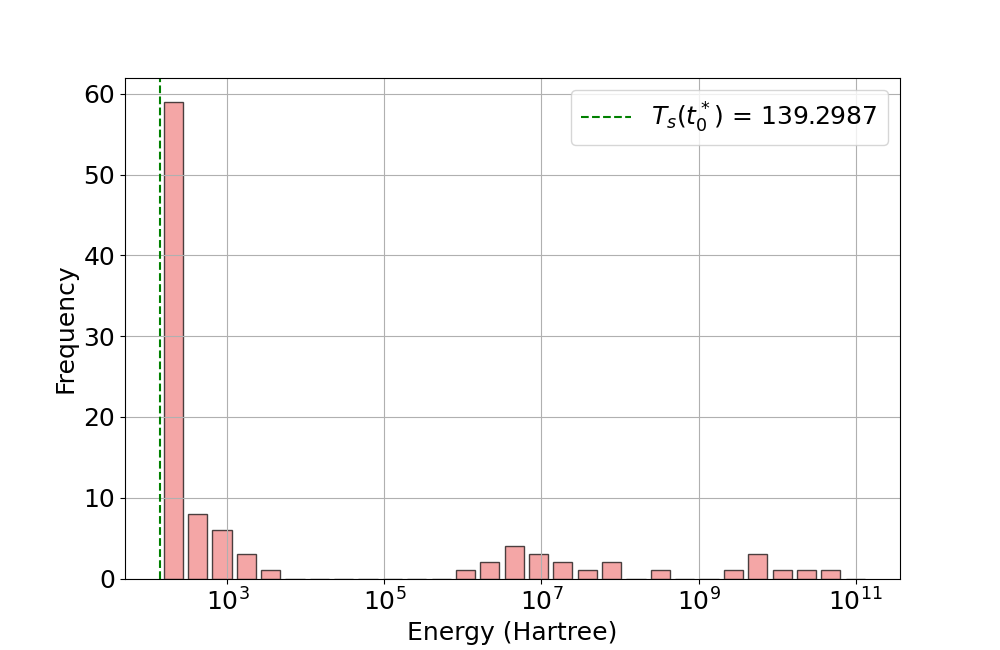}
    \end{subfigure}
    \begin{subfigure}[b]{0.45\textwidth}
        \includegraphics[width=\textwidth]{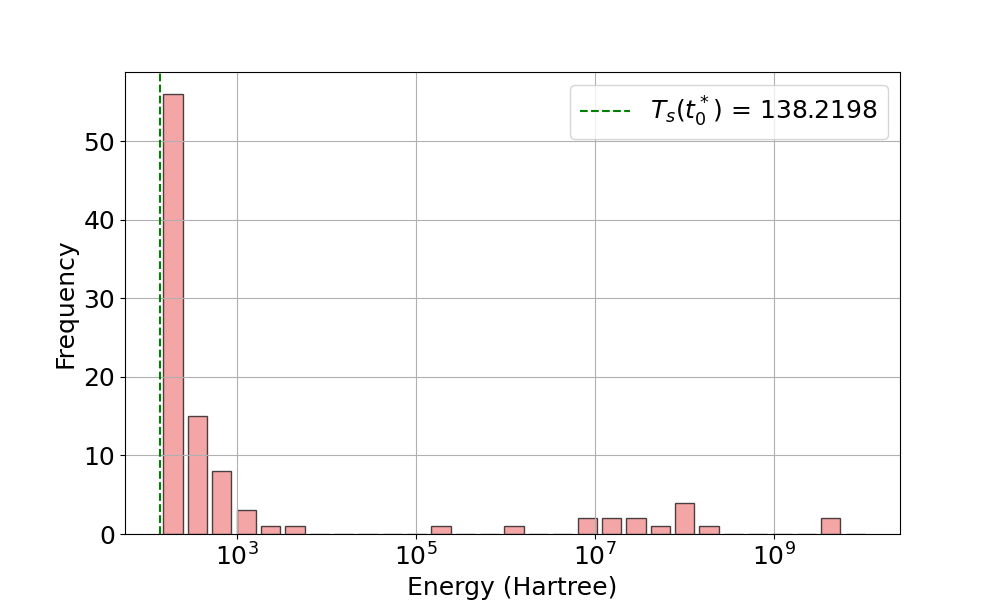}
    \end{subfigure}
    \caption{Histograms of non-interacting kinetic energy values, obtained from various initial conditions, for a given electron density, $\rho$, in each plot. Here, $ \int_{\mathbb{R}} \rho = 2N = 6$ and $t_0^*$ is initial condition obtained from the orbitals of particle in a box. The value $T_s(t_0^*)$, , representing the non-interacting kinetic energy evaluated using $t_0^*$, indicates that this particular choice of initial condition yields a kinetic energy close to the minimum.}
\label{fig:ic_log_histogram_rho_gs}
\end{figure}
Figure~\ref{fig:ic_histogram_rho_gs} shows a zoomed-in view of the histograms in Figure~\ref{fig:ic_log_histogram_rho_gs}, and, in particular, plots the histogram of the relative kinetic energy computed using our algorithm with respect to a Kohn-Sham calculation. It is evident that the particle-in-a-box initialization produces kinetic energies that are lower than those obtained from all other random initializations. We thus fix this special choice of initial conditions for all the simulations presented in the rest of this section. Note that this section focuses on demonstrating the effectiveness of the special initial condition rather than comparing with KS-DFT. Therefore, we have not used the optimal radial basis function or domain discretization.
\begin{figure}[htbp]
    \centering
    \begin{subfigure}[b]{0.45\textwidth}
        \includegraphics[width=\textwidth]{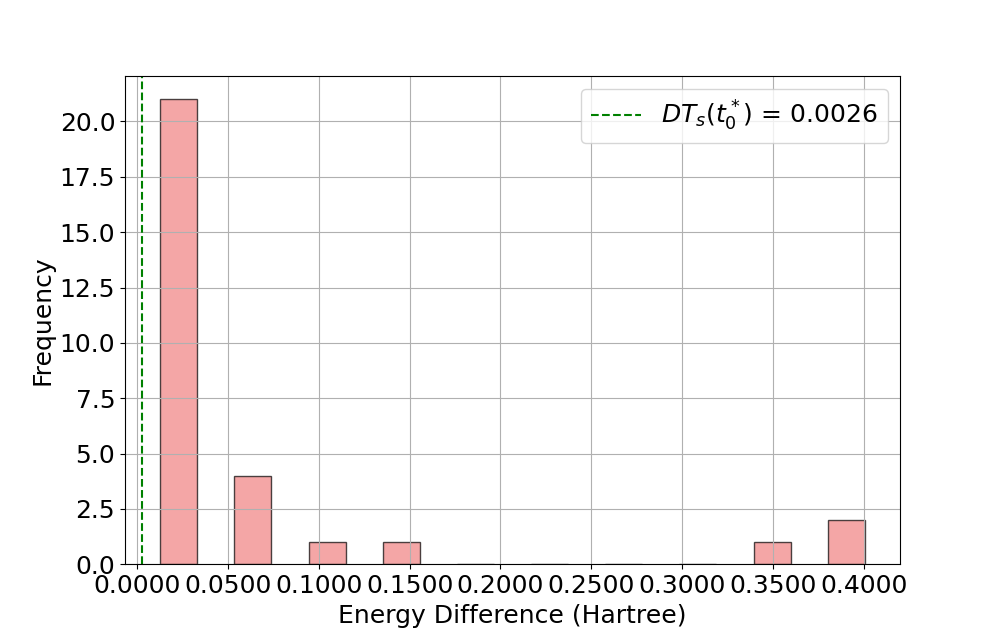}
    \end{subfigure}
    \begin{subfigure}[b]{0.45\textwidth}
        \includegraphics[width=\textwidth]{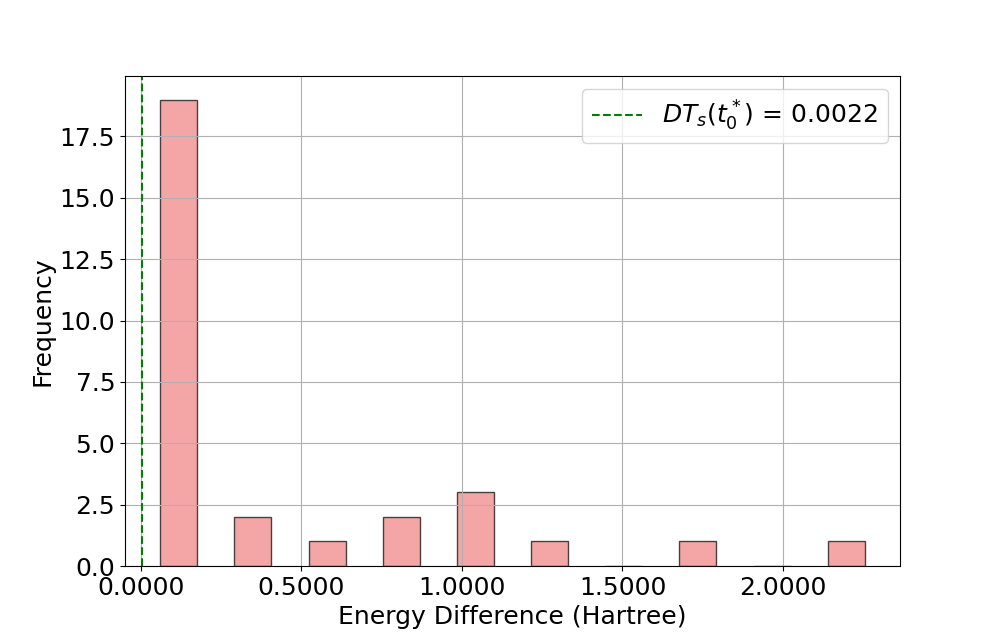}
    \end{subfigure}
    \begin{subfigure}[b]{0.45\textwidth}
        \includegraphics[width=\textwidth]{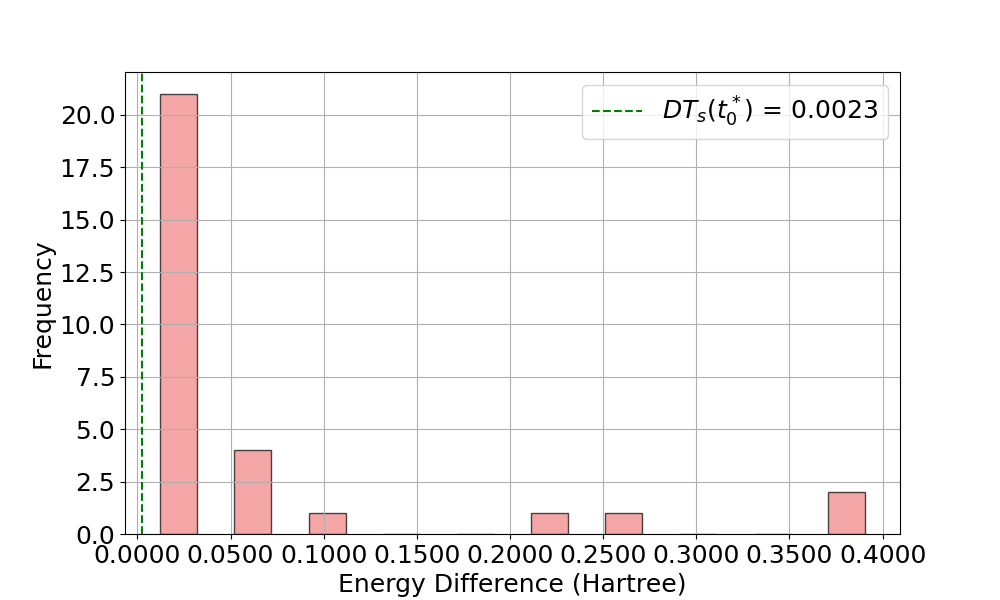}
    \end{subfigure}
    \begin{subfigure}[b]{0.45\textwidth}
        \includegraphics[width=\textwidth]{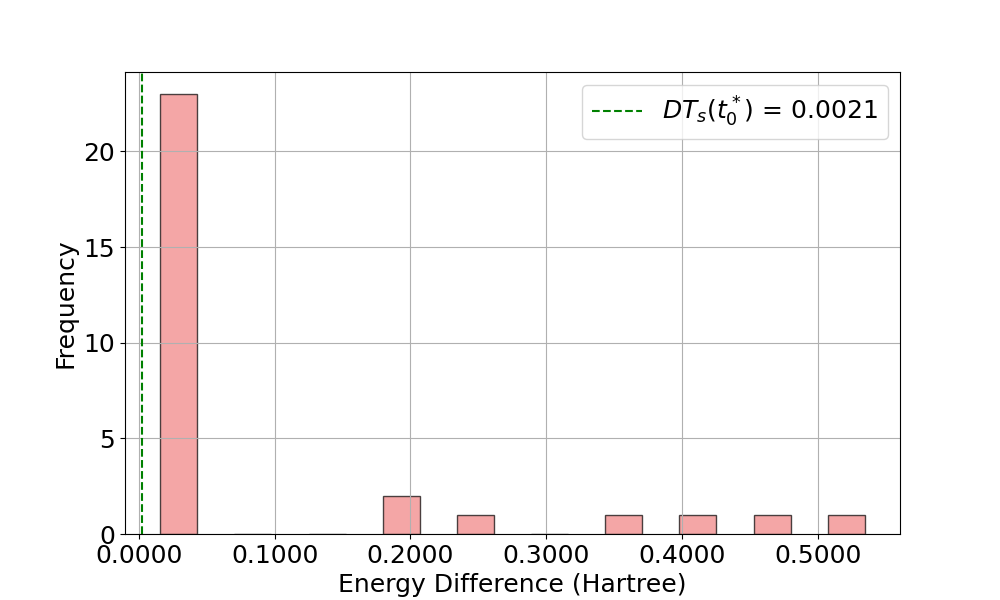}
    \end{subfigure}
    \caption{Histograms showing the lowest ten percent of non-interacting kinetic energy values, obtained from various initial conditions, relative to the non-interacting kinetic energy of the Kohn Sham system for a given electron density, $\rho$, in each plot. Here, $ \int_{\mathbb{R}} \rho = 2N = 6$, $t_0^*$ is initial condition obtained from the orbitals of particle in a box. The value $DT_s(t_0^*)$ is difference between non-interacting kinetic energy from the special initial condition and the non-interacting kinetic energy of the Kohn Sham system.}
\label{fig:ic_histogram_rho_gs}
\end{figure}

\subsubsection{Comparison with Kohn-Sham models}
We now perform a detailed numerical comparison of the kinetic energies computed using the methods developed in this work and the standard Kohn-Sham model. We generate 500, 250, and 100 Kohn-Sham density samples for \( N = 2, 3, 4 \), respectively. For each case, different resolutions of the domain discretization (\( n_e \)) and numbers of radial basis functions (\( n_{rbf} \)) are employed to achieve desired accuracy. Figures~\ref{fig:N2}, \ref{fig:N3}, and \ref{fig:N4} demonstrate that the error between the computed kinetic energy and the Kohn sham solution, per electron, remains within $1$ kcal/mol across these samples.

\subsubsection{Computational Cost}
For each value of \( N \), we also compute the average runtime and average total number of iterations required for SLSQP convergence. These results are summarized in Table~\ref{tab:avg_scaling}, which indicates that the algorithm exhibits \( \mathcal{O}(N^5) \) scaling. We emphasize that the goal of this work is not to develop a fast algorithm, but rather to demonstrate that the proposed approach for evaluating non-interacting kinetic energy is valid and produces results consistent with standard KSDFT calculations.
\begin{figure}[ht]
    \centering
    \includegraphics[width=0.6\textwidth]{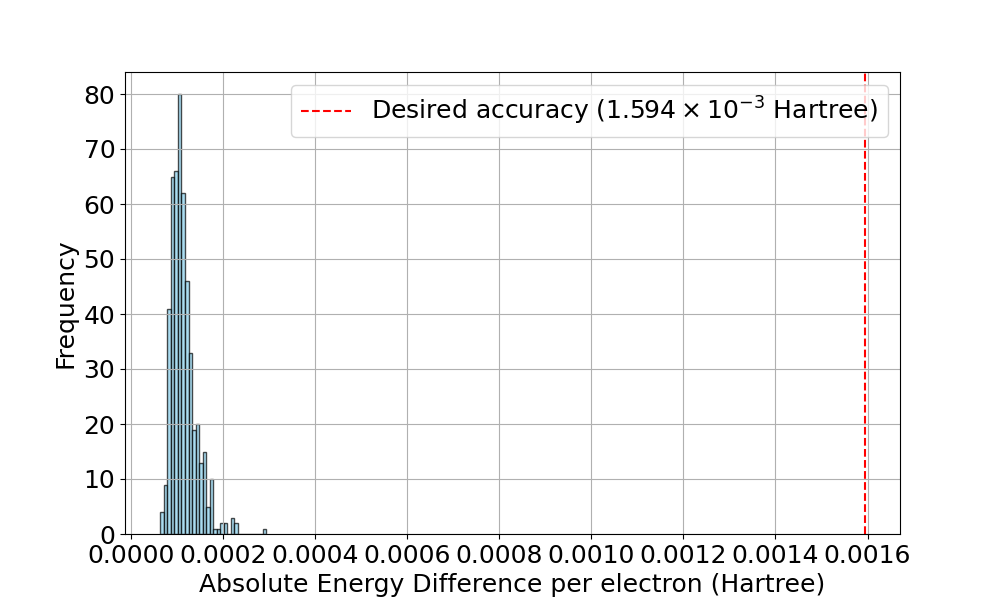}
    \caption{Error between the Kohn Sham non-interacting kinetic energy and non-interacting kinetic energy obtained from our minimisation method with special initial condition, per electron, for a given electron density, $\rho$, such that $ \int_{\mathbb{R}} \rho = 2N = 4$. Here, $n_e =1000$ and $n_{rbf} =15$ is used.}
    \label{fig:N2}
\end{figure}

\begin{figure}[ht]
    \centering
    \includegraphics[width=0.6\textwidth]{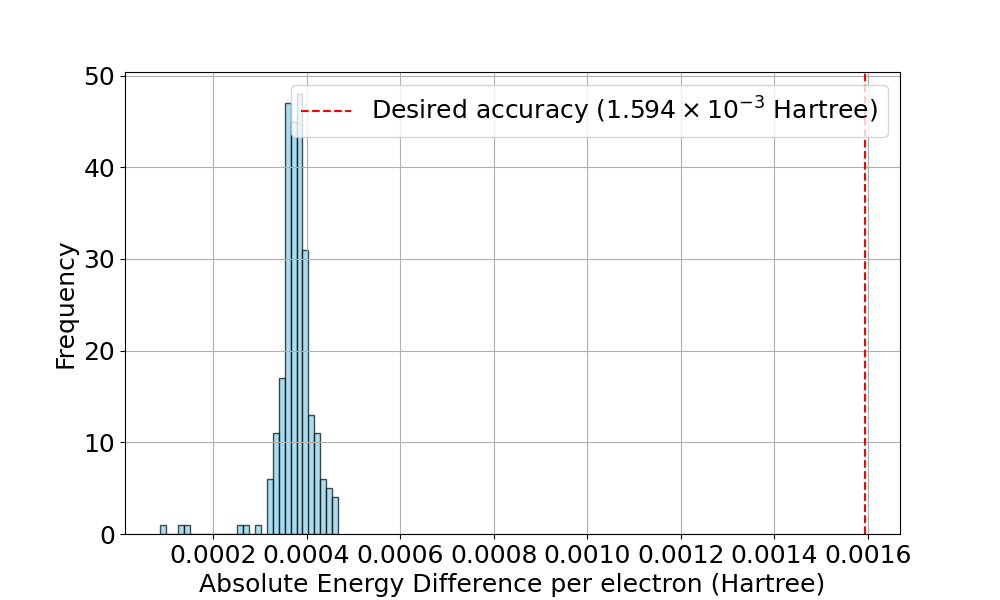}
    \caption{Error between the Kohn Sham non-interacting kinetic energy and non-interacting kinetic energy obtained from our minimisation method with special initial condition, per electron, for a given electron density, $\rho$, such that $ \int_{\mathbb{R}} \rho = 2N = 6$. Here, $n_e =1000$ and $n_{rbf} =20$ is used.}
    \label{fig:N3}
\end{figure}

\begin{figure}[ht]
    \centering
    \includegraphics[width=0.6\textwidth]{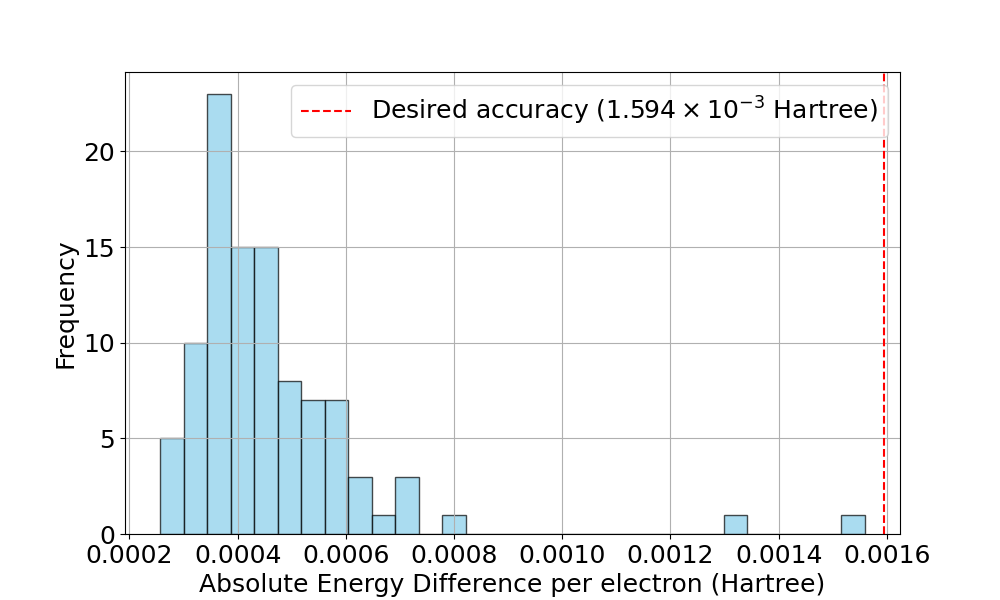}
    \caption{Error between the Kohn Sham non-interacting kinetic energy and non-interacting kinetic energy obtained from our minimisation method with special initial condition, per electron, for a given electron density, $\rho$, such that $ \int_{\mathbb{R}} \rho = 2N = 8$. Here, $n_e =2000$ and $n_{rbf} =70$ is used.}
    \label{fig:N4}
\end{figure}

\begin{table}[ht]
    \centering
    \begin{tabular}{|c|c|c|c|}
        \hline
        $N$ & Avg. Iterations & Avg. Time (s) & Relative time \\
        \hline
        2 & 22 & 4.6 & 1 \\
        3 & 100 & 41 & 8.9 \\
        4 & 100 & 148 & 32.2 \\
        \hline
    \end{tabular}
    \caption{Average SLSQP iterations and time for different values of $N$. The relative time refers to the time measured with respect to the average simulation time for $N=2$.}
    \label{tab:avg_scaling}
\end{table}

\subsubsection{Rate of Convergence}
We generate 10 Kohn-Sham density samples for \( N = 2, 3, 4 \), maintaining the same domain discretization resolution (\( n_e \)) and number of radial basis functions (\( n_{rbf} \)) as previously used to achieve desired accuracy in each case. This time, our goal is to examine the convergence rate of the SLSQP algorithm with respect to the iteration number when minimizing $F(\Theta)/2$ for each case. Figures~\ref{fig:Nc2}, \ref{fig:Nc3}, and \ref{fig:Nc4} illustrate the convergence behavior. We were not able to obtain a clear convergence order based on the results, but they are presented here for the sake of completeness.
\begin{figure}[ht]
    \centering
 \includegraphics[width=0.6\textwidth]{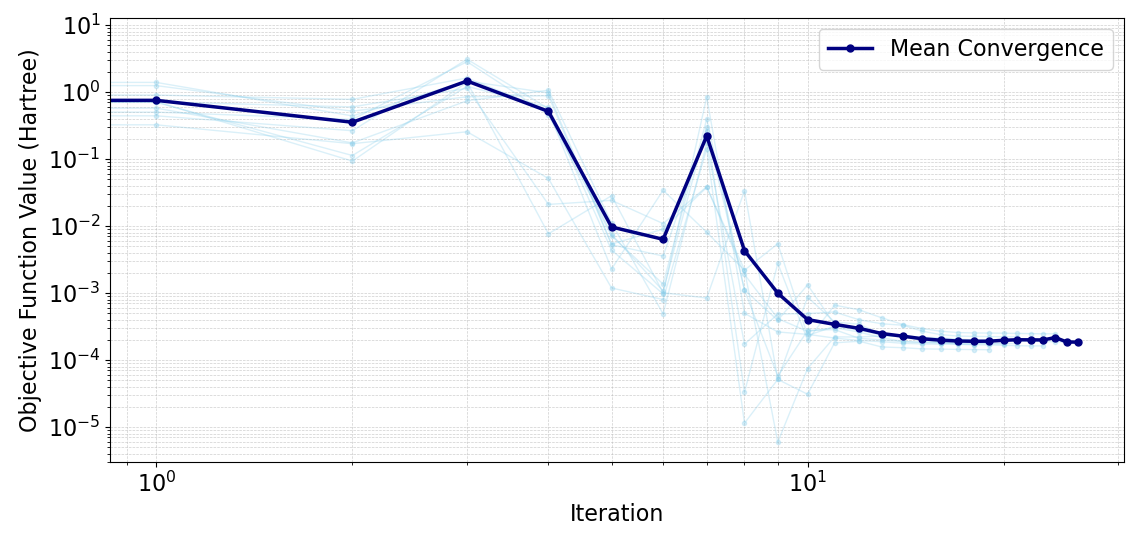}
    \caption{Rate of convergence of SLSQP for $N=2$. Here $n_e =1000$ and $n_{rbf} =15$ is used.}
    \label{fig:Nc2}
\end{figure}

\begin{figure}[ht]
    \centering
    \includegraphics[width=0.6\textwidth]{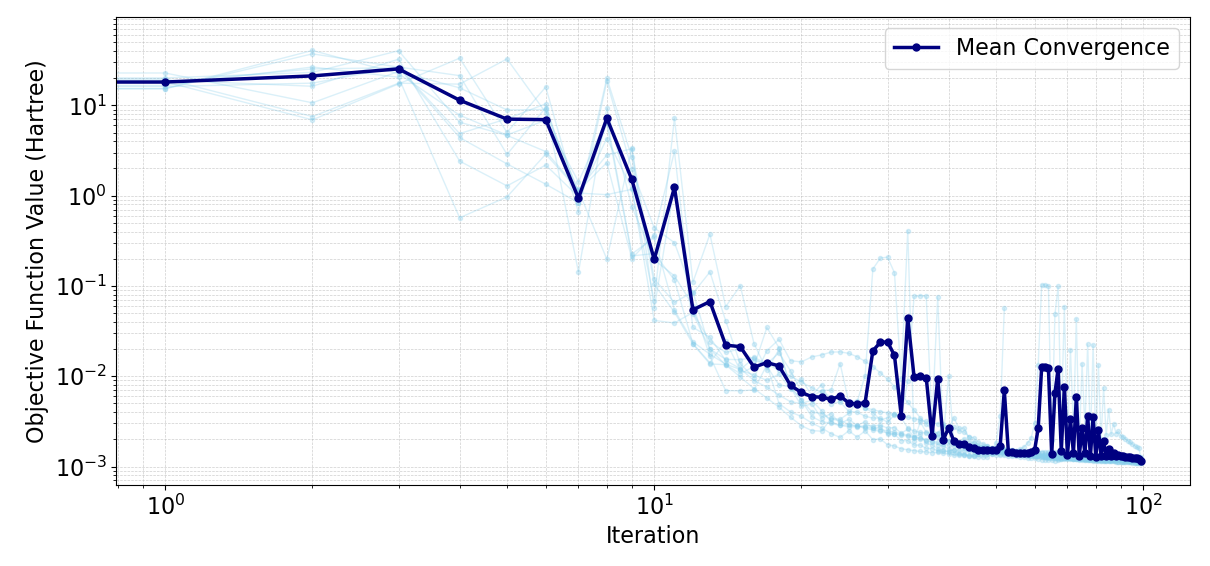}
    \caption{Rate of convergence of SLSQP for $N=3$. Here $n_e =1000$ and $n_{rbf} =20$ is used.}
    \label{fig:Nc3}
\end{figure}

\begin{figure}[ht]
    \centering
    \includegraphics[width=0.6\textwidth]{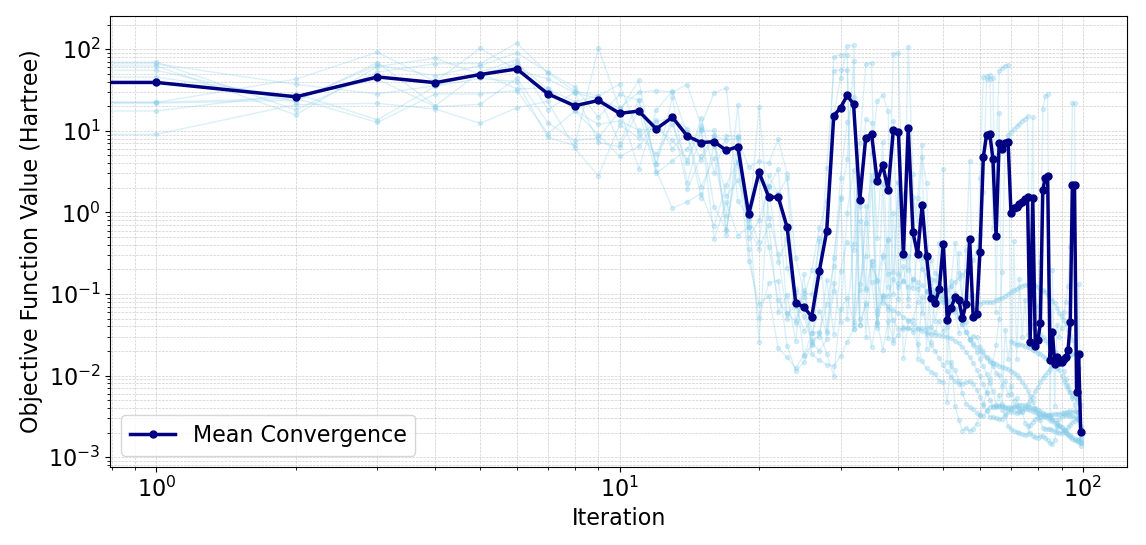}
    \caption{Rate of convergence of SLSQP for $N=4$. Here $n_e =2000$ and $n_{rbf} =70$ is used.}
    \label{fig:Nc4}
\end{figure}

\subsection{A numerical studies of the properties of $T^+$}
The extended non-interacting kinetic energy $T^+$ was introduced in this work primarily for the sake of mathematical analysis. We present in this section some numerical simulations to investigate the convexity of \( T^+ \). For this section, we use $\rho$ instead of $u$ as the primary variable of $T^+$, but retain the functional symbol for notational convenience.  Let \( \rho_1 \) and \( \rho_2 \) be two one-dimensional Kohn-Sham densities. We define a linear interpolation between them as \( \rho_t = (1 - t)\rho_1 + t\rho_2 \) for all \( t \in [0,1] \). Although \( \sqrt{\rho_t} \in H^1(\mathbb{R}) \), it may not necessarily be \( v \)-representable or \( N \)-representable. We analyze three cases: 
(i) $\int \rho_1 = \int \rho_2 = 2N$ for \( N = 2, 3, 4 \); 
(ii) \( \int \rho_1 = 4 \) and \( \int \rho_2 = 6 \); and 
(iii) \( \int \rho_1 = 4 \) and \( \int \rho_2 = 8 \). Results for the first case are shown in Figure~\ref{fig:TN2}, \ref{fig:TN3}, and \ref{fig:TN4}. These results provide numerical evidence for the convexity of the functional \( T^+ \) with respect to $\rho$ on the convex set $\mathcal{I}_{2N}$ for $N=2,3,4$, respectively. This, however, does not establish convexity of $T^+$ by itself. Results for the second case are shown in Figure~\ref{fig:TN23}, illustrating concavity in direction from $\int \rho_t$ moving from $4$ to $8$. Finally, results for the third case are shown in Figure~\ref{fig:TN24}; a discontinuous jump around \( \int \rho_t = 6 \) occurs in this case. These results indicate that $T^+$ is not convex as a function of $\rho$, and is not in fact continuous, but when restricted to a fixed $N$, there is preliminary numerical evidence that it is convex over $\mathcal{I}_{2N}$. 
\begin{figure}[htbp]
    \centering
    \begin{subfigure}[b]{0.45\textwidth}
      \includegraphics[width=\textwidth]{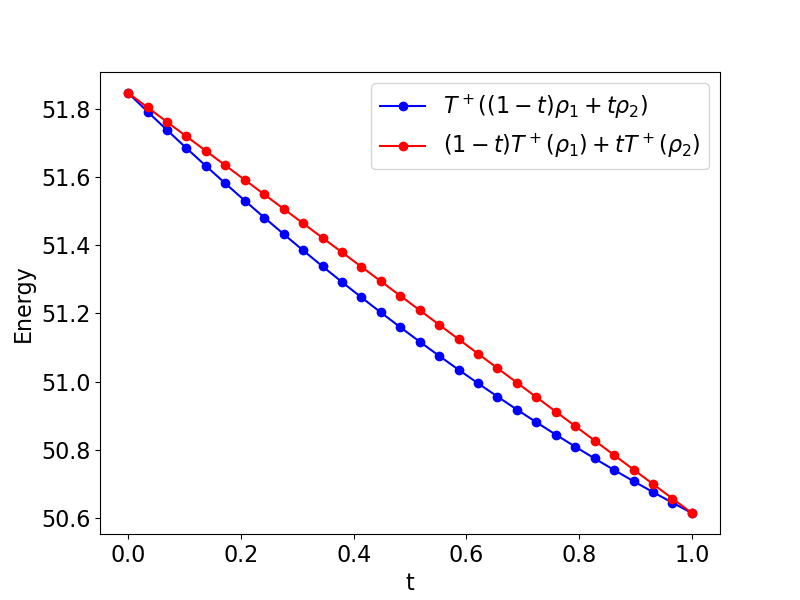}
    \end{subfigure}
    \begin{subfigure}[b]{0.45\textwidth}
        \includegraphics[width=\textwidth]{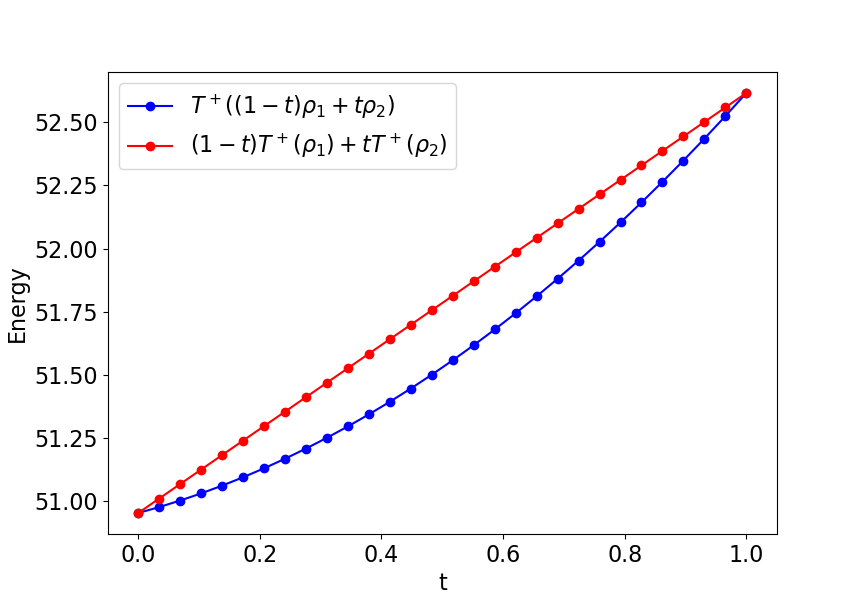}
    \end{subfigure}
    \begin{subfigure}[b]{0.45\textwidth}
        \includegraphics[width=\textwidth]{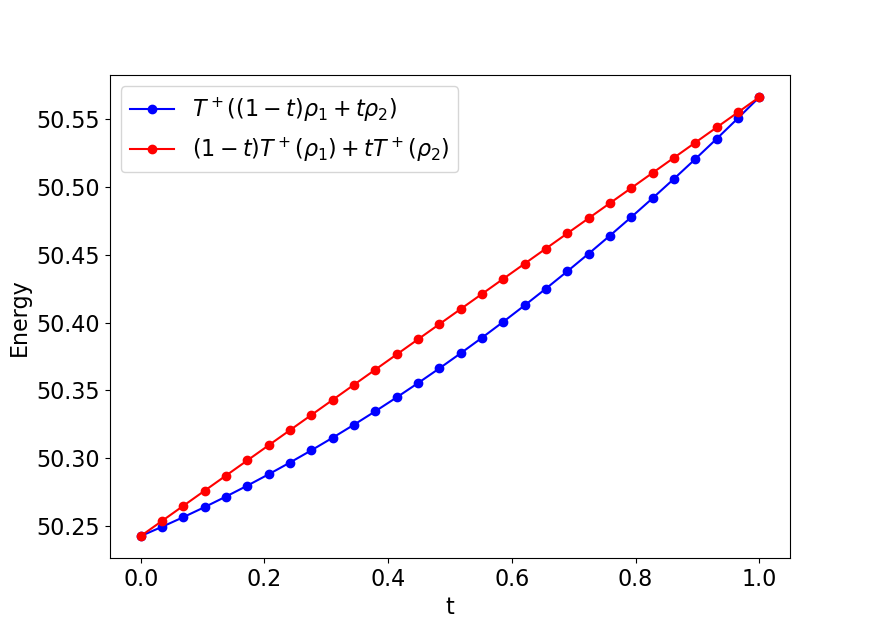}
    \end{subfigure}
    \begin{subfigure}[b]{0.45\textwidth}
        \includegraphics[width=\textwidth]{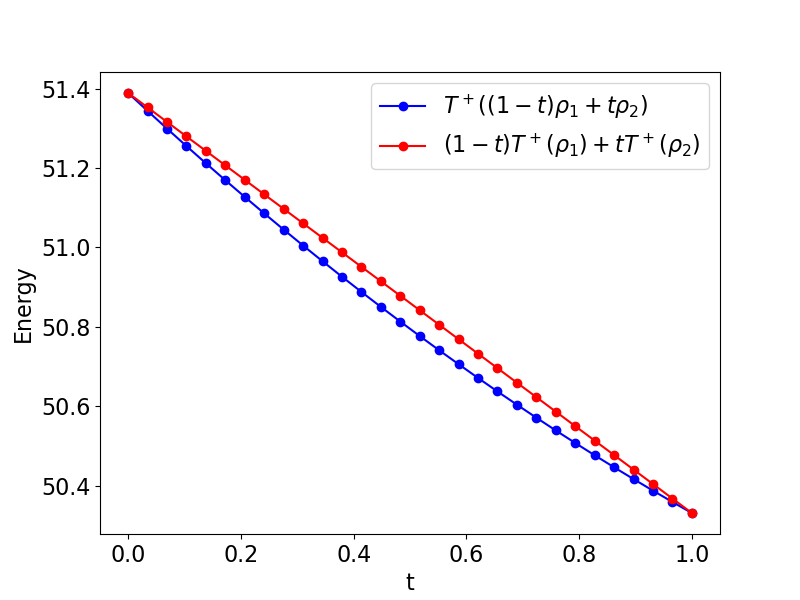}
    \end{subfigure}
    \caption{ Comparison of $T^+( (1-t)\rho_1 + t \rho_2)$ and $(1-t) T^+( \rho_1) + t T^+(\rho_2)$ when $\int_{\mathbb{R}}\rho_1 = \int_{\mathbb{R}}\rho_2 = 4  $ . }
    \label{fig:TN2}
\end{figure}

\begin{figure}[htbp]
    \centering
    \begin{subfigure}[b]{0.45\textwidth}
        \includegraphics[width=\textwidth]{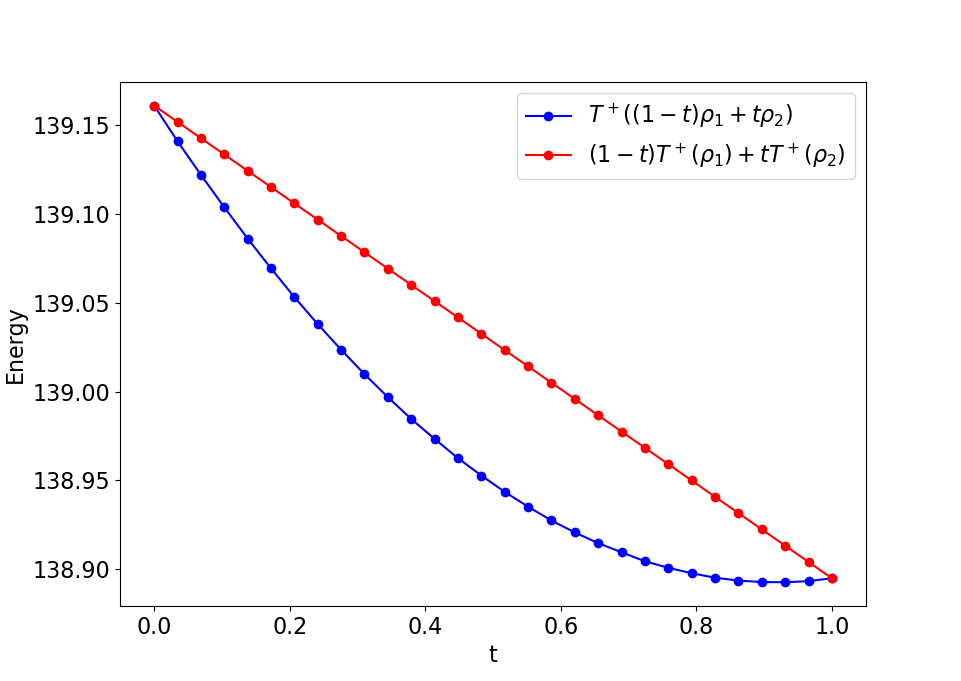}
    \end{subfigure}
    \begin{subfigure}[b]{0.45\textwidth}
        \includegraphics[width=\textwidth]{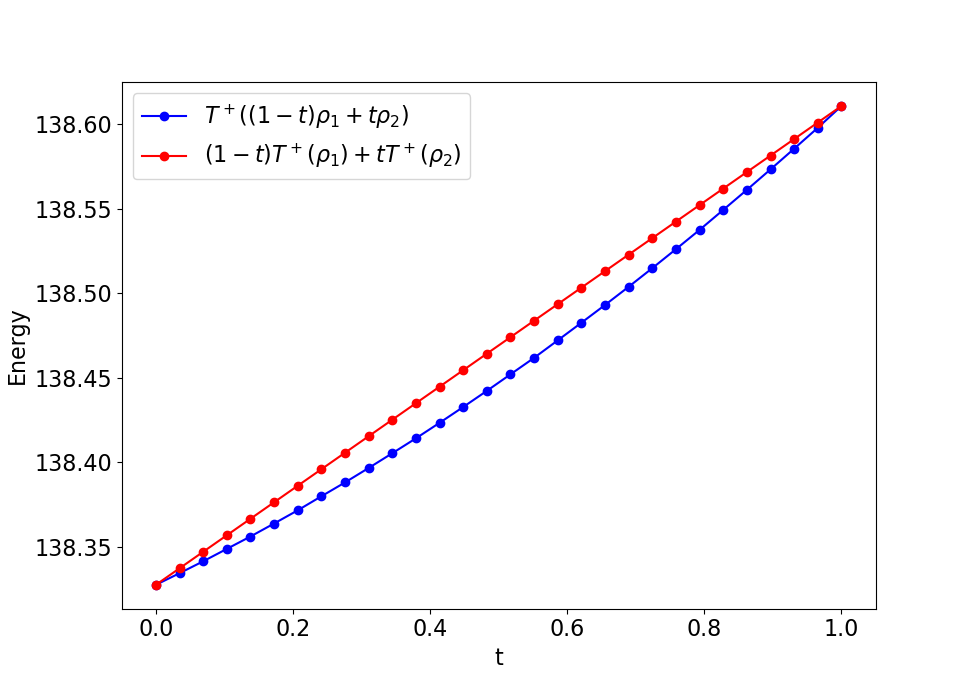}
    \end{subfigure}
    \begin{subfigure}[b]{0.45\textwidth}
        \includegraphics[width=\textwidth]{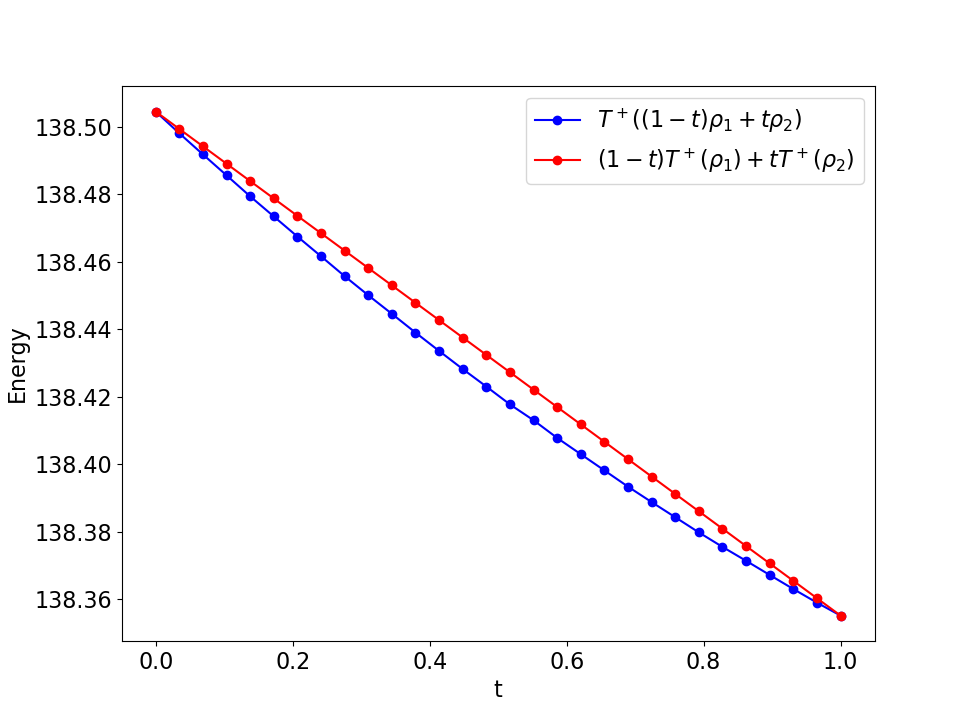}
    \end{subfigure}
    \begin{subfigure}[b]{0.45\textwidth}
        \includegraphics[width=\textwidth]{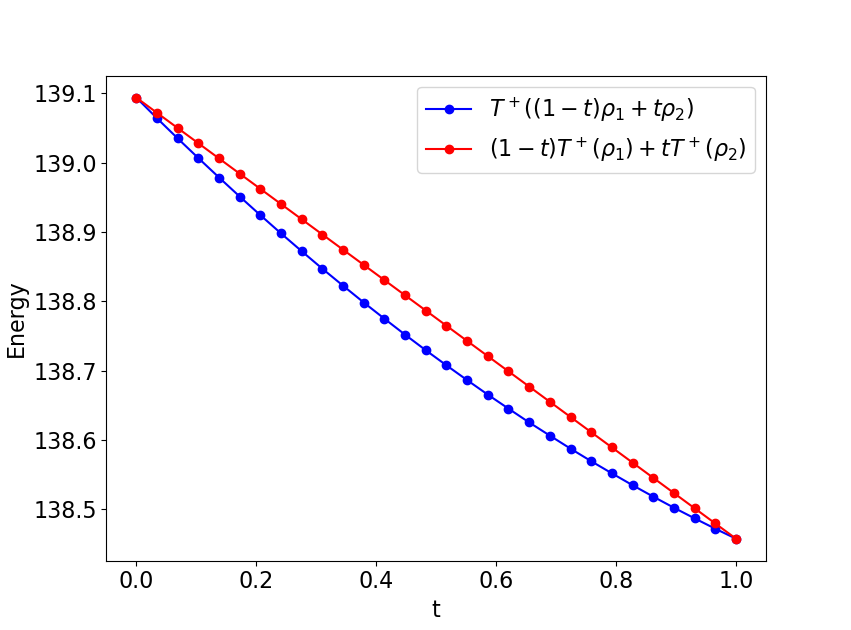}
    \end{subfigure}
    \caption{ Comparison of $T^+( (1-t)\rho_1 + t \rho_2)$ and $(1-t) T^+( \rho_1) + t T^+(\rho_2)$ when $\int_{\mathbb{R}}\rho_1 = \int_{\mathbb{R}}\rho_2 = 6$. }
    \label{fig:TN3}
\end{figure}

\begin{figure}[htbp]
    \centering
    \begin{subfigure}[b]{0.45\textwidth}
        \includegraphics[width=\textwidth]{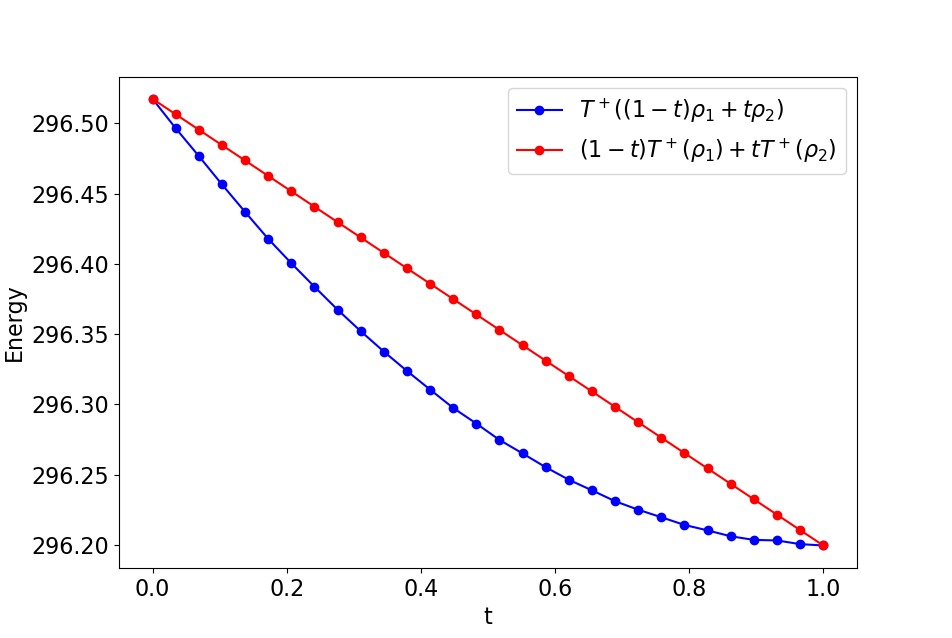}
    \end{subfigure}
    \begin{subfigure}[b]{0.45\textwidth}
        \includegraphics[width=\textwidth]{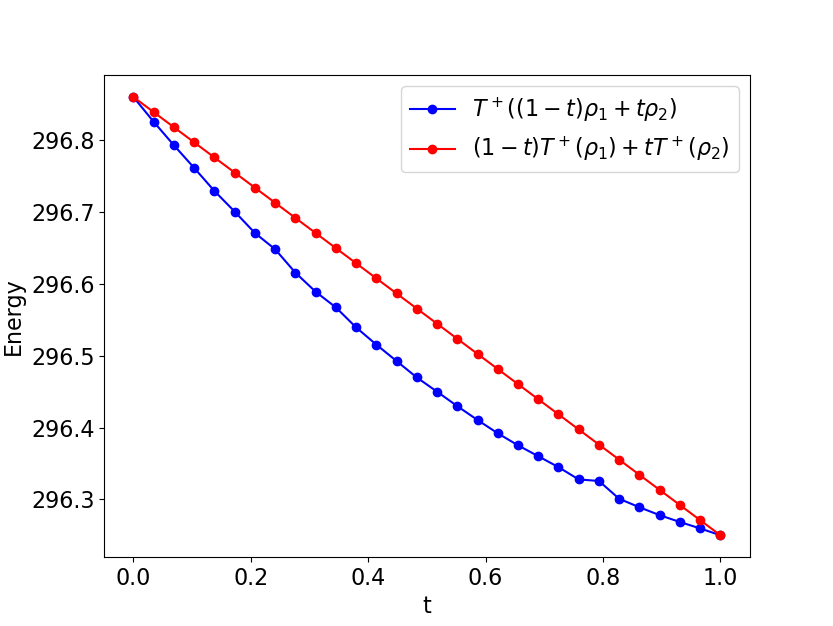}
    \end{subfigure}
    \begin{subfigure}[b]{0.45\textwidth}
        \includegraphics[width=\textwidth]{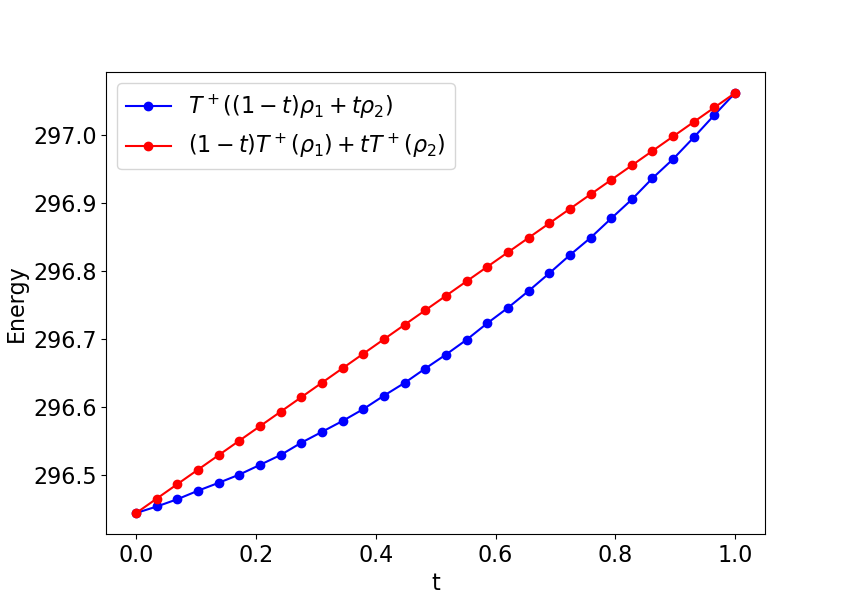}
    \end{subfigure}
    \begin{subfigure}[b]{0.45\textwidth}
        \includegraphics[width=\textwidth]{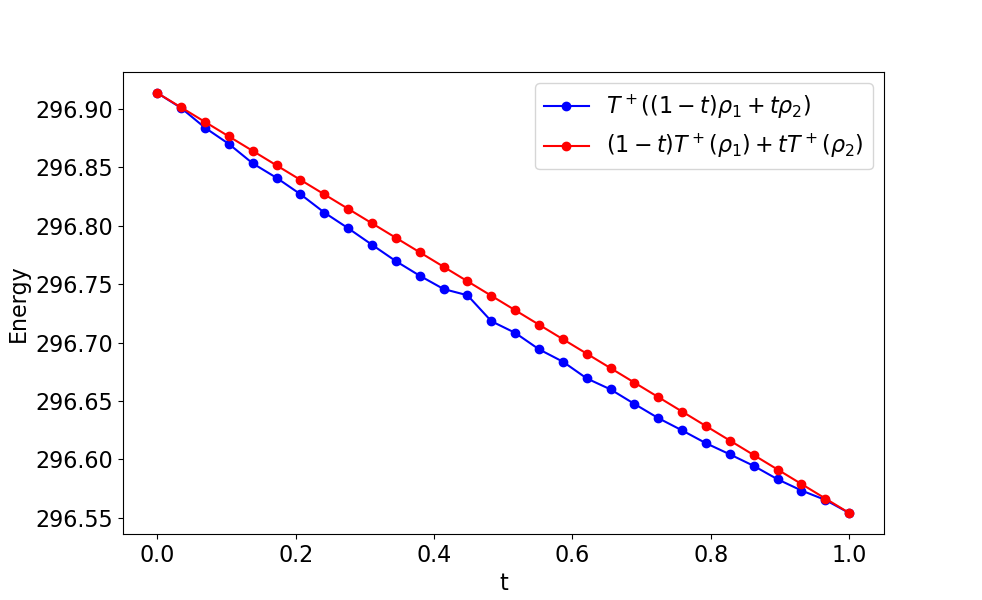}
    \end{subfigure}
    \caption{ Comparison of $T^+( (1-t)\rho_1 + t \rho_2)$ and $(1-t) T^+( \rho_1) + t T^+(\rho_2)$ when $\int_{\mathbb{R}}\rho_1 = \int_{\mathbb{R}}\rho_2 = 8$. }
    \label{fig:TN4}
\end{figure}

\begin{figure}[htbp]
    \centering
    \begin{subfigure}[b]{0.45\textwidth}
        \includegraphics[width=\textwidth]{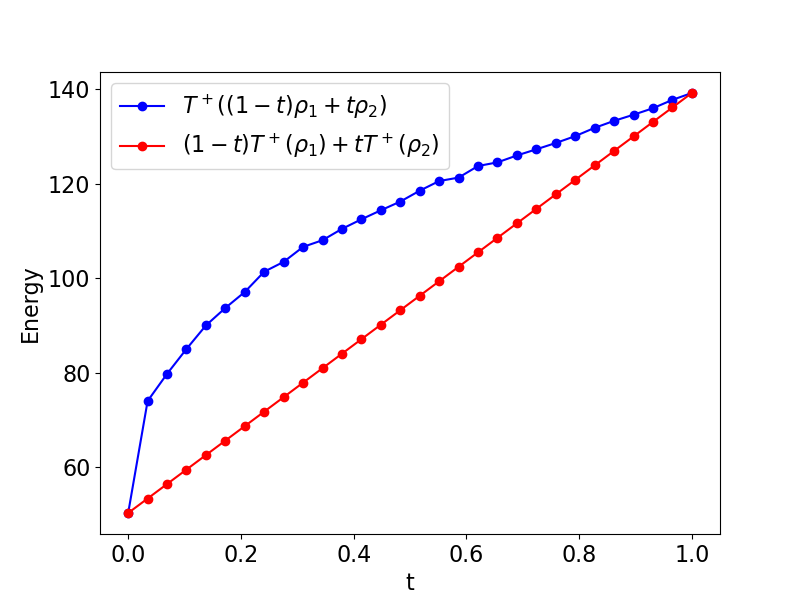}
    \end{subfigure}
    \begin{subfigure}[b]{0.45\textwidth}
        \includegraphics[width=\textwidth]{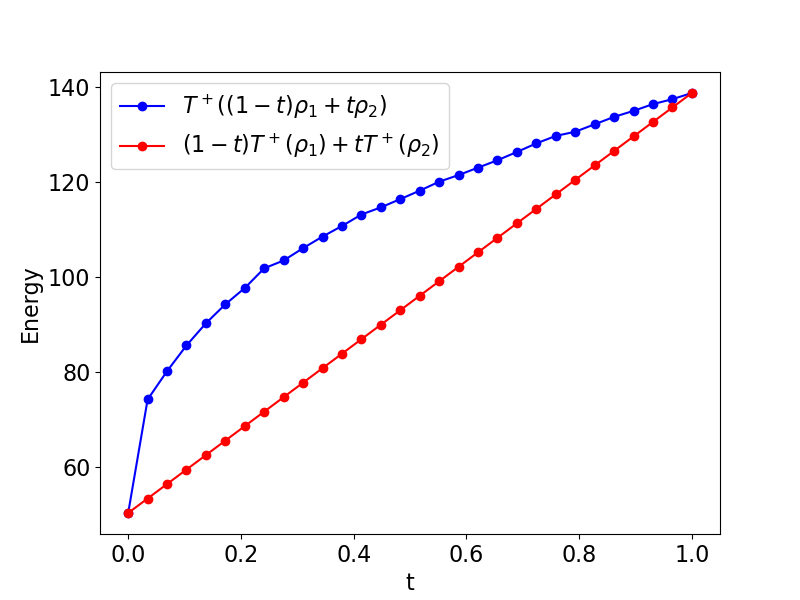}
    \end{subfigure}
    \begin{subfigure}[b]{0.45\textwidth}
        \includegraphics[width=\textwidth]{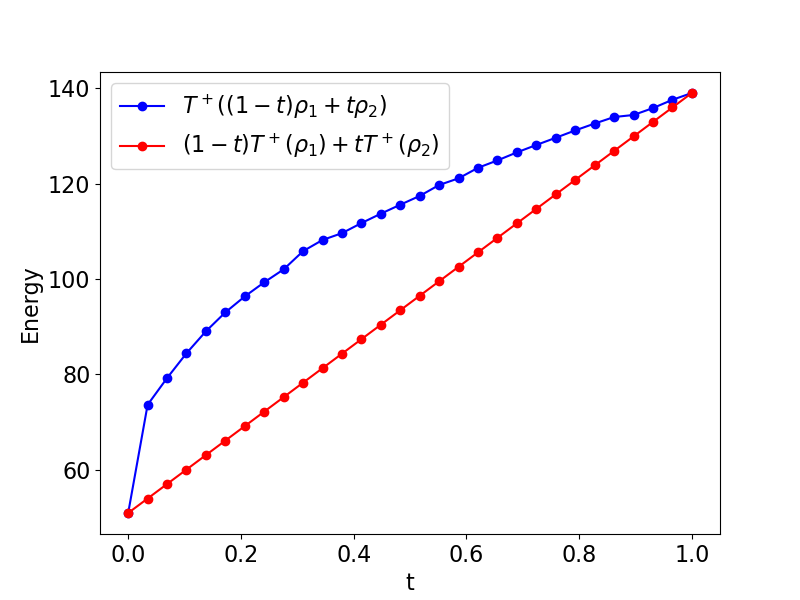}
    \end{subfigure}
    \begin{subfigure}[b]{0.45\textwidth}
        \includegraphics[width=\textwidth]{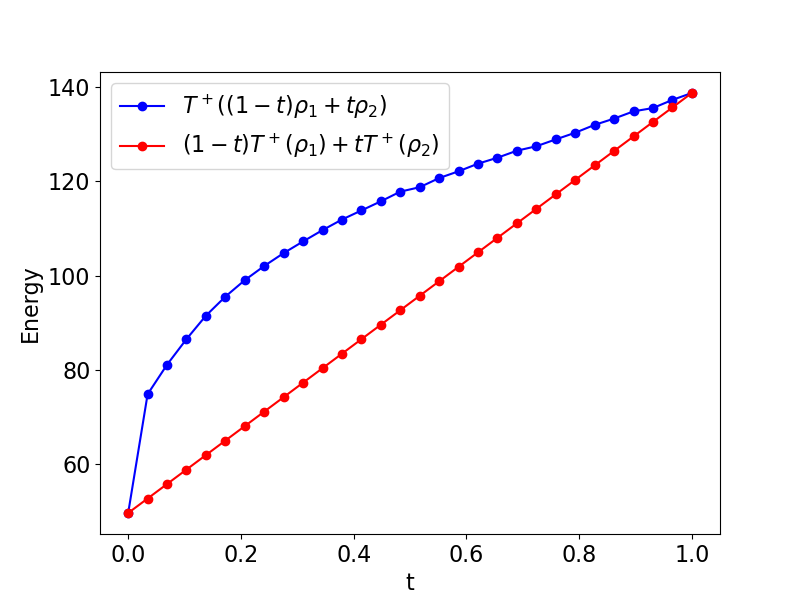}
    \end{subfigure}
    \caption{ Comparison of $T^+( (1-t)\rho_1 + t \rho_2)$ and $(1-t) T^+( \rho_1) + t T^+(\rho_2)$ when $\int_{\mathbb{R}}\rho_1 = 4$ and $\int_{\mathbb{R}}\rho_2 = 6$. }
    \label{fig:TN23}
\end{figure}

\begin{figure}[htbp]
    \centering
    \begin{subfigure}[b]{0.45\textwidth}
        \includegraphics[width=\textwidth]{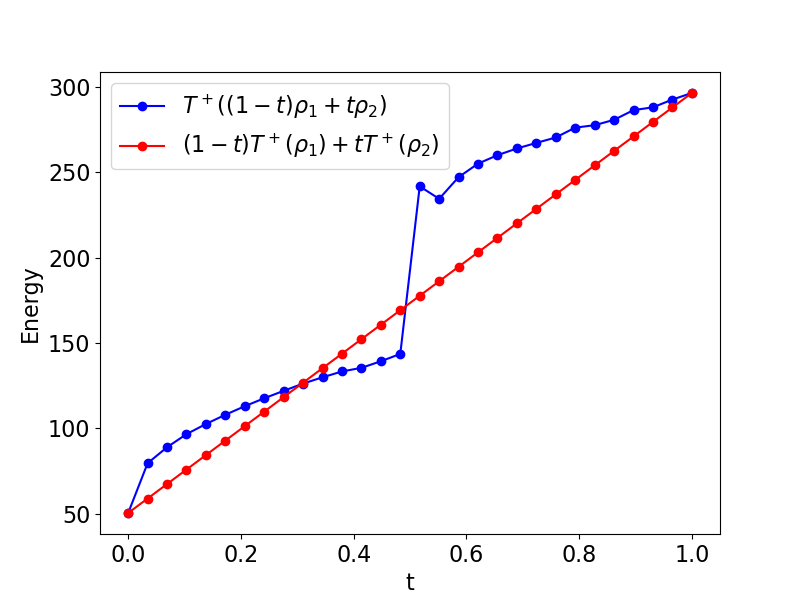}
    \end{subfigure}
    \begin{subfigure}[b]{0.45\textwidth}
        \includegraphics[width=\textwidth]{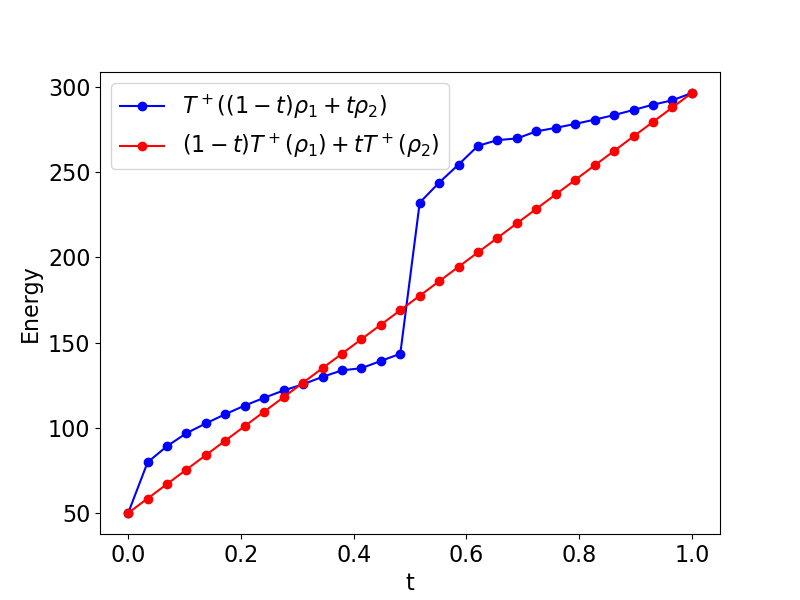}
    \end{subfigure}
    \begin{subfigure}[b]{0.45\textwidth}
        \includegraphics[width=\textwidth]{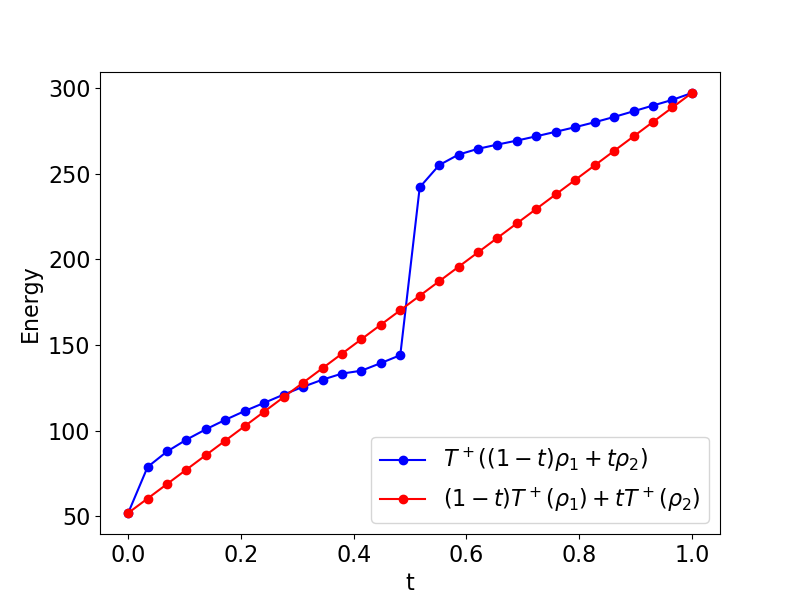}
    \end{subfigure}
    \begin{subfigure}[b]{0.45\textwidth}
        \includegraphics[width=\textwidth]{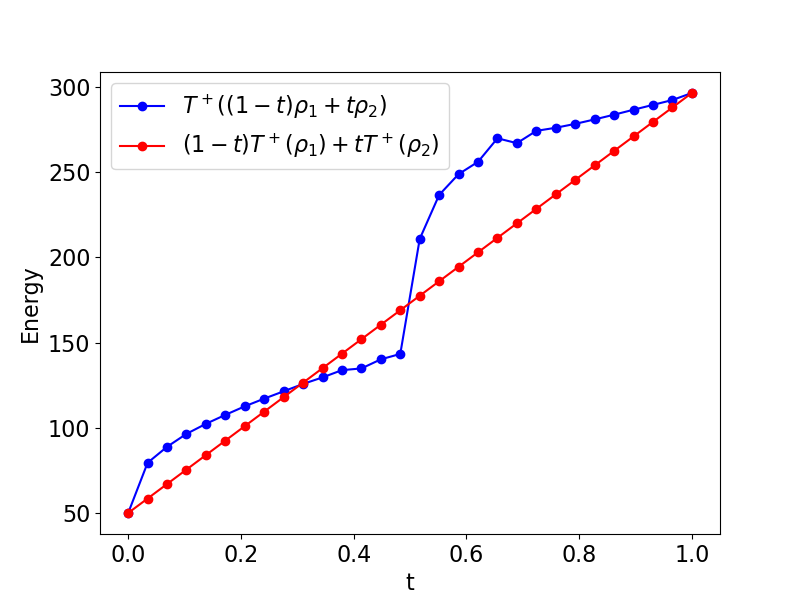}
    \end{subfigure}
    \caption{ Comparison of $T^+( (1-t)\rho_1 + t \rho_2)$ and $(1-t) T^+( \rho_1) + t T^+(\rho_2)$ when $\int_{\mathbb{R}}\rho_1 = 4$ and $\int_{\mathbb{R}}\rho_2 = 8$. }
    \label{fig:TN24}
\end{figure}

\section{Discussion and Conclusion}  
We analytically extended the restricted Kohn-Sham kinetic energy functional \( T_{RKS}(\rho) \) to $T(u)$. We then established that the proposed functional \( T^+(u) \) is an extension of $T(u)$. Specifically, we showed that \( T^+(u) = T_{RKS}(u^2) \) when \( u \) corresponds to the ground-state square root electron density of the Kohn-Sham system. 

The direct evaluation of \(T_{RKS}(\rho)\) is delicate because the constraint
\(\rho = 2\sum_i|\phi_i|^2\) must be enforced in the minimization. One natural way to enforce this constraint is through Lagrange multipliers. According to Lagrange multipliers on Banach spaces theorem (see
Proposition~\ref{prop: lag_multiplier_on_banach}), existence of a
multiplier requires that the Fr\'echet derivative of the constraint at the minimizer must be surjective. It is generally not possible to show surjectivity of the derivative for the constraint
\(\rho = 2\sum_i|\phi_i|^2\) for an arbitrary $\rho$. When a multiplier does exist, it appears in the optimality condition through the adjoint of the derivative of the constraint, 
which leads to the corresponding adjoint problem. Moreover, even when a multiplier exists, it maybe an unbounded function which could have its own numerical difficulties. Our parametrization avoids these issues by enforcing the
constraint by construction, thereby eliminating the need to introduce and
solve the adjoint problem explicitly. Our approach overcomes this issue by using a parametric representation that avoids the need to explicitly solve the adjoint problem. The primary contribution of this work is Theorem~\ref{thm:energy_derivation}, where we show that the original minimization problem for the non-interacting kinetic energy is equivalent to a minimization problem involving angle fields that define the parametric representation referenced above. 

The central limitation of the new approach presented here is the restriction to closed shell systems. The key advantage here is that the orbitals can be chosen to be real-valued, and this significantly simplifies the mathematical analysis. The extension to open shell systems requires a careful consideration of spin orbitals, and will be explored in a future work. An interesting observation in this regard is that the functional $T^+(\rho)$ introduced in this work provides a numerical value for the kinetic energy even $\int \rho$ is an odd integer. Though this value is not physically meaningful, it can be construed as an upper limit to the non-interacting kinetic energy of the corresponding open shell system. 

We solve the minimization problem for the non-interacting kinetic energy by approximating the angle fields using radial basis functions, and using a standard minimization algorithm---Sequential Least Squares Quadratic Programming (SLSQP)---with randomly generated initial conditions to solve the resulting finite dimensional optimization problem. We observed, in particular, that the initial conditions corresponding to the orbitals of a particle in a box consistently produced results with the lowest energy obtained from a set of randomly chosen initial conditions. The kinetic energy per electron computed using our algorithm were within $1$ kcal/mol of the corresponding DFT calculations, demonstrating that our variational framework provides a reliable and accurate approach for computing the non-interacting kinetic energy functional.

We further examined the convexity properties of the proposed non-interacting kinetic energy functional---an analysis that, to the best of our knowledge, is not feasible with existing numerical methods. Within the convex set $\mathcal{I}_N$ for fixed particle numbers $N=2,3,4$, the functional appears to exhibit convex behavior. In contrast, concavity and/or discontinuities are observed when tested over densities with varying values of $N$. 

The numerical demonstrations presented here are limited to one dimension, but the algorithm is readily generalizable to three dimensions---this will be pursued in a future work. The particular algorithm that we used in this work does not scale well with increasing \( N \)---this is another avenue for future improvements. Another major contribution of this work, however, is a new and reliable method to generate datasets consisting of pairs of densities and corresponding non-interacting kinetic energies without resorting to a Kohn-Sham system. Such a dataset is likely to provide a better starting point to create transferable machine learning models for the non-interacting kinetic energy functional since it provides greater control over the distribution of input densities \( \rho \) and allows for the generation of more diverse training data. Learning such a non-interacting kinetic energy, and implementing it within the context of an orbital-free density functional theory framework, will be pursued in the future.

\section*{Acknowledgments}
The authors gratefully acknowledge Vikram Gavini, Bikash Kanungo, and Sambit Das for stimulating discussions on DFT. They also thank the anonymous referees for their constructive comments and suggestions.

\appendix

\section{Discontinuity of angle-fields}
\label{app:discontinuity}
\subsection{Discontinuity during inversion}
Let $w:\mathbb{R}\to\mathbb{R}$ be defined as
\begin{equation}
    w(x) = |x|\eta(x),
\end{equation}
where $\eta\in C_c^\infty(\mathbb{R})$ is a smooth cutoff with $\eta\equiv 1$ on $[-1/2,1/2]$ and $\operatorname{supp}(\eta)\subset[-1,1]$. Then $w\in H^1(\mathbb{R})$, $w(0)=0$, and $w\geq 0$. Define
\begin{equation}
    \phi_1(x) = c\,w(x)\,\mathbf{1}_{[0,\infty)}(x), \quad 
    \phi_2(x) = c\,w(x)\,\mathbf{1}_{(-\infty,0]}(x),
\end{equation}
with the normalization constant
\begin{equation}
    c = \frac{1}{\sqrt{\int_0^\infty w(x)^2\,dx}}.
\end{equation}
By construction, $\phi_1,\phi_2\in H^1(\mathbb{R})$ and $\phi_1, \phi_2\geq 0$. Moreover,
\begin{equation}
    \int_{\mathbb{R}} \phi_1(x)^2\,dx = 1, \qquad 
    \int_{\mathbb{R}} \phi_2(x)^2\,dx = 1, \qquad 
    \int_{\mathbb{R}} \phi_1(x)\phi_2(x)\,dx = 0.
\end{equation}
Thus $\{\phi_1,\phi_2\}$ forms an orthonormal set in $L^2(\mathbb{R})$. Setting
\begin{equation}
    u(x) = \sqrt{2\phi_1(x)^2+2\phi_2(x)^2} = c\,w(x),
\end{equation}
we may formally write
\begin{equation}
    \phi_1(x) = \frac{u(x)}{\sqrt{2}}\sin(\theta(x)), \quad \phi_2(x) = \frac{u(x)}{\sqrt{2}}\cos(\theta(x)).
\end{equation}
Note that $u \in H^1(\mathbb{R})$. 
Pointwise, the angle $\theta$ is determined as
\begin{equation}
    \theta(x) = 
    \begin{cases}
        0, & x < 0, \\
        \pi/2, & x > 0,
    \end{cases}
\end{equation}
with any arbitrary value at $\theta(0)$. Hence $\theta$ has jump discontinuity at $x=0$ and is not in $H^1_{loc}(\mathbb{R})$. 

\subsection{Discontinuity due to incompatibility}
Let $\rho: [0,1] \to \mathbb{R}$ be defined as
\begin{equation}
    \rho(x) = 4(\sin^2(2\pi x) + \sin^2(3\pi x)).
\end{equation}
Define the functions
\begin{equation}
    \phi_1 = \sqrt{2} \sin(2\pi x), \quad \phi_2 = \sqrt{2} \sin(3\pi x) ,
\end{equation}
We observe that $\sqrt{\rho} \in H^1([0,1])$, and similarly, $\phi_1, \phi_2 \in H^1([0,1])$. Furthermore, $\phi_1$ and $\phi_2$ form an orthonormal set and $\phi_1^2 + \phi_2^2 = \rho/2$. Notice that
\begin{equation}
    \phi_1 = \sqrt{\frac{\rho}{2}} \sin(\theta), \quad \phi_2 = \sqrt{\frac{\rho}{2}} \cos(\theta),
\end{equation}
and hence that $\theta$ is uniquely determined using the $\text{atan2}$ function as
\begin{equation}
    \theta = \text{atan2}(\phi_1, \phi_2).
\end{equation}
A plot of $\theta(x)$ is shown in Figure~\ref{fig:theta_plot}. Observe that $\theta$ is uniquely defined but exhibits jump discontinuity. Consequently, $\theta$ is not weakly differentiable.
\begin{figure}[h]
    \centering  \includegraphics[width=0.7\textwidth]{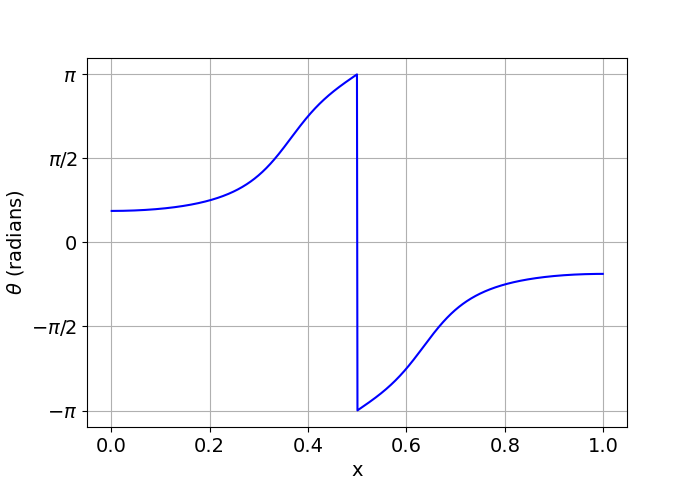}  
    \caption{Plot of $\theta(x)$ over $[0,1]$ showing discontinuity.}
  \label{fig:theta_plot}
\end{figure}

\section{Supplementary Propositions} \label{app:math_props}
\begin{proposition}
\label{prop: lag_multiplier_on_banach} 
\textit{\textup{[}Section $4.14$ in \cite{Zeilder95}\textup{]}}
    Let \( X \) and \( Y \) be real Banach spaces and $U$ be an open subset of \( X \). Let \( f: U \to \mathbb{R} \) and \( g: U \to Y \) be continuously Frechet differentiable functions. Suppose \( u^* \in U \) is a local extremum of \( f \) subject to the constraint \( g (x)= 0 \). If the Frechet derivative \( Dg(u^*): X \to Y \) is a surjective linear map, then there exists a continuous linear functional \( \lambda^* \in Y^* \), dual to the space \( Y \), such that
\[
Df(u^*) = \lambda \circ Dg(u^*),
\]
where \( Df(u^*): X \to \mathbb{R} \) is Frechet derivative of $f$ at $u^*$. 
\end{proposition}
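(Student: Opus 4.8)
The plan is to follow the classical Lyusternik--Lagrange route. Write $M = \{x \in U : g(x) = 0\}$ for the constraint set. The strategy is to (a) identify the tangent space to $M$ at $u^*$ with $\ker Dg(u^*)$, (b) use the extremality of $u^*$ to show that $Df(u^*)$ annihilates this kernel, and (c) factor $Df(u^*)$ through $Dg(u^*)$ to produce $\lambda^* \in Y^*$. A preliminary ingredient used throughout is a quantitative form of surjectivity: since $Dg(u^*)$ is bounded and onto, the open mapping theorem provides a constant $C > 0$ such that for every $y \in Y$ there exists $x \in X$ with $Dg(u^*)x = y$ and $\|x\|_X \le C\|y\|_Y$.

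For the tangent-space step I would prove that every $h \in \ker Dg(u^*)$ is realized as the velocity at $t = 0$ of a curve lying in $M$. Concretely, I would seek $x(t) = u^* + th + r(t)$ with $g(x(t)) = 0$ and $\|r(t)\| = o(t)$. To produce $r(t)$ I would run a Newton-type iteration, at each step solving the linearized equation $Dg(u^*)\,r_{n+1} = Dg(u^*)\,r_n - g(u^* + th + r_n)$ by means of the bounded selection from the quantitative surjectivity above, and then show the iterates are Cauchy. Continuous Fr\'echet differentiability of $g$ guarantees that the remainder $g(u^*+v) - g(u^*) - Dg(u^*)v$ is $o(\|v\|)$ uniformly near $u^*$; combined with the uniform bound $C$, this closes the contraction estimate and forces $\|r(t)\| = o(t)$.

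Once the curves are available, the remaining steps are short. Because $u^*$ is a local extremum of $f$ on $M$, the scalar map $t \mapsto f(x(t))$ has a local extremum at $t = 0$; differentiating by the chain rule and using $r(t) = o(t)$ yields $Df(u^*)[h] = 0$, and since $h \in \ker Dg(u^*)$ is arbitrary, $Df(u^*)$ vanishes on $\ker Dg(u^*)$. To build the multiplier, given $y \in Y$ I would pick any $x$ with $Dg(u^*)x = y$ and set $\lambda^*(y) := Df(u^*)[x]$. This is well defined because two preimages differ by an element of $\ker Dg(u^*)$, on which $Df(u^*)$ vanishes; it is linear by construction; and it is bounded by choosing the preimage with $\|x\| \le C\|y\|$, giving $|\lambda^*(y)| \le C\,\|Df(u^*)\|\,\|y\|$. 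By construction $Df(u^*) = \lambda^* \circ Dg(u^*)$.

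The main obstacle is the tangent-space step. Surjectivity of $Dg(u^*)$ is precisely what guarantees that the tangent directions to $M$ fill out all of $\ker Dg(u^*)$ rather than a proper subspace, and it is also what supplies the uniform bound $C$ driving the Newton iteration. The delicate points are keeping the iterates inside the neighborhood where the differentiability estimate is valid and ensuring that the geometric decay is uniform in the small parameter $t$; without surjectivity this construction can fail and the multiplier need not exist. This is also exactly the hypothesis whose failure motivates the reparametrization approach of the main text.
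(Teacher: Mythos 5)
The paper does not actually prove this proposition: it is stated as a known result, quoted from Section 4.14 of the cited reference (Zeidler), so there is no in-paper argument to compare yours against. Judged on its own, your proposal is the classical Lyusternik--Lagrange proof and is correct in all essentials: quantitative surjectivity from the open mapping theorem, a Newton-type correction producing curves in $M$ tangent to any $h \in \ker Dg(u^*)$, vanishing of $Df(u^*)$ on that kernel by extremality of $t \mapsto f(x(t))$ at $t=0$, and the explicit factorization $\lambda^*(y) := Df(u^*)[x]$ for any preimage $x$ of $y$, which is well defined, linear, and bounded exactly as you argue. One precision is worth fixing in the iteration: the bounded selection $R$ must be applied to the \emph{increment}, i.e.\ set $r_{n+1} = r_n + R\bigl(-g(u^*+th+r_n)\bigr)$, so that $\|r_{n+1}-r_n\| \le C\|g(u^*+th+r_n)\|$ and the geometric decay of $\|g(u^*+th+r_n)\|$ makes the sequence Cauchy. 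If instead $R$ is applied to the full right-hand side $Dg(u^*)r_n - g(u^*+th+r_n)$ to produce $r_{n+1}$ directly, one only bounds $\|r_{n+1}\|$, and since $R$ is neither linear nor continuous the successive iterates may wander within their affine fibers, so the Cauchy estimate does not close. Finally, note that your last step has a shorter abstract alternative---$Df(u^*)$ annihilates $\ker Dg(u^*)$, and surjectivity means $Dg(u^*)$ has closed range, so the closed range theorem exhibits $Df(u^*)$ as $\lambda^* \circ Dg(u^*)$---but your constructive version is more elementary and additionally yields the quantitative bound $\|\lambda^*\| \le C\,\|Df(u^*)\|$.
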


\begin{proposition}
    \label{prop: Lieb_Schrodinger_GS}\textit{\textup{[}Theorem $11.8$ in \cite{Lieb01}\textup{]}}
    Let $V \in L^1_{loc}(\mathbb{R}^d)$, $\psi_0 \in H^1(\mathbb{R}^d)$ be a minimizer of $\mathcal{E}(f) = \int_{\mathbb{R}^d} \| \nabla f\|^2 + \int_{\mathbb{R}^d}Vf^2$ over $H^1(\mathbb{R}^d)$ with constraint $\|f\|_{L^2(\mathbb{R}^d)} =1$, and $V|\psi_0|^2$ be summable. If $V_+ \in L^\infty(\mathbb{R}^d)$, then $\psi_0$ is unique up to sign change and can be chosen to be strictly positive.
\end{proposition}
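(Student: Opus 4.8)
The plan is to reduce the statement to two classical facts: that the energy $\mathcal{E}$ does not increase when a real trial function is replaced by its modulus, and that the Schr\"odinger semigroup generated by $-\Delta + V$ is positivity improving. Together these yield both the strict positivity of a suitably chosen minimizer and the simplicity of the ground-state energy, which is exactly the assertion ``unique up to sign and can be chosen strictly positive.''

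First I would record the gradient identity for real-valued $f \in H^1(\mathbb{R}^d)$. Since $\nabla |f| = \operatorname{sgn}(f)\,\nabla f$ almost everywhere, with the convention $\operatorname{sgn}(0)=0$ together with the fact that $\nabla f = 0$ a.e. on $\{f=0\}$, one has $\|\nabla|f|\|_{L^2} = \|\nabla f\|_{L^2}$, while $\int_{\mathbb{R}^d} V|f|^2 = \int_{\mathbb{R}^d} V f^2$ and $\||f|\|_{L^2} = \|f\|_{L^2}$. Hence $\mathcal{E}(|f|) = \mathcal{E}(f)$, so $|\psi_0|$ is again an admissible minimizer. It therefore suffices to study a nonnegative minimizer $\psi_0 \ge 0$, which by the constrained first-order condition satisfies the Euler--Lagrange equation $-\Delta \psi_0 + V\psi_0 = \lambda \psi_0$ in the weak sense, with $\lambda = \mathcal{E}(\psi_0) = \inf_{\|f\|_{L^2}=1}\mathcal{E}(f)$ the bottom of the spectrum; the summability of $V|\psi_0|^2$ is what keeps this energy finite throughout.

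Next I would establish strict positivity. The hypothesis $V_+ \in L^\infty(\mathbb{R}^d)$ guarantees that the quadratic form of $-\Delta + V$ is bounded below, so its Friedrichs extension $H$ generates a semigroup $e^{-tH}$. The structural fact I would invoke is that $e^{-tH}$ is positivity improving: the free heat kernel of $e^{t\Delta}$ is a strictly positive Gaussian, and the Trotter product formula (equivalently, the Feynman--Kac representation) writes $e^{-tH}$ as a limit of compositions of $e^{(t/n)\Delta}$ with the strictly positive multipliers $e^{-(t/n)V}$, so it maps any nonnegative nonzero function to a strictly positive one. Rewriting the eigenvalue relation as $\psi_0 = e^{\lambda t}\,e^{-tH}\psi_0$ and applying this to the nonnegative minimizer gives $\psi_0 > 0$ almost everywhere.

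Finally, uniqueness follows from a Perron--Frobenius argument for positivity improving semigroups whose generator attains its spectral infimum as an eigenvalue: that eigenvalue is then simple and its eigenfunction is the unique strictly positive one up to scaling (see \cite{RS78}). Concretely, if the ground eigenspace had dimension at least two, I could select a minimizer orthogonal to the strictly positive $\psi_0$; taking its modulus would produce, by the first two steps, a second strictly positive minimizer, which cannot be $L^2$-orthogonal to $\psi_0$ since the integrand of their inner product is positive---a contradiction. Hence the minimizer is unique up to sign. The main obstacle is the rigorous justification of the positivity-improving property under the stated weak hypotheses, namely $V \in L^1_{loc}$ with only $V_+ \in L^\infty$, so that the attractive part $V_-$ may be singular; this requires care in defining $H$ through its form and in controlling $V_-$ within the Trotter/Feynman--Kac limit, which is precisely where the summability assumption on $V|\psi_0|^2$ and the boundedness of $V_+$ do the essential work.
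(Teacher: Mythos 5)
This proposition is not proved in the paper at all: it appears in Appendix~B as a known result, imported verbatim as Theorem~11.8 of Lieb--Loss \cite{Lieb01}, so the honest comparison is with the proof given there. That proof is purely variational and avoids operator semigroups entirely: one notes, as you do, that $\mathcal{E}(|\psi_0|)=\mathcal{E}(\psi_0)$ so a minimizer may be taken nonnegative, derives the Euler--Lagrange equation $-\Delta\psi_0+V\psi_0=\lambda\psi_0$ in the distributional sense, obtains strict positivity from a Harnack-type lower bound for nonnegative distributional supersolutions with $L^1_{loc}$ potential (Theorem~9.10 in \cite{Lieb01}), and gets uniqueness because a second, sign-changing minimizer would produce, via its positive and negative parts, two strictly positive minimizers with disjoint supports. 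Your route---Friedrichs extension, Trotter/Feynman--Kac positivity improvement of $e^{-tH}$, Perron--Frobenius simplicity as in \cite{RS78}---is a genuinely different and classically valid path, but it is the more expensive one here: it funnels everything through the self-adjoint realization of $-\Delta+V$ and the positivity-improving property, which is exactly the point you concede is delicate when $V_-$ is only $L^1_{loc}$, whereas the Lieb--Loss argument handles these weak hypotheses directly at the level of the variational problem.

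Two concrete defects in your write-up. First, the claim that $V_+\in L^\infty(\mathbb{R}^d)$ ``guarantees that the quadratic form of $-\Delta+V$ is bounded below'' is false: boundedness of the positive part says nothing about how singular the attractive part $V_-$ may be, and with $V_-$ merely $L^1_{loc}$ the form can be unbounded below (then no minimizer exists). In the proposition, lower boundedness is implicit in the hypothesis that a minimizer exists with $\int V|\psi_0|^2$ finite; closability of the form, which your Friedrichs construction needs, is an additional unaddressed point, so your ``main obstacle'' is a real gap, not a technicality. Second, your ``concretely'' uniqueness argument is broken as written: you choose a minimizer $\psi_1$ with $\langle\psi_0,\psi_1\rangle=0$ and then derive a contradiction from $\langle\psi_0,|\psi_1|\rangle>0$---but orthogonality was assumed for $\psi_1$, not for $|\psi_1|$, so there is no contradiction yet. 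The standard repair: since $\psi_0>0$ a.e.\ and $\langle\psi_0,\psi_1\rangle=0$, the function $\psi_1$ must change sign; then $\psi_1^{\pm}=\tfrac{1}{2}\left(|\psi_1|\pm\psi_1\right)$ are both nonzero, nonnegative elements of the ground eigenspace, hence (after normalization) both strictly positive by the first part of the argument, contradicting $\psi_1^{+}\psi_1^{-}=0$ a.e. With that fix, and with the positivity-improving property actually established under the stated hypotheses (or replaced by the Harnack route of \cite{Lieb01}), your outline closes.
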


\begin{proposition}
\label{prop:H1_level_set}\textit{\textup{[}Theorem $6.9$ in \cite{Lieb01}\textup{]}}
Let $f \in H^1(\mathbb{R}^d)$ and let $B = f^{-1}(\{0\})$. Then $\nabla f = 0$ for almost every $x \in B$. 
\end{proposition}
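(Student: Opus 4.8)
The plan is to prove the equivalent assertion that $\chi_B\,\nabla f = 0$ as an element of $L^2(\mathbb{R}^d;\mathbb{R}^d)$, where $B = f^{-1}(\{0\})$, by approximating $f$ with a family of smooth ``truncations'' whose gradients are manifestly supported away from $B$, and then passing to the limit using the closedness of the weak-gradient operator. Since $f$ is real-valued, I would fix a family of odd functions $G_\epsilon \in C^1(\mathbb{R})$ with $G_\epsilon(0)=0$, $0 \le G_\epsilon' \le 1$, $G_\epsilon'(t)=0$ for $|t|\le\epsilon/2$, and $G_\epsilon'(t)=1$ for $|t|\ge\epsilon$; concretely one may take $G_\epsilon(t)=\int_0^t \chi_\epsilon(s)\,ds$ for a smooth cutoff $\chi_\epsilon$ approximating $\chi_{\{|s|>\epsilon\}}$. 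Because $G_\epsilon$ is $C^1$ with bounded derivative, the elementary chain rule for Sobolev functions (proved by mollifying $f$, with no appeal to the present lemma) gives $G_\epsilon\circ f \in H^1(\mathbb{R}^d)$ together with $\nabla(G_\epsilon\circ f) = G_\epsilon'(f)\,\nabla f$. Crucially, $G_\epsilon'(f)$ vanishes on $\{|f|\le\epsilon/2\}\supseteq B$, so each approximant already has gradient supported away from the zero set.

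Next I would take $\epsilon \downarrow 0$ and identify both limits. For the gradients, $G_\epsilon'(f)\to\chi_{\{f\neq 0\}}$ pointwise a.e. and $|G_\epsilon'(f)\nabla f|\le|\nabla f|\in L^2$, so dominated convergence yields $\nabla(G_\epsilon\circ f)\to \chi_{\{f\neq 0\}}\nabla f$ in $L^2(\mathbb{R}^d;\mathbb{R}^d)$. For the functions themselves, I would show $G_\epsilon\circ f\to f$ in $L^2(\mathbb{R}^d)$: writing $G_\epsilon(t)-t=-\int_0^t(1-\chi_\epsilon(s))\,ds$ gives $|G_\epsilon(f)-f|\le 2|f|$ everywhere, and on $\{|f|>\epsilon\}$ the difference reduces to a constant offset of size at most $\epsilon$. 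The near-zero part $\int_{\{|f|\le\epsilon\}}|G_\epsilon(f)-f|^2\le 4\int_{\{|f|\le\epsilon\}}f^2\to 0$ by dominated convergence, while the far part is controlled by the elementary fact that $\epsilon^2\,\mu(\{|f|>\epsilon\})\to 0$ for $f\in L^2$ (split $\{|f|>1\}$, where the measure is finite, from $\{\epsilon<|f|\le 1\}$, where $\epsilon^2\chi$ is dominated by $f^2$).

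Finally, since $G_\epsilon\circ f\to f$ in $L^2$ and $\nabla(G_\epsilon\circ f)$ converges in $L^2$, the closedness of the weak-gradient operator (pass to the limit in $\int (G_\epsilon\circ f)\,\partial_i\varphi = -\int \partial_i(G_\epsilon\circ f)\,\varphi$ against test functions $\varphi$) forces $\nabla f = \chi_{\{f\neq 0\}}\nabla f$ in $L^2$. Equivalently $\chi_B\,\nabla f = 0$, which is exactly the assertion that $\nabla f = 0$ a.e. on $B$.

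I expect the main obstacle to be the chain-rule step, and specifically avoiding the well-known circularity: the convenient chain rule for piecewise-linear truncations such as $(\operatorname{sgn} f)(|f|-\epsilon)_+$ is usually justified by invoking precisely this level-set lemma at the corners $\pm\epsilon$. Working with genuinely $C^1$ truncations $G_\epsilon$ sidesteps this, at the cost of the slightly more careful $L^2$-convergence bookkeeping in the second paragraph (the constant offset $\int_0^\epsilon(1-\chi_\epsilon)$ and the tail estimate). Everything else is routine dominated convergence together with the standard closedness of weak differentiation.
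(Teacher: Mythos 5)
Your proof is correct, and since the paper itself gives no proof of this proposition---it is quoted verbatim from Lieb--Loss, \emph{Analysis}---the relevant comparison is with that reference, whose argument is essentially the one you give: $C^1$ truncations $G_\epsilon$ with $G_\epsilon(0)=0$ and $G_\epsilon'$ vanishing near $0$, the Sobolev chain rule justified by mollification (so that the usual circularity with piecewise-linear truncations, which you rightly flag, is avoided), and passage to the limit via dominated convergence and the closedness of the weak gradient. One minor simplification: since $|G_\epsilon(f)-f|\le\min(|f|,\epsilon)$ pointwise, the convergence $G_\epsilon\circ f\to f$ in $L^2(\mathbb{R}^d)$ follows from dominated convergence alone, making the constant-offset and tail estimate $\epsilon^2\,\mu(\{|f|>\epsilon\})\to 0$ in your second paragraph unnecessary.
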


\section{Gradients for the RBF approximation} \label{app:FCgrad}
The derivatives required for a gradient-based optimization algorithm are provided below for quick reference:
\[
\begin{split}
    \frac{\partial F}{\partial t_{ij}} &= \int_0^1 h \sigma_j - (\rho \cos^2{\theta_1}\cos{\theta_2}^2 \dots \cos{\theta_{i-1}^2} \theta_i^{'})\sigma_j^{'}, \\
    h &= \rho \cos^2{\theta_1} \dots \cos{\theta_{i-1}^2} \cos{\theta_i} \sin{\theta_i}\\
    &\Big((\theta_{i+1}^{'})^2 + \cos^2{\theta_{i+1}} (\theta_{i+2}^{'})^2 + \ldots + \cos^2{\theta_{i+1}} \dots \cos^2{\theta_{N-2}} (\theta_{N-1}^{'})^2 \Big), \\
    \frac{\partial C_{kl}}{\partial t_{ij}} &= \frac{\partial D_{kl}}{\partial t_{ij}} + \frac{\partial D_{lk}}{\partial t_{ij}}, \\
    \frac{\partial D_{kl}}{\partial t_{ij}} &= \int_0^1 g_{ik}(\Theta)\sigma_j \varphi^\Theta_l.
\end{split}
\]
In the final equation shown above, it is important to note that:
\[
\frac{\partial \varphi^\Theta_k }{\partial t_{ij}} = g_{ik}(\Theta)\sigma_j,
\]
where the following conditions hold, for \(1 \leq k \leq N-1\):
\begin{itemize}
    \item When \(k < i\), \(g_{ik}(\Theta) = 0\), 
    \item When \(k = i\), we have
\[
g_{ik}(\Theta) = \sqrt{\frac{\rho}{2}} 
\prod_{m=1}^{k} \cos(\theta_m),
\]
\item When \(k > i\), we have
\[
g_{ik}(\Theta) = -\sqrt{\frac{\rho}{2}}\sin(\theta_i) \sin(\theta_k)\prod_{m=1}^{i-1}\cos(\theta_m)\prod_{p=i+1}^{k-1}\cos(\theta_p).
\] 
\end{itemize}
Finally, when \(k = N\), we have
\[
g_{iN}(\Theta) = -\sqrt{\frac{\rho}{2}} \sin(\theta_i) \prod_{m=1}^{i-1}\cos(\theta_{m}) \prod_{p=i+1}^{N-1}\cos(\theta_{p}).
\]

\bibliographystyle{amsalpha}
\bibliography{tks_ref}

\end{document}